\renewcommand{\p@subsection}{}
\renewcommand{\p@subsubsection}{}
\newcommand*{\Tr}{\operatorname{Tr}}
\newcommand{\R}[0]{\mathbb{R}}
\newcommand{\F}[0]{\mathcal{F}}
\newcommand{\X}[0]{\mathcal{X}}
\newcommand{\Y}[0]{\mathcal{Y}}
\newcommand{\A}[0]{\mathcal{A}}
\newcommand{\teps}[0]{\Tilde{\varepsilon}}
\newcommand{\Fclass}[0]{\mathcal{F}_{\Omega}^B}
\newcommand{\Fclasstil}[0]{\tilde{\mathcal{F}}_{\Omega}^B}
\newcommand{\Fclasstrignew}[0]{\mathcal{H}_\Omega^{\tilde{B}}}
\newcommand{\empRad}[0]{\hat{\mathcal{R}}}
\providecommand{\myvec}[1]{\ensuremath{\boldsymbol{#1}}}
\providecommand{\aa}{\ensuremath{\myvec{a}}}
\providecommand{\cc}{\ensuremath{\myvec{c}}}
\providecommand{\ee}{\ensuremath{\myvec{e}}}
\providecommand{\ww}{\ensuremath{\myvec{w}}}
\providecommand{\xx}{\ensuremath{\myvec{x}}}
\providecommand{\zz}{\ensuremath{\myvec{z}}}
\providecommand{\aalpha}{\ensuremath{\myvec{\alpha}}}
\providecommand{\ttheta}{\ensuremath{\myvec{\theta}}}
\providecommand{\xxi}{\ensuremath{\myvec{\xi}}}
\providecommand{\ssigma}{\ensuremath{\myvec{\sigma}}}
\providecommand{\oomega}{\ensuremath{\myvec{\omega}}}
\providecommand{\calB}{\ensuremath{\mathcal{B}}}
\providecommand{\calC}{\ensuremath{\mathcal{C}}}
\providecommand{\calD}{\ensuremath{\mathcal{D}}}
\providecommand{\calF}{\ensuremath{\mathcal{F}}}
\providecommand{\calG}{\ensuremath{\mathcal{G}}}
\providecommand{\calH}{\ensuremath{\mathcal{H}}}
\providecommand{\calM}{\ensuremath{\mathcal{M}}}
\providecommand{\calN}{\ensuremath{\mathcal{N}}}
\providecommand{\calO}{\ensuremath{\mathcal{O}}}
\providecommand{\calS}{\ensuremath{\mathcal{S}}}
\providecommand{\calX}{\ensuremath{\mathcal{X}}}
\providecommand{\calY}{\ensuremath{\mathcal{Y}}}
\providecommand{\calZ}{\ensuremath{\mathcal{Z}}}
\providecommand{\bbE}{\ensuremath{\mathbb{E}}}
\providecommand{\bbI}{\ensuremath{\mathbb{I}}}
\providecommand{\bbR}{\ensuremath{\mathbb{R}}}
\newtheorem{theorem}{Theorem}
\newtheorem{lemma}[theorem]{Lemma}
\newtheorem{corollary}[theorem]{Corollary}
\newtheorem{conjecture}[theorem]{Conjecture}
\theoremstyle{definition}
\newtheorem{definition}[theorem]{Definition}
\newtheorem{remark}[theorem]{Remark}
\newcommand{\norm}[1]{\left\lVert#1\right\rVert}
\renewcommand{\aa}{\ensuremath{\myvec{a}}}
\newcommand{\stkout}[1]{\ifmmode\text{\sout{\ensuremath{#1}}}\else\sout{#1}\fi}
\newif\ifverbose
\begin{document}

\author{Matthias~C.~Caro}
\affiliation{Department of Mathematics, Technical University of Munich, 85748 Garching, Germany}
\affiliation{Munich Center for Quantum Science and Technology (MCQST), 80799 Munich, Germany}

\author{Elies~Gil-Fuster}
\affiliation{Dahlem Center for Complex Quantum Systems, Freie Universit\"{a}t Berlin, 14195 Berlin, Germany}
\affiliation{Fraunhofer Heinrich Hertz Institute, 10587 Berlin, Germany}

\author{Johannes~Jakob~Meyer}
\affiliation{Dahlem Center for Complex Quantum Systems, Freie Universit\"{a}t Berlin, 14195 Berlin, Germany}

\author{Jens~Eisert}
\affiliation{Dahlem Center for Complex Quantum Systems, Freie Universit\"{a}t Berlin, 14195 Berlin, Germany}
\affiliation{Fraunhofer Heinrich Hertz Institute, 10587 Berlin, Germany}
\affiliation{Helmholtz-Zentrum Berlin f{\"u}r Materialien und Energie, 14109 Berlin, Germany}

\author{Ryan~Sweke}
\affiliation{Dahlem Center for Complex Quantum Systems, Freie Universit\"{a}t Berlin, 14195 Berlin, Germany}

\date{2021-10-28}
\title{Encoding-dependent generalization bounds
for \newline parametrized quantum circuits}

\begin{abstract}
A large body of recent work has begun to explore the potential of parametrized quantum circuits (PQCs) as machine learning models, within the framework of hybrid quantum-classical optimization. In particular, theoretical guarantees on the out-of-sample performance of such models, in terms of generalization bounds, have emerged. However, none of these generalization bounds depend explicitly on how the classical input data is encoded into the PQC. We derive generalization bounds for PQC-based models that depend explicitly on the strategy used for data-encoding. These imply bounds on the performance of trained PQC-based models on unseen data. Moreover, our results facilitate the selection of optimal data-encoding strategies via structural risk minimization, a mathematically rigorous framework for model selection. We obtain our generalization bounds by bounding the complexity of PQC-based models as measured by the Rademacher complexity and the metric entropy, two complexity measures from statistical learning theory. To achieve this, we rely on a representation of PQC-based models via trigonometric functions. Our generalization bounds emphasize the importance of well-considered data-encoding strategies for PQC-based models.
\end{abstract}

\maketitle

\section{Introduction}\label{s:intro}

Recent years have witnessed a surge of interest in the question of whether and how quantum computers can meaningfully address computational problems in machine learning~\cite{dunjko2018machine,biamonte2017quantum}.
This development has been largely driven by two factors. On the one hand, there is evidence that some quantum machine learning algorithms may lead to an increased performance over classical algorithms for the analysis of classical data with respect to important figures of merit~\cite{DeWolf,Wiebe,Seth,PACLearning,liu2020rigorous}. On the other hand, the increasing availability of quantum computational devices provides significant stimulus. While these ``noisy intermediate-scale quantum" (NISQ) devices are still a far cry from full-scale fault-tolerant quantum computers, there exists growing evidence that they may be able to out-perform classical computers on some highly-tailored tasks~\cite{GoogleSupremacy}.
Given the inherent limitations of NISQ devices, most current approaches to near-term quantum-enhanced machine learning fall under the umbrella of hybrid quantum-classical algorithms~\cite{bharti2021noisy}. Of particular prominence are variational quantum algorithms in which a \textit{parametrized quantum circuit} (PQC) is used to define a machine learning model which is then updated via a classical optimizer~\cite{McClean_2016,cerezo2020variational,Benedetti_2019}.

There is a wealth of architectural choices for PQC-based machine learning models. These include the width and depth of the quantum circuit, the precise layout and structure of trainable gates, as well as the mechanism via which classical data is encoded into the quantum circuit.
The flexibility in design choices for PQCs is often only perceived strongly in terms of the structure and layout of the trainable gates~\cite{Sim2019, hubregtsen2021evaluation}.
However, when using a PQC to define a machine learning model for \textit{classical data}, the data-encoding strategy becomes a necessary architectural design choice, which has received comparably little attention. Despite this, it has recently been shown that the data-encoding strategy is directly related to the expressive power of PQC-based models~\cite{gil_vidal2020input,schuld2021kernelmethods,PerezSalinas2020}.
In this work, we further the study of data-encoding strategies for PQC-based supervised learning models by investigating the effect of data-encoding strategies on \textit{generalization} performance. 

More specifically, we consider the following fundamental question: Given a PQC-based model which has been trained on a specific data set, can we place any guarantees on its expected \textit{out-of-sample} performance, i.e., its expected accuracy on new data, drawn from the same distribution as the training set? This question is motivated by the key insight that one should \textit{not} choose the model or architecture which performs best on the available training data, but rather the model for which one expects the best out-of-sample performance. Typically, one refers to the difference between the accuracy of a model on a given training set and its expected out-of-sample accuracy as the \textit{generalization gap}. We call a (probabilistic) upper bound on this generalization gap a \textit{generalization bound}. Historically, techniques for both proving generalization bounds and for using generalization bounds for principled model selection have been developed under the umbrella of \textit{statistical learning theory}~\cite{bishop2006pattern,scholkopf2002learning,MohriRostamizadehTalwalkar18}.

We start by presenting a selection of central notions in statistical learning theory. Of particular interest is the relation between generalization bounds and \textit{complexity measures} of different types.
Indeed, due to a large body of existing literature,
bounding the generalization gap of a learning model typically reduces to bounding some quantifiable property of the hypothesis class used for learning.
There are many examples of such complexity measures (also known as \emph{capacity metrics} or just \emph{expressivity measures}), and based on their specifics they are used for different learning models, either quantum or not.
In this work, we employ generalization bounds based on the Rademacher complexity and the metric entropy. However, we want to mention that there are also other important approaches to generalization not taken here, such as stability~\cite{bousquet2002stability}, compression~\cite{littlestone1986relating}, or the PAC-Bayesian framework~\cite{mcallester1999some}.

Given the fundamental role of generalization bounds, there has recently been a strong and steady stream of works contributing to the derivation of generalization bounds for PQC-based models~\cite{CaroDatta2020, abbas2020power, Bu.2021a, Bu.2021b, Bu.2021c, du2021efficient, huang2021power, banchi2021generalization, Vedran2021}. However, as discussed in detail in Section~\ref{s:prior_work}, these prior works all differ from our results in a variety of ways. Firstly, they considered only ``encoding-first" PQC architectures, in which the PQC-based models are assumed to consist of an initial data-encoding block, mapping a classical input to a data-dependent quantum state, followed by a circuit consisting only of fixed and trainable gates. In contrast, we consider PQC-based models incorporating \textit{data re-uploading}~\cite{PerezSalinas2020}, in which trainable circuit blocks are interleaved with data-encoding circuit blocks. This is particularly relevant given the results of Refs.~\cite{SchuldSwekeMeyer2021,gil_vidal2020input}, which have illuminated the significant effects of data re-uploading on the expressive power of PQC-based models.

Additionally, our work is the first to provide a generalization bound from which it is immediately clear how altering the data-encoding strategy influences the generalization performance of the model. This is possible because our bound depends \emph{explicitly} on architectural hyper-parameters associated with the data-encoding strategy. This sets our results apart from prior art where the data-encoding figured only \emph{implicitly}, if at all. We discuss this difference between implicitly and explicitly encoding-dependent generalization bounds more concretely in Section~\ref{s:prior_work}.

In order to obtain our generalization bounds, we rely strongly on a representation of PQC-based models via generalized trigonometric polynomials (GTPs), which has been 
previously derived in Refs.~\cite{SchuldSwekeMeyer2021,gil_vidal2020input}. In particular, we exploit the fact that the data-encoding strategy of the PQC-based model directly determines the frequency spectrum of the corresponding GTPs. As such, the number of accessible frequencies in the GTP representation provides a natural measure of the complexity of a particular data-encoding strategy. Given this, we first derive generalization bounds for GTPs, which exhibit a dependence on the square root of the number of accessible frequencies. We then proceed to determine, for different data-encoding strategies, upper bounds on the number of accessible frequencies in the GTP representation. We use these results to identify a variety of natural data-encoding strategies for which the number of accessible frequencies, and therefore the associated generalization bounds, scale polynomially with the number of data-encoding gates. While one \textit{cannot} use generalization bounds alone to recommend an optimal data-encoding strategy, we discuss how these generalization bounds can be combined with empirical risk estimates, via \textit{structural risk minimization}, to facilitate the selection of an optimal data-encoding strategy for a given problem.

\subsection{Structure of this work}

\begin{figure}
    \centering
    \includegraphics[width=\linewidth]{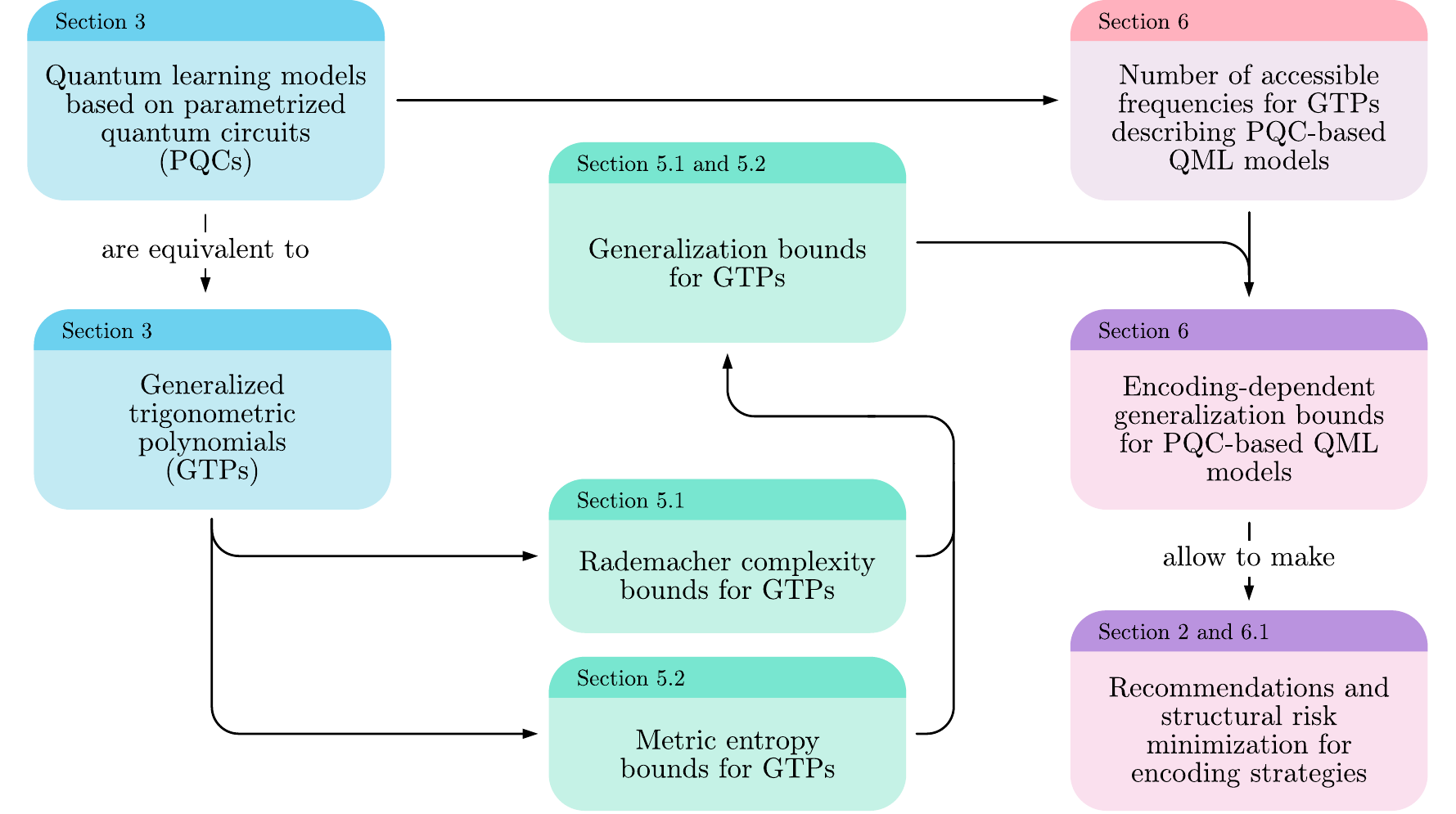}
    \caption{A flowchart of the argument presented in this work.}
    \label{f:flowchart}
\end{figure}

This work is structured as follows: Section~\ref{s:motivation} gives a pedagogical introduction to statistical learning theory, explains the importance of generalization bounds, and discusses the structural risk minimization principle. After establishing these concepts, we formulate the main questions addressed in this work. In Section~\ref{s:pqc_models}, we begin by introducing the PQC-based learning models used in this work. We then present a detailed discussion of the approach of Ref.~\cite{SchuldSwekeMeyer2021}, which demonstrates how the functions implemented by a PQC-based model can be represented by generalized trigonometric polynomials. In particular, we emphasize how the data encoding strategy of the PQC-based model translates to the accessible
frequencies of the generalized trigonometric polynomials. Section~\ref{s:prior_work} then provides a detailed review of prior work on generalization in quantum machine learning. In Section~\ref{s:results}, we establish generalization bounds for classes of generalized trigonometric polynomials in terms of the number of accessible frequencies. We present one approach via the Rademacher complexity (Section~\ref{sbs:gen-bounds-trig-poly-rademacher}) and another via covering numbers (Section~\ref{sbs:gen-bounds-trig-poly-covering}).
Section~\ref{s:gen-bounds-pqc} then expands upon Section~\ref{s:pqc_models} by deriving upper bounds on the number of accessible frequencies, in the generalized trigonometric polynomial representation of the PQC-based models associated with different data-encoding strategies. This analysis allows us to use the results from Section~\ref{s:results} to state explicitly encoding-dependent generalization bounds for PQC-based models, and to compare different encoding strategies from a generalization perspective. We discuss the implications of our results in Section~\ref{s:discussion}. In particular, we emphasize how our results are complementary to many prior works, but also describe how the different approaches can be combined. Additionally, we sketch some directions for future research. Section~\ref{s:conclusion} contains a short summary of our work. The logical flow of this manuscript is visualized in Figure~\ref{f:flowchart}.

\section{Motivation: Generalization bounds, sample complexities and model selection}\label{s:motivation}

To motivate the content of this work and to define the setting, we start with a brief and select introduction to the framework of \emph{statistical learning theory}. Interested readers are referred to Refs.~\cite{MohriRostamizadehTalwalkar18} and~\cite{shalev-shwartz_ben-david_2014} for a more detailed presentation. Within this framework, any supervised learning problem is defined by a \emph{domain} $\mathcal{X}$, a \emph{co-domain} $\mathcal{Y}$, a \emph{probability distribution} $P$ over $\mathcal{X}\times\mathcal{Y}$ and a \emph{loss function} $\ell:\mathcal{Y}\times\Y \rightarrow \mathbb{R}$. We assume that $\mathcal{X},\mathcal{Y}$ and $\ell$ are known, while $P$ is unknown. We will denote the set of all functions from $\X$ to $\Y$ as $\Y^\X$. To gain intuition, it is useful to think of the situation in which there exists a deterministic rule for assigning predictions to domain elements. We can model this in the framework outlined above with an unknown target function $f\in\Y^\X$, as well as some unknown probability distribution $P_\mathcal{X}$ over $\mathcal{X}$, such that samples from $P$ are obtained by first drawing a domain element $\xx\in\mathcal{X}$ from $P_\X$, and then outputting the tuple $(\xx,f(\xx))$, i.e.
\begin{align}
    P(\xx,y) = \begin{cases}
    P_{\X}(\xx)\quad&\text{ if } y = f(\xx),\\
    0&\text{ if } y\neq f(\xx).
    \end{cases}
\end{align}
In general, however, it may be the case that there exists $y_1\neq y_2$ for which both $P(\xx,y_1)>0$ and $P(\xx,y_2)>0$, i.e., that the underlying process for labeling data points is not deterministic. 

Additionally, we are given a training data set
\begin{equation}
    S = \{(\xx_i,y_i)\sim P\,|\,i\in\{1,\ldots,m\}\}
\end{equation}
of $m$ tuples drawn independently from (the unknown distribution) $P$, and our goal is to design a \emph{learning algorithm} $\mathcal{A}$ which, given $S$ as input, outputs a hypothesis $h \in \Y^{\X}$ that achieves a sufficiently small \textit{risk}
\begin{equation}\label{e:expected_risk}
    R(h) = \int_{\X\times\Y}\ell(y,h(\xx))~\mathrm{d}P(\xx,y).
\end{equation}
Informally, we often refer to the risk $R(h)$ as characterizing the \textit{out-of-sample} performance of the hypothesis $h$, as it is this quantity which tells us how well we can expect the hypothesis $h$ to perform on (possibly previously unseen) future data drawn from $P$. It is critical to note, however, that as the underlying probability distribution $P$ is unknown, given a hypothesis $h\in \Y^\X$, one \textit{cannot} directly evaluate $R(h)$. In light of this, a natural alternative is to evaluate the \textit{empirical risk} of $h$ with respect to $S$, which is defined as the average loss over the training samples
\begin{equation}\label{e:empirical_risk}
    \hat{R}_S(h) = \frac{1}{|S|}\sum_{(\xx_i,y_i)\in S} \ell(y_i,h(\xx_i)).
\end{equation}
In contrast to the risk $R(h)$, the empirical risk $\hat{R}(h)$ characterizes the \textit{in-sample} performance of $h$ with respect to the data set $S$, which has been sampled from $P$. 

Naively, one might hope to be able to construct learning algorithms which could in principle output \textit{any} $h\in\Y^\X$. However, the ``no-free-lunch" theorem rules out the possibility of meaningful learning in this case~\cite{Wolf.2020}, and therefore we typically consider learning algorithms whose range is some subset $\mathcal{F} \subseteq \Y^\X$. We then refer to $\mathcal{F}$ as the \textit{hypothesis class} associated with the learning algorithm which is, by assumption, also known to the learning algorithm. 
To gain some intuition, one could think of $\mathcal{F}$ as the set of all functions realizable by neural networks of some fixed width and depth, or, as we describe in Section~\ref{s:pqc_models}, as the set of all functions realizable by a parametrized quantum circuit model with some fixed architecture. With respect to this setting, the following natural question arises: Suppose we have a learning algorithm $\mathcal{A}$ with hypothesis class~$\mathcal{F}$, which has been run on a randomly drawn data set of $m$ samples $S\sim P^m$ and outputs some hypothesis $h\in\mathcal{F}$, as well as some ``training log" which we denote by $\mathrm{hist}(\mathcal{A},S)$\footnote{Such a training log could for example record the value of the empirical risk, or properties of the trial hypotheses (such as weight matrices for neural networks), at each stage of an iterative optimization procedure.}. Given the achieved empirical risk \smash{$\hat{R}_S(h)$}, can we put an upper bound on the true risk $R(h)$, which holds with high probability over the randomly drawn data set $S$? More specifically, can we make a statement of the form: For all $\delta \in (0,1)$, with probability $1-\delta$ over $S\sim P^m$, for all $h\in\F$  we have that 
\begin{equation}\label{e:gen_bound}
    R(h) \leq \hat{R}_S(h) + g(\mathcal{F},h,m,S,\mathcal{A},\mathrm{hist}(\A,S),\delta).
\end{equation}
We refer to such a statement as a \textit{generalization bound}, and note that the function $g$ appearing in Eq.~\eqref{e:gen_bound} provides a (probabilistic) upper bound on the quantity \smash{$R(h) - \hat{R}_S(h)$}, which we call \textit{generalization gap} (of $h$ with respect to~$S$). 
Such bounds are desirable because they allow us to leverage the information we have access to -- i.e., the empirical risk, and properties of the learning algorithm, data set and optimization procedure -- to upper bound $R(h)$, which is the quantity we do not have access to, but are ultimately interested in.
In general, as indicated explicitly in Eq.~\eqref{e:gen_bound}, the upper bound $g$ on the generalization gap could depend on properties of the achieved hypothesis $h$, properties of the data set~$S$, properties of the learning algorithm $\mathcal{A}$, and details of the optimization that led to $h$. However, in this work we will focus on \textit{uniform} generalization bounds of the form: for all $\delta \in (0,1)$, with probability $1-\delta$ over $S\sim P^m$, we have for all $h\in\F$ that
\begin{equation}\label{eqn:gen_bound_g}
    R(h) \leq \hat{R}_S(h) + g(\mathcal{F},m,\delta).
\end{equation}
To be specific, we focus on generalization bounds for which the upper bound on the generalization gap -- i.e., the function $g$ -- depends only on properties of the hypothesis class $\mathcal{F}$, the data set size $m$ and the desired probability $\delta$. We note that the term ``uniform" is used when describing such generalization bounds to indicate that, with respect to a fixed data set size $m$ and probability threshold $\delta$, the upper bound on the generalization gap will be the same -- i.e., uniform -- for all $h\in\mathcal{F}$. While it is known that there exist scenarios in which uniform generalization bounds are not tight~\cite{zhang,jiang}, we postpone a discussion of these issues to Section~\ref{s:discussion}.

As motivated above, given a uniform generalization bound for a hypothesis class $\mathcal{F}$, one typical application is as follows: Given a data set $S$ sampled from $P$, with $|S|=m$, run some learning algorithm to obtain a hypothesis $h\in\mathcal{F}$, evaluate its empirical risk \smash{$\hat{R}_S(h)$}, and then use the generalization bound to place a (probabilistic) upper bound on the true risk $R(h)$. 
However, we can also often straightforwardly use such a generalization bound to answer the following natural question: Given some $\epsilon>0$ and some $\delta\in (0,1)$, what is the minimum size of $S$ sufficient to ensure that, with probability $1-\delta$, for all $h\in\F$, the generalization gap satisfies \smash{$R(h) - \hat{R}_S(h)\leq \epsilon$}? To see this, note that if we have a uniform generalization bound, then by setting
\begin{equation}\label{e:g_upper}
    g(\mathcal{F},m,\delta) \leq \epsilon
\end{equation}
and solving for $m$, it is often possible to find some function $f(\epsilon,\delta,\mathcal{F})$ such that, with probability $1-\delta$ over $S\sim P^m$,
\begin{equation}\label{e:sc}
    m \geq f(\epsilon,\delta,\mathcal{F}) \Rightarrow \forall h\in\mathcal{F}: R(h) - \hat{R}_S(h) \leq  g(\mathcal{F},m,\delta) \leq \epsilon.
\end{equation}
As the generalization bound may not be tight, we therefore see that $f(\epsilon,\delta,\mathcal{F})$ provides an \textit{upper bound} on the minimum size of $S$ sufficient to probabilistically guarantee a generalization gap less than $\epsilon$ for all $h\in\mathcal{F}$.

Finally, apart from the fundamental applications of allowing us to bound the out-of-sample performance of a hypothesis, or upper bound the minimum sample-size sufficient to guarantee a certain generalization gap, generalization bounds also allow us to address the issue of \textit{model selection}, via the framework of \textit{structural risk minimization}~\cite{MohriRostamizadehTalwalkar18}. Importantly, we note that one \textit{cannot} simply use only the function $g(k,m,\delta)$ for model selection: A trivial learning model, which outputs the same hypothesis independently of the input data, has $g(k,m,\delta)=0$, but
cannot achieve good prediction performance on interesting tasks.
Structural risk minimization thus suggests combining a generalization bound with an empirical risk evaluation on a specific data-set to choose the model with the smallest upper-bound on the true risk. More specifically, let us assume that our hypothesis class depends on some ``architectural hyper-parameter" $k$, with some notion of ordering such that
\begin{equation}\label{e:increasing_complexity}
    k_1 \leq k_2 \implies \mathcal{F}_{k_1} \subseteq \mathcal{F}_{k_2}.
\end{equation}
For example, $\mathcal{F}_k$ could be the set of all neural networks of fixed width and depth $k$. Given this, how should we choose the hypothesis class -- or model complexity -- that we use for a given learning problem? As illustrated in Figure~\ref{f:SRM}, generalization bounds, when combined with empirical risk evaluations, can allow us to answer this question. In particular, assume that we have a uniform generalization bound of the form: For all $\delta \in (0,1)$, with probability $1-\delta$ over $S\sim P^m$, for all $h\in\F_k$,
\begin{equation}\label{e:gen_bound_for_SRM}
    R(h) \leq \hat{R}_S(h) + g(k,m,\delta),
\end{equation}
where $g(k,m,\delta)$ is non-decreasing with respect to $k$. Here, we have written $g(k,m,\delta)$ rather than $g(\F_k,m,\delta)$ to emphasize the assumption that the hyper-parameter $k$ is the only property of $\F_k$ on which the generalization bound depends explicitly. 
While increasing $k$ increases the expressivity of the hypothesis class and therefore typically leads to smaller empirical risk, it also increases the upper bound $g(k,m,\delta)$ on the generalization gap and may therefore lead to hypotheses with worse out-of-sample performance. As such, a natural strategy to find an optimal hypothesis -- in the sense of having the smallest probabilistic upper bound on the true risk 
-- is as follows:
\begin{figure}
    \centering
    \includegraphics[width=0.52\textwidth]{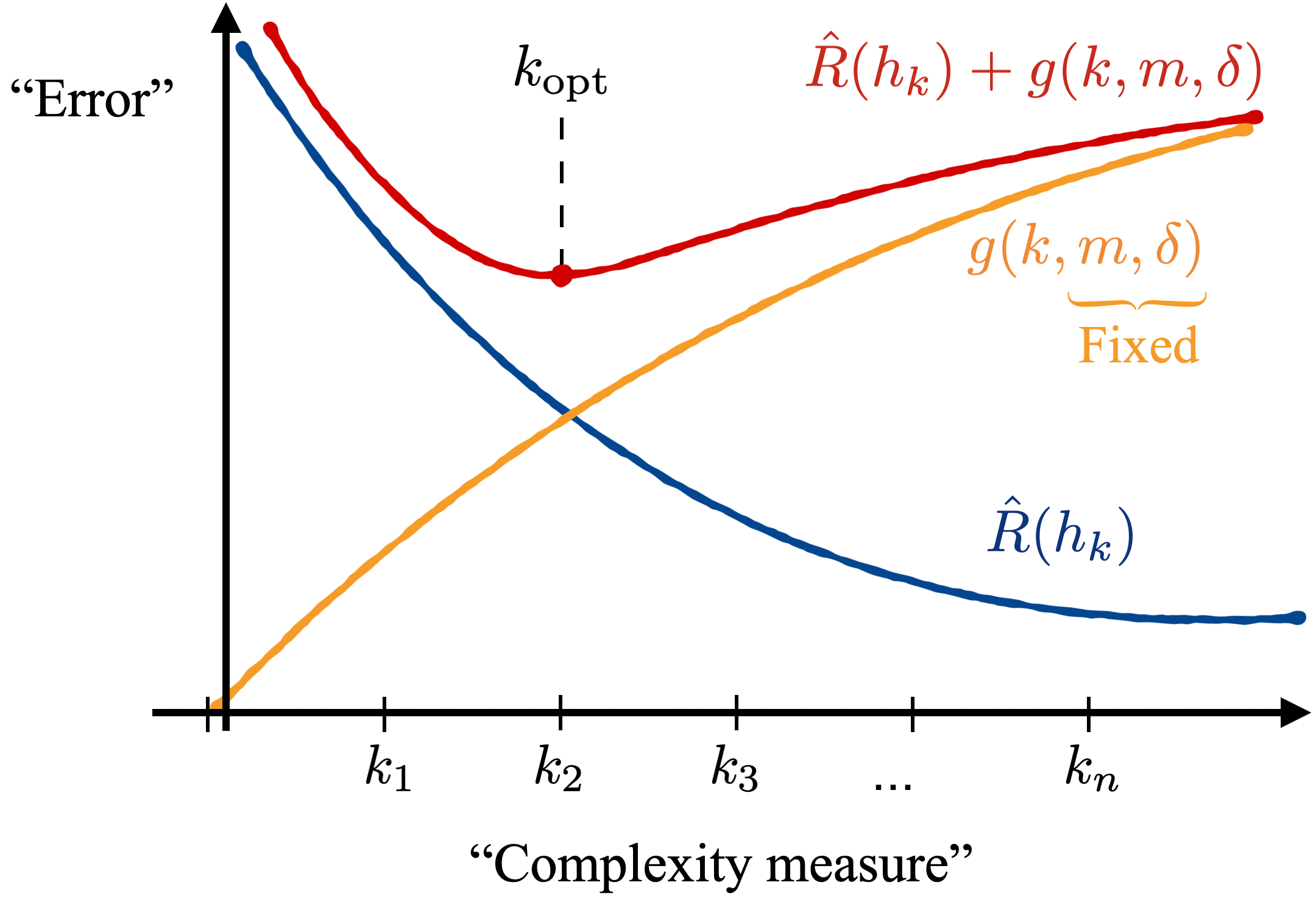}
    \caption{Illustration of structural risk minimization (adapted from Ref.~\cite{MohriRostamizadehTalwalkar18}). Increasing the complexity of a hypothesis class typically allows one to obtain hypotheses with decreasing empirical risk. However, in many cases increasing the complexity of a hypothesis class also leads to a larger upper bound on the generalization gap. Structural risk minimization aims to identify a hypothesis with the smallest upper bound on the true risk that quantifies the out-of-sample performance by combining an evaluation of the empirical risk of candidate hypotheses with an upper bound on the generalization gap of the relevant hypothesis class.}
    \label{f:SRM}
\end{figure}
\begin{enumerate}
    \item For $k$ in $\{k_1,\ldots,k_n\}$, run the learning algorithm $\mathcal{A}_k$, with hypothesis class $\mathcal{F}_k$, and obtain the hypothesis $h_k$.
    \item Calculate $k_\mathrm{opt} = \mathrm{argmin}_{k}[\hat{R}_S(h_k) + g(k,m,\delta)]$.
    \item Output $h_{k_{\mathrm{opt}}}$.
\end{enumerate}
We refer to such a procedure as \textit{structural risk minimization}\footnote{We note that the term ``structural risk minimization" is sometimes used to refer to the strategy of minimizing a regularized empirical risk, with an additive regularization term which penalizes high model complexity. However, we follow Ref.~\cite{MohriRostamizadehTalwalkar18} in our definition and presentation.}, and contrast this with \textit{empirical risk minimization}, which simply outputs the hypothesis minimizing the empirical risk.
In light of the above discussion, we note that, given a family of hypothesis classes $\{\calF_{k}\}$,
each specified by some architectural hyper-parameter $k$ and satisfying the condition of Eq.~\eqref{e:increasing_complexity},
we would ideally like to obtain an upper bound on the generalization gap $g(k,m,\delta)$ which grows \textit{slowly} with respect to $k$. In particular, we can now understand this from two 
different but complementary perspectives:

Firstly, from the structural risk minimization (or model selection) 
perspective, we see from Figure~\ref{f:SRM} that slow growth of $g(k,m,\delta)$ is indicative of our ability to exploit the expressivity of more complex hypothesis classes, i.e.\ those with larger $k$, without risking poor generalization performance due to overfitting. More specifically, under the assumption of monotonically decreasing empirical risk, the slower $g(k,m,\delta)$ grows, the longer we can expect the quantity \smash{$\hat{R}_S(h_k) + g(k,m,\delta)$} to decrease before reaching a minimum, and therefore the smaller we can expect our ultimate upper bound on the true risk of the optimal hypothesis $h_{k_{\mathrm{opt}}}$ to be. In contrast, if $g(k,m,\delta)$ grows too fast with respect to $k$, then even if we can achieve very small empirical risk by increasing model complexity, we do not expect to be able to achieve a sufficiently small upper bound on the true risk of the optimal hypothesis $h_{k_{\mathrm{opt}}}$. 

Secondly, from the sample complexity perspective, let us denote by $f(\epsilon,\delta,k)$ the complementary upper bound on the minimum sample sample size $m$ sufficient to probabilistically ensure a generalization gap less than $\epsilon>0$, which typically follows from $g(k,m,\delta)$ 
(as we recall from the discussion around Eqs.~\eqref{e:g_upper} and \eqref{e:sc}). 
As we naturally expect $g(k,m,\delta)$ to be decreasing with increasing $m$, slow growth of $g(k,m,\delta)$ with respect to $k$ typically implies slow growth of $f(\epsilon,\delta,k)$ with respect to $k$. In other words, slow growth of $g(k,m,\delta)$ typically implies slow growth, with respect to model complexity, of the minimum amount of data one has to use before being able to probabilistically guarantee a certain generalization gap for all output hypotheses. As generating data (i.e., sampling from
the distribution $P$) may be expensive or difficult, and as the run-time of learning algorithms typically scales with respect to the data set size, slow growth of $g(k,m,\delta)$ therefore facilitates the process of learning with models of higher complexity.

Given the above observations, we can finally understand the motivation of this work in an informal way. In particular, in the following section we will see that parametrized quantum circuits (PQCs) naturally give rise to hypothesis classes with multiple architectural hyper-parameters, each reflecting a different aspect of the circuit architecture, such as circuit depth, circuit width, the total number of gates or the total number of data-encoding gates of a particular type. In Section~\ref{s:prior_work} we will then see that a body of previous work has resulted in a collection of generalization bounds for PQC-based models, each of which depend explicitly on some subset of architectural hyper-parameters, but not on others. As of yet, however, there exist no generalization bounds which depend explicitly on hyper-parameters associated with the data-encoding strategy, despite the important role such strategies play in determining the expressive power of PQC-based hypothesis classes~\cite{SchuldSwekeMeyer2021}.
As such, the questions which we address in this work are as follows:

\begin{enumerate}[label=(\alph*),font=\itshape]
\item \textit{Can we derive generalization bounds for PQC-based hypothesis classes which depend explicitly on hyper-parameters associated with the data-encoding strategy?}
\item \textit{Can we use such bounds to identify data-encoding strategies for which the upper bounds on the generalization gap grow polynomially with respect to the architectural hyper-parameter relevant to the encoding strategy?}
\end{enumerate}

As will be discussed in Section~\ref{s:discussion}, apart from filling a gap in our understanding of the manner in which the data-encoding influences generalization, such bounds would also complement existing works, in that they would allow one to perform structural risk minimization with respect to multiple architectural hyper-parameters simultaneously. With this motivation in mind, before proceeding it is worth briefly mentioning \textit{how} (uniform) generalization bounds are typically obtained. Intuitively, one might expect that the generalization performance of a hypothesis class is related to how \textit{complex} (or how \textit{expressive}) the hypothesis class is, and thus one might hope for the existence of a complexity measure for hypothesis classes from which generalization bounds follow. This intuition is indeed correct, and in fact a large amount of work in statistical learning theory has resulted in a variety of suitable complexity measures -- such as the VC dimension~\cite{Vapnik.1971}, Rademacher complexity~\cite{Bartlett.2002}, pseudo-dimension~\cite{Pollard.1984} and metric-entropy amongst others -- all of which directly give rise to generalization bounds~\cite{MohriRostamizadehTalwalkar18, shalev-shwartz_ben-david_2014,Wolf.2020}. As a result, given a hypothesis class $\mathcal{F}_k$, one typically proves a uniform generalization bound for $\mathcal{F}_k$, which depends explicitly on the architectural hyper-parameter $k$, by 
first characterizing the dependence of a suitable complexity measure $C$ on $k$ (i.e., by writing/bounding ${C}(\mathcal{F}_k)$ explicitly in terms of $k$), and then
writing down the known generalization bound which follows from $C(\mathcal{F}_k)$. 
We also follow such a strategy in this work by first characterizing both the Rademacher complexity and metric-entropy of PQC-based models in terms of architectural hyper-parameters related to the data-encoding strategy and then presenting generalization bounds in terms of these complexity measures.
At this stage it is hopefully clear, both \textit{why} generalization bounds are desirable, and \textit{how} (at least intuitively) one might obtain such bounds. Given this, we proceed in the following section to define more precisely the PQC-based hypothesis classes considered in this work. 

\section{Parametrized quantum circuit based model classes}\label{s:pqc_models}
\begin{figure}
    \centering
\includegraphics[width=.8\columnwidth]{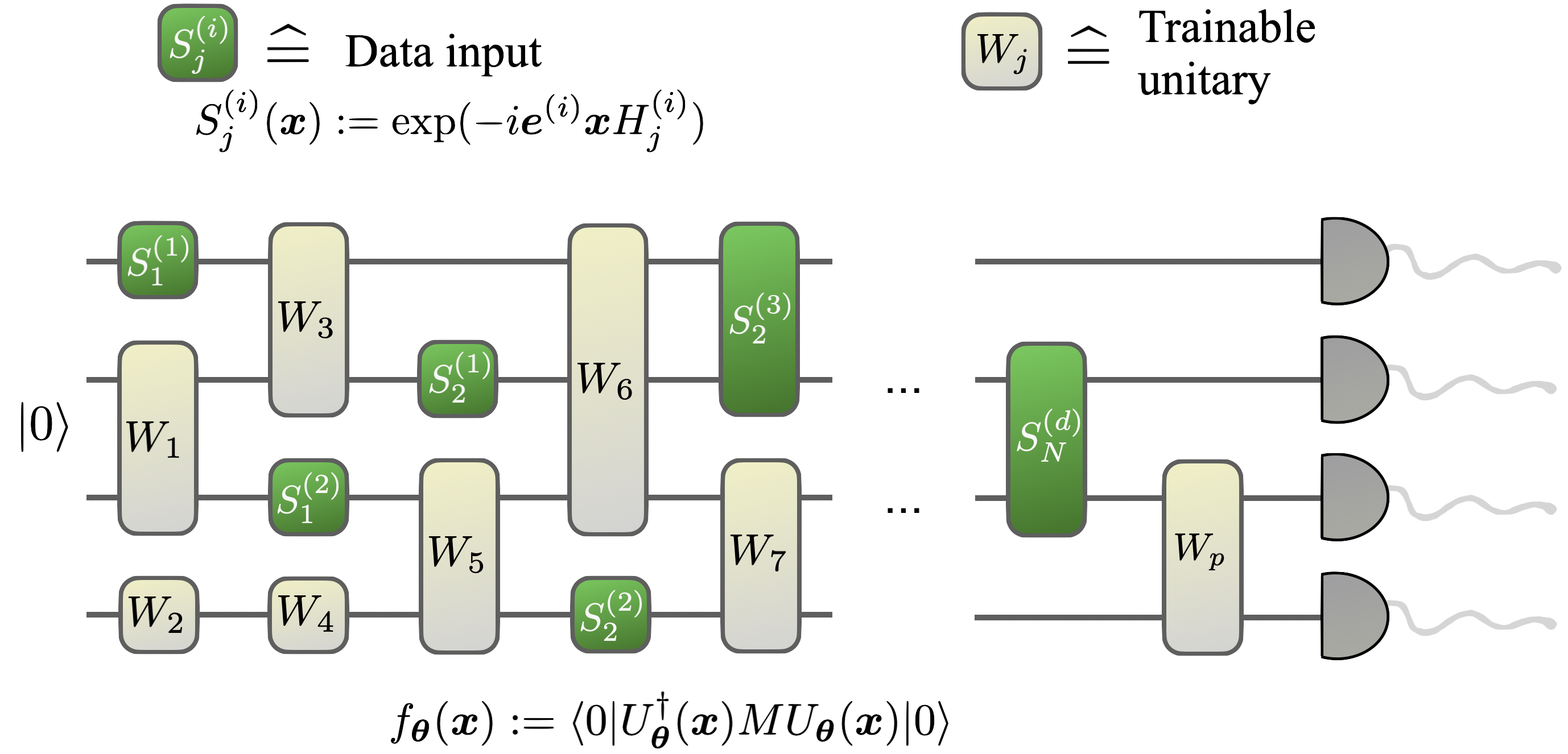}
    \caption{Circuit model considered in this work.
    We assume that the circuit consists of gates which are parametrized either by the data $\xx$ (data-encoding gates), or the trainable parameters $\ttheta$ (trainable gates). The data encoding gates are assumed to implement the time evolution of a data-encoding Hamiltonian, with evolution time given by some data coordinate $x^{(i)}=\ee^{(i)}\xx$.
    The model output is then given by the expectation value of an observable $M$.}
    \label{fig:freeform_circuit_model}
\end{figure}

\emph{Parametrized quantum circuits (PQCs)} are ubiquitous in the field of near-term quantum computing~\cite{bharti2021noisy, cerezo2020variational,McClean_2016} and can be used to construct quantum machine learning models~\cite{Benedetti_2019}. We will consider qubit-based quantum systems. The focus of this work lies on \emph{variational quantum machine learning models} that are constructed from a PQC $U_{\ttheta}(\xx)$ that depends on trainable parameters $\ttheta \in \Theta$ and on data inputs $\xx \in \calX$. A prediction in the co-domain $\calY = \bbR$ is then obtained by evaluating the expectation value of a fixed observable $M$, which can be efficiently evaluated, as
\begin{align}
    f_{\ttheta}(\xx) = \langle 0 | U^{\dagger}_{\ttheta}(\xx) M U_{\ttheta}(\xx) | 0\rangle.
\end{align}
In the following, we assume that the data inputs are $d$-dimensional real-valued vectors with entries in the interval $[0,2\pi)$, i.e., $\calX = [0,2\pi)^d$. This choice is somewhat arbitrary, as data can always be rescaled to fit into a particular interval.
However, $[0,2\pi)$ is a natural choice because quantum gates available on actual hardware are usually parametrized in terms of \emph{angles}. As will become apparent later, we need not make any assumptions on the nature of the trainable parameters, but in most cases they will also be angles, i.e., $\Theta = [0,2\pi)^p$, where $p$ is the number of trainable parameters.

We also make some assumptions on the structure of the circuit \smash{$U_{\ttheta}(\xx)$}. Our model is motivated by the actual quantum circuits that can be executed on NISQ devices. These devices usually only allow fixed gates and parametrized evolutions under device-specific Hamiltonians~\cite{Qiskit,cirq_developers2021cirq,bergholm2020pennylane}.
In our model, the data inputs $\xx$ and the trainable parameters $\ttheta$ enter the circuit through different gates. The unitaries parametrized by $\ttheta$, denoted by \smash{$ \{ W_{i}(\ttheta) \}$}, constitute the trainable part of the model. Fixed unitaries can be absorbed into the trainable unitaries. 

We assume that the gates through which the data enters the circuit are time evolutions under some Hamiltonian, where the \enquote{evolution time} is given by one of the data coordinates \smash{$x^{(i)}$}.
We denote the $j$-th gate that encodes the data coordinate $x^{(i)}$ as
\begin{align}\label{e:d_enc_gate}
    S^{(i)}_{j}(\xx) = \exp\left( -i x^{(i)} H_{j}^{(i)} \right)= \exp\left( -i \ee^{(i)} \xx H_{j}^{(i)} \right),
\end{align} 
where we rewrote the encoding gate in terms of the input data vectors by recognizing that \smash{$x^{(i)} = \ee^{(i)}\xx$}, where $\ee^{(i)}$ is a standard basis vector.
It is of course possible to consider more general dependencies of the evolution time on the input data, i.e.\ in terms of linear combinations or even non-linear functions of the data coordinates. However, we choose not to include models with such classical pre-processing of the data, in order to isolate the part of the model which is truly quantum.
Indeed, if one allowed for arbitrary pre-processing, then one could just use a very complicated neural network to find suitable evolution times for good predictions, but that would miss the point of using a quantum learning model at all. We note though that our definition still encompasses such approaches after a suitable reparametrization of the inputs, which will usually result in a larger number of input coordinates.

For our analysis, no restriction on the placement of the trainable gates and the data-encoding gates in the circuit is necessary. Thus, we assume that they can be arranged arbitrarily, as depicted in Figure~\ref{fig:freeform_circuit_model}.
However, we will refer to the choice of data-encoding Hamiltonians per data coordinate as \smash{$\calD^{(i)} = \{ H_j^{(i)} \}$} and call the union of these sets over all data coordinates the \emph{data-encoding strategy}
\begin{align}
    \calD = \big( \calD^{(1)}, \calD^{(2)}, \dots, \calD^{(d)} \big).
\end{align}
The total number of encoding gates per data coordinate is $N^{(i)} = |\calD^{(i)}|$ and the total number of data-encoding gates is $N = \sum_{i=1}^d N^{(i)}$.

A data-encoding strategy $\calD$ together with a fixed circuit structure and a choice of trainable gates defines a parametrized quantum circuit $U_{\ttheta}(\xx)$. We denote the fact that this circuit uses the encoding strategy $\calD$ as {$U_{\ttheta}(\xx) \sim \calD$}. When we fix an observable $M$ to generate the predictions, this defines a function class
\begin{align}
    \calF_{\Theta, \calD, M} \coloneqq \{ [0, 2\pi)^d \ni  \xx \mapsto \langle 0 | U^{\dagger}_{\ttheta}(\xx) M U_{\ttheta}(\xx) | 0\rangle \, | \, \ttheta \in \Theta, U_{\ttheta}(\xx) \sim \calD \},
\end{align}
which is obtained by considering all possible parametrizations $\ttheta \in \Theta$ of the trainable gates.
This function class depends explicitly on the parametrization of the trainable parts of the circuit and on the data-encoding strategy. 
As we ultimately want to obtain generalization bounds that depend on the hyper-parameters associated with the encoding strategy -- such as the number of encoding gates $N$ -- it will be helpful for us to reformulate the function class in a way that makes it more amenable to the analyses in the following sections.
To this end, we draw on the results of Refs.~\cite{gil_vidal2020input,SchuldSwekeMeyer2021}, which show that the nature of the data encoding gates as Hamiltonian evolutions allows us to expand the model output as a \emph{generalized trigonometric polynomial (GTP)}. A GTP \enquote{generalizes} the notion of a trigonometric polynomial by allowing arbitrary frequencies as in
\begin{align}
    f_{\ttheta}(\xx) = \sum_{\oomega \in \Omega(\calD)} c_{\oomega}(\ttheta, M) e^{-i \oomega \xx}.
\end{align}
While the GTP's coefficients $\{ c_{\oomega} \}$ depend on the particular parametrization and observable, the set of frequencies $\Omega(\calD)$ depends solely on the chosen data-encoding strategy $\calD$, in particular on the spectra of the Hamiltonians \smash{$\{ H_j^{(i)} \}$} that yield the data encoding evolutions \smash{$\{ S^{(i)}_j(\xx) \}$}. We describe the procedure for obtaining such a GTP representation in more detail below.
The fact that the expectation value is always real is reflected by $c_{\oomega} = c_{-\oomega}^{*}$ and by the observation that $\oomega \in \Omega(\calD)$ implies that also $-\oomega \in \Omega(\calD)$. 
Additionally, we note that the absolute value of any expectation value obtained from measuring $M$ is upper bounded by its operator norm $\lVert M \rVert_{\infty}$, and therefore, if we assume that $\lVert M \rVert_{\infty} \leq B$, then 
\begin{align}\label{eq:function_class_bounded_inf_norm}
    \calF_{\Theta, \calD, M} 
    \subseteq \Fclass 
    \coloneqq \left\{[0,2\pi)^d\ni \xx\mapsto f(\xx) = \sum\limits_{\oomega\in\Omega} c_{\oomega}\exp(-i \oomega \xx) \, \Big\vert \, (c_{\oomega})_{\oomega\in \Omega} \text{ such that } \norm{f}_{\infty} \leq B\right\},
\end{align}
where $\Omega = \Omega(\calD)$. 
We have thus defined a function class that solely depends on the data-encoding strategy.
We stress that this function class subsumes all possible ways to parametrize the trainable parts of a circuit with fixed data-encoding strategy $\calD$ and fixed observable $M$, but also goes beyond this by allowing all possible choices of observable $M$ such that $\lVert M \rVert_{\infty} \leq B$. Therefore, it also contains models where not only the parameters of the trainable gates, but also the measurement itself is subject to optimization. 
In going from $\calF_{\Theta, \calD, M}$ to $\Fclass$, we effectively allow for a universal trainable part and observable, which enables us to focus on the encoding strategy. Studying intermediate classes between $\calF_{\Theta, \calD, M}$ and $\Fclass$ could constitute a path towards tighter generalization bounds that depend on both the data-encoding and the trainable part of the PQC-based model.

In Section~\ref{s:results}, we will first prove generalization bounds for \smash{$\Fclass$}, which depend explicitly on properties of $\Omega$, before exploring in detail in Section~\ref{s:gen-bounds-pqc} how these relevant properties of $\Omega$ depend on the data-encoding strategy $\calD$.
Exploiting the fact that, for a given $B\geq\lVert M\rVert_\infty$, $\calF_{\Theta, \calD, M} \subseteq \calF_{\Omega(\calD)}^{B}$ then automatically yields explicitly encoding-dependent generalization bounds for $\calF_{\Theta, \calD, M}$.

As the connection between the data-encoding strategy $\calD$ and the set $\Omega(\calD)$ plays a crucial role, we illustrate this connection for a generic data-encoding strategy here. We first consider the action of a single encoding evolution $S(\xx)$ in the density matrix picture, where it acts via the quantum channel
\begin{align}
    \calS(\xx)[\rho] = \exp\left( -i \ee \xx H \right) \rho\exp\left( i \ee \xx H \right)  ,
\end{align}
where the Hamiltonian $H$ takes the role of any of the above Hamiltonian terms \smash{$H_j^{(i)}$} and $\ee$ can be any basis vector.
We can expand $\rho$ in the eigenbasis of the Hamiltonian $H | \lambda_k \rangle =  \lambda_k| \lambda_k \rangle$ and obtain
\begin{align}
    \calS(\xx)[\rho] &= \calS(\xx)\left[\sum_{k,l} \rho_{k,l} \, | \lambda_k \rangle\!\langle \lambda_l| \right] \\
    &= \sum_{k,l} \rho_{k,l}\, \calS(\xx)\left[| \lambda_k \rangle\!\langle \lambda_l| \right] \\
    &= \sum_{k,l} \rho_{k,l} \, \exp(-i (\lambda_k-\lambda_l) \ee  \xx)| \lambda_k \rangle\!\langle \lambda_l|.
\end{align}
We see that the differences of the eigenvalues $\lambda_k$ of the Hamiltonian $H$ determine the frequencies with which the different elements of the expansion of $\rho$ are multiplied. We can combine the different frequencies with the weight vector $\ee$ to obtain the set of all available frequencies
\begin{align}
    \Omega(H) = \{ \oomega_{k,l} = (\lambda_k-\lambda_l)\ee  \,|\, \lambda_k, \lambda_l \in \operatorname{spec}(H) \}.
\end{align}
With this notation, we can simplify our expression for $\calS(\xx)[\rho]$ to obtain
\begin{align}
    \calS(\xx)[\rho] &= \sum_{\oomega \in \Omega(H)} \exp(-i \oomega \xx) \rho_{\oomega},
\end{align}
where the operators $\rho_{\oomega}$ are given by collecting the terms in the above sum for which the frequency differences are the same, i.e.
\begin{align}
    \rho_{\oomega} = \sum_{(k,l) \in I(\oomega)} \rho_{k,l} | \lambda_k \rangle\!\langle \lambda_l|, \text{ where }I(\oomega) = \{ (k,l) \, | \, (\lambda_k-\lambda_l)\ee  = \oomega \}.
\end{align}
As $\rho$ is Hermitian, we have that $\rho_{\oomega} = \rho_{-\oomega}^{*}$.
The frequency structure carries over if we measure the expectation value of an arbitrary observable $M$ for the state $\calS(\xx)[\rho]$ to obtain a prediction
\begin{align}
    f(\xx) &= \Tr \{ \calS(\xx)[\rho] M \}
    = \sum_{\oomega \in \Omega(H)} \exp(-i \oomega \xx) \Tr \{ \rho_{\oomega} M \}
    = \sum_{\oomega \in \Omega(H)} {c_{\oomega}}\exp(-i \oomega \xx) .
\end{align}
As a result, we obtain a GTP with coefficients $c_{\oomega} = \Tr \{ \rho_{\oomega} M \}$. Note that, as $\rho_{\oomega} = \rho_{-\oomega}^{*}$, we have that $c_{\oomega} = c_{-\oomega}^{*}$, which ensures that $f(\xx)$ is real-valued as expected. The coefficients of this series could depend intricately on the circuit that was used to construct $\rho$ and on the specific observable $M$, but a profound understanding of this relation is an open question. However, this does not pose an obstacle for us, as only the set $\Omega$ is relevant for our study. 

We have just derived the frequency structure for one encoding gate, but for more complicated circuits we have to understand the action of multiple encoding gates, potentially interleaved with some trainable unitaries. The intermediary unitaries, however, will only result in a basis change, not affecting the set of combined frequencies. We can therefore ignore them and just consider the repeated action of two distinct encoding gates with Hamiltonians $H_1$ and $H_2$, resulting in
\begin{align}
    \calS_2(\xx)[\calS_1(\xx)[\rho]] &= \calS_2(\xx)\left[\sum_{{\oomega_1 \in \Omega(H_1)}} \exp(-i \oomega_1 \xx) \rho_{\oomega_1}\right] \\ 
    &= \sum_{\oomega_1 \in \Omega(H_1)} \exp(-i \oomega_1 \xx) \sum_{\oomega_2 \in \Omega(H_2)} \exp(-i \oomega_2 \xx)\rho_{\oomega_1,\oomega_2} \\
    &= \sum_{\oomega_1 \in \Omega(H_1)} \sum_{\oomega_2 \in \Omega(H_2)} \exp(-i [\oomega_1 + \oomega_2] \xx)\rho_{\oomega_1,\oomega_2}.
\end{align}
At this point, we precisely understand that the application of the second gate results in new frequencies that encompass all possible sums of the different frequencies. We can again consolidate this if we consider the sumset (or Minkowski sum) of the two sets of frequencies $\Omega(H_1)$ and $\Omega(H_2)$
defined as
\begin{align}
    \Omega(\{H_1, H_2\}) \coloneqq \Omega(H_1) + \Omega(H_2) \coloneqq \{ \oomega_1 + \oomega_2 \, | \, \oomega_1 \in \Omega(H_1), \oomega_2 \in \Omega(H_2) \}.
\end{align}
With that we have
\begin{align}
    \calS_2(\xx)[\calS_1(\xx)[\rho]] &= \sum_{{\oomega \in \Omega(H_1) + \Omega(H_2)}} \exp(-i \oomega \xx) \rho_{\oomega}.
\end{align}
Note again that the values of specific components $\rho_{\oomega}$ depend on the specific initial state $\rho$ and possible intermediate unitaries, but, in this work, we are only interested in $\Omega$ itself. We can apply the same logic recursively to see that the set of accessible frequencies for any encoding strategy $\calD$ is given by the sumset of all the individual sets of frequencies \smash{$\Omega(H^{(i)}_j)$} for each gate:
\begin{align}\label{eqn:omega_full_definition}
    \Omega(\calD) = \sum_{\calD^{(i)} \in \calD} \sum_{H \in \calD^{(i)}} \Omega(H)
    = \sum_{i=1}^{d} \sum_{j=1}^{N^{(i)}} \{  (\lambda_k-\lambda_l)\ee^{(i)} \,|\, \lambda_k, \lambda_l \in \operatorname{spec}(H_j^{(i)}) \}.
\end{align}

\section{Prior and related work}\label{s:prior_work}

Before presenting our explicitly encoding-dependent generalization bounds for PQC-based models in the next two sections, we discuss how our results compare to prior work. While there is a massive amount of prior and ongoing work on the generalization capacity of classical models, see for example the survey in Ref.~\cite{jiang}, such results have only recently begun to emerge for PQC-based models. Here, we focus on a comparison with these latter results. Additionally, while the following paragraphs constitute a detailed review of existing generalization bounds for PQC-based models, we stress that no knowledge of these prior works is necessary to understand our proofs and results. In particular, the presentation here is intended to establish context for our work and to place prior works in relation to each other, but the remainder of this manuscript can safely be read independently of the review presented here.

Given the discussions in the previous two sections, we note that, at a high level, all prior work on generalization bounds for PQC-based models can be classified via the following three criteria:
\begin{enumerate}
    \item Which restrictions -- if any -- are placed on the architecture/structure of the PQCs generating the model class considered?
    \item In terms of which architectural hyper-parameters, or experimentally accessible quantities, are the generalization bounds expressed?
    \item Via which complexity measure are the generalization bounds derived?
\end{enumerate}
Given this, we will use the above questions as guidelines for understanding and relating existing results. 
Throughout this discussion, keep in mind that, as explained in Section~\ref{s:intro}, all prior works are restricted to \textit{encoding-first} models, whereas we allow for data re-uploading.

Additionally, while some of the following works study the same complexity measures as the ones examined here -- namely, Rademacher complexity and covering numbers -- all of them differ from ours in both the restriction to encoding-first PQC-based models and in a lack of \textit{explicit} dependence on the data-encoding strategy. Given this, we split our survey into two parts. 
First, in Section~\ref{ss:encoding_independent}, we discuss those prior works which derive \textit{encoding-independent} generalization bounds.
In Section~\ref{ss:encoding_dependent}, we then discuss existing works deriving generalization bounds which depend on the data-encoding strategy, but with a dependence which is implicit, and not necessarily clear a priori.

\subsection{Encoding-independent complexity and generalization bounds}\label{ss:encoding_independent}

Ref.~\cite{CaroDatta2020} is an early study
of the complexity and generalization capacity of quantum circuit based models, which presents encoding-independent bounds on the pseudo-dimension of function classes associated with encoding-first $2$-local (unitary or CPTP) PQCs, polynomial in the size (number of gates) and depth of the trainable part of the circuit (in which all gates were considered trainable). Such pseudo-dimension bounds then yield generalization bounds, which also depend polynomially on the size and depth of the trainable circuit.
Ref.~\cite{popescu2021learning} has extended the generalization bounds of Ref.~\cite{CaroDatta2020} to the agnostic setting.
In a similar vein, Ref.~\cite{du2021efficient} has recently derived encoding-independent covering number bounds for encoding-first PQC-based models, which depend explicitly on the number of gates in the PQC, and the operator norm of the measured observable. Once again, using standard tools from statistical learning theory, the authors of Ref.~\cite{du2021efficient} are then able to use these covering number bounds to provide an encoding-independent generalization bound.

Working from the perspective of kernel methods, Ref.~\cite{Vedran2021} has recently investigated the complexity of encoding-first PQC-based models in terms of properties of the parametrized measurement which follows data-encoding. More specifically, they interpret the entire parametrized circuit following the data-encoding as a parametrized measurement, and provide bounds for
the VC-dimension of the model class in terms of the rank of the parametrized observable, and for
the fat-shattering dimension in terms of the Frobenius norm of the parametrized observable. These bounds on standard complexity measures then allow them to prove generalization bounds which depend explicitly on either the rank or the Frobenius norm of the accessible observables. 
However, similarly the perspective we advocate in this work, the authors of Ref.~\cite{Vedran2021} stress the application of generalization bounds for model selection, via structural risk minimization.

Finally, Ref.~\cite{Bu.2021b} has recently initiated a resource-theoretic approach by providing encoding-independent bounds on both the Rademacher and Gaussian complexity of encoding-first PQC-based models, in terms of the number of repetitions of resource channels allowed in the PQC. 
These Rademacher and Gaussian complexity bounds have then been used to derive generalization bounds, which depend on the same quantities, and therefore provide an encoding-independent 
resource-theoretic perspective on generalization in encoding-first 
PQC-based models. 

\subsection{Encoding-dependent complexity and generalization bounds}\label{ss:encoding_dependent}

We proceed by discussing prior work deriving generalization bounds which do depend on the data-encoding strategy.
While the dependence on the data-encoding could take various forms,
in this manuscript
we aim to derive generalization bounds which depend \textit{explicitly} on architectural hyper-parameters related to the data-encoding strategy (such as the number of encoding gates of a specific type), and therefore facilitate the straightforward implementation of model selection via structural risk minimization. 
This is in contrast to all of the prior encoding-dependent generalization bounds, which are written in terms of some quantity which depends on the data-encoding strategy, but with an implicit dependence which is not 
a priori
clear, and needs to be assessed experimentally. Given this fundamental difference between our 
generalization bounds and those of the
prior works we discuss here, a 
natural
open
question is whether the implicitly encoding-dependent quantities used in the following works can be written explicitly in terms of architectural hyper-parameters related to the data-encoding strategy. If possible, this would immediately provide explicitly encoding-dependent generalization bounds comparable to those we derive in this work.

With this in mind, we begin our survey of implicitly encoding-dependent generalization bounds with Ref.~\cite{abbas2020power}, which has  suggested a complexity measure based on the classical Fisher information, called the effective dimension, and demonstrated that one can indeed state generalization bounds in terms of the effective dimension. Utilizing the empirical Fisher information as a tool for approximating the effective dimension, Ref.~\cite{abbas2020power} presented numerical experiments which demonstrate a clear dependence of the effective dimension on the encoding-strategy.
However, the explicit dependence of the effective dimension on the encoding strategy is not clear and needs to be evaluated experimentally. Additionally, Ref.~\cite{abbas2020power} also provided a comparison between the effective dimension of PQC-based models and comparable classical models, and demonstrated that PQC-based models can exhibit a higher effective dimension. While not discussed explicitly in Ref.~\cite{abbas2020power}, we stress, however, that one should \textit{not} use model complexity (e.g., effective dimension) as the sole criterion for model selection, since model classes with higher effective dimension may have worse generalization behavior than models with a lower effective dimension. Instead, as we advocate in this work, one should ideally use a framework such as structural risk minimization to select a model with the smallest upper bound on out-of-sample performance.

Also working from an information theoretic perspective, and with a
focus on the role of data-encoding, Ref.~\cite{banchi2021generalization} has recently presented generalization bounds for PQC-based models in terms of information-theoretic quantities describing a notion of mutual information between the post-encoding quantum state $\rho(\xx)$ and the classical data. While these generalization bounds have a strong implicit dependence on the data-encoding strategy, 
it is once again not immediately clear, apart from in a few special cases, how to explicitly express the suggested complexity measure in terms of architectural hyper-parameters related to the data-encoding strategy. 

From a resource theoretic perspective, and complementing Ref.~\cite{Bu.2021b}, the series of works~\cite{Bu.2021a, Bu.2021c} have further studied the Rademacher complexity of encoding-first PQC-based models. However, unlike in Ref.~\cite{Bu.2021b}, the Rademacher complexity bounds of Refs.~\cite{Bu.2021a, Bu.2021c}  are given in terms of quantities that exhibit an implicit dependence on the data-encoding strategy. More specifically, Ref.~\cite{Bu.2021c} provides Rademacher complexity bounds in terms of the size, depth and amount of magic available as a resource. Additionally, Ref.~\cite{Bu.2021a} also studies noisy PQC-based models and provides Rademacher complexity bounds in terms of either the Rademacher complexity of the associated noiseless circuit or the free-robustness of the model. 

Recently, Ref.~\cite{chen2021expressibility} has studied generalization for PQC-based models using a hardware efficient ansatz with a specific choice of data-encoding. 
For this setting, they proved VC-dimension bounds that scale polynomially with the minimum of the number of qubits and the number of trainable layers. In their proofs, they combine light cone arguments with a trigonometric function representation for functions implemented by their ansatz.

Finally, we mention Ref.~\cite{huang2021power} which has developed techniques for evaluating the potential advantages of quantum kernels over classical kernels. These results are of relevance to this work due to the close relationship between PQC-based models and kernel methods~\cite{schuld2021kernelmethods}.
In a first step, the authors of Ref.~\cite{huang2021power}  suggest the evaluation of a geometric quantity which depends on the chosen quantum feature map and the available training data instances. If the quantum machine learning model passes this first test, a model complexity parameter, which now depends on the quantum encoding and the training data (both instances and labels), should be computed.
While these complexity measures can be classically computed in time polynomial in the training data size, analytically determining their exact dependence on the data-encoding can be challenging. This is in contrast to our model complexity bounds, which depend straightforwardly on hyper-parameters associated with the data-encoding strategy, such as the number of encoding gates of a specific type.


\section{Generalization bounds for generalized trigonometric polynomials}\label{s:results}

We recall (from Section~\ref{s:pqc_models}) that we can prove generalization bounds on $\mathcal{F}_{\Theta,\mathcal{D},M}$, the 
hypothesis
class of interest for a given PQC-based model, by proving generalization bounds on $\Fclass$. Recall that $\Fclass$ has been defined as the class of generalized trigonometric polynomials (GTPs) with frequencies in $\Omega$ and infinity-norm bounded by $B$ as
\begin{align}\label{e:fclass_complex}
    \Fclass 
    =\Bigg\{[0,2\pi)^d &\ni \xx\mapsto f(\xx) = \sum\limits_{\oomega\in\Omega}c_{\oomega} \exp(-i \oomega \xx) \, \Big\vert \, (c_{\oomega})_{\oomega\in\Omega} \text{ such that } \norm{f}_{\infty} \leq B\Bigg\}.
\end{align}
In order to prove generalization bounds for $\Fclass$, it will be convenient to work with the cosine and sine representation of the complex exponential, and with the norm of the vector of coefficients instead of the norm of the function. 
Note that, since we have observed in Section~\ref{s:pqc_models} that $c_{-\oomega}=c_{\oomega}^\ast$, we can define, for every $\oomega\in\Omega$
\begin{align}
    a_{\oomega} & \coloneqq c_{\oomega} + c_{-\oomega} \in\R ,\\
    b_{\oomega} & \coloneqq \frac{1}{i}(c_{\oomega} - c_{-\oomega})\in\R.
\end{align}
With these, it further follows that
\begin{align}
    c_{\oomega}e^{-i\oomega\xx} + c_{-\oomega}e^{i\oomega\xx} & = a_{\oomega}\cos(\oomega\xx) + b_{\oomega}\sin(\oomega\xx),
\end{align}
which allows us to rewrite the sum in Eq.~\eqref{e:fclass_complex} as a sum of real terms only.
If we were only considering frequencies given by real numbers, then it would suffice to sum over the non-negative frequencies in the real sum representation.
However, we are dealing with frequency 
vectors. As this is the case, we start by removing the zero vector from the set of frequencies to obtain $\Omega^\ast\coloneqq\Omega\setminus\{0\}$. 
Note that this is meaningful as $0\in\Omega$ for any $\Omega$ of the form introduced in Section~\ref{s:pqc_models}.
Next, we divide $\Omega^\ast$ into two disjoint parts $\Omega^\ast=\Omega_+ \cup\Omega_-$, with $\Omega_+\cap\Omega_- =\emptyset$, such that for every $\oomega\in\Omega_+$ we have that  $-\oomega\in\Omega_-$. We again note that this is possible due to the specific form of the sets $\Omega$ discussed in Section~\ref{s:pqc_models}. In particular, we then have $|\Omega|=2|\Omega_+|+1$. Additionally, we make use of a shorthand notation for the vectors $(a_{\oomega})_{\oomega\in\Omega_+}$ and $(b_{\oomega})_{\oomega\in\Omega_+}$: We keep the indices outside of the parentheses, but remove the indexing set. Namely we write $(a_0,(a_{\oomega})_{\oomega},(b_{\oomega})_{\oomega})$ in place of $(a_0,(a_{\oomega})_{\oomega\in\Omega_+},(b_{\oomega})_{\oomega\in\Omega_+})$.
We only explicitly write the indexing set at certain points to avoid confusion.

With these notational points in mind, we can rewrite the hypothesis class $\Fclass$ as
\begin{align}
    \begin{split}
    \Fclass
    = \Bigg\{[0,2\pi)^d&\ni \xx\mapsto f(x) = \frac{a_0}{2} + \sum\limits_{\oomega\in\Omega_+} \left(a_{\oomega}\cos(\oomega\xx) + b_{\oomega}\sin(\oomega\xx)\right)\\ 
    &\Bigg\vert~ (a_0,(a_{\oomega})_{\oomega},(b_{\oomega})_{\oomega}) \text{ such that } \norm{f}_{\infty} \leq B \Bigg\},
    \end{split}
\end{align}
and, for $\tilde{B}>0$,
we define the class $\Fclasstrignew$ via
\begin{align}\label{eq:trig_polys_class}
\begin{split}
    \Fclasstrignew
    \coloneqq \Bigg\{[0,2\pi)^d&\ni \xx\mapsto \frac{a_0}{2} + \sum\limits_{\oomega\in\Omega_+} \left(a_{\oomega}\cos(\oomega\xx) + b_{\oomega}\sin(\oomega\xx)\right)\hspace{1.25cm}\\
    &\Bigg\vert~ \norm{(a_0,(a_{\oomega})_{\oomega},(b_{\oomega})_{\oomega})}_2 \leq \tilde{B}\Bigg\},
\end{split}
\end{align}
where the $2$-norm is given by
\begin{align}
    \norm{(a_0,(a_{\oomega})_{\oomega},(b_{\oomega})_{\oomega})}_2 
    \coloneqq \sqrt{a_0^2 + \sum_{\oomega\in\Omega_+} (a_{\oomega}^2 + b_{\oomega}^2)}.
\end{align}
We assume for the purposes of this section that $\tilde{B}>0$ is chosen such that $\Fclass\subseteq\Fclasstrignew$ holds true.
In the special case of frequency vectors with integer entries, i.e., $\Omega\subseteq\mathbb{Z}^{d}$, we can always choose $\tilde{B}=2B$. This can be seen as follows: For a function $f\in\Fclass$ given by $f(\xx)=\sum_{\oomega\in\Omega}\exp(-i \oomega \xx) c_{\oomega}={a_0}/{2}+\sum_{\oomega\in\Omega_+} \left(a_{\oomega}\cos(\oomega\xx) + b_{\oomega}\sin(\oomega\xx)\right)$, we obtain
\begin{align}\label{e:gtp_coeff_bound}
    \norm{(a_0,(a_{\oomega})_{\oomega\in\Omega_+},(b_{\oomega})_{\oomega\in\Omega_+})}_2 
    &= 2\norm{(c_0,(c_{\oomega})_{\oomega\in\Omega^\ast})}_2
    = \frac{2}{(2\pi)^{\nicefrac{d}{2}}}\norm{f}_2
    \leq 2\norm{f}_\infty
    \leq 2B ,
\end{align}
so we can take $\tilde{B}=2B$ to ensure $\Fclass\subseteq\Fclasstrignew$. Here, we used that $\Omega\subseteq\mathbb{Z}^{d}$ implies orthogonality of the functions $\xx\mapsto \exp(-i \oomega \xx)$, $\oomega\in\Omega$, which in turn implies $\norm{f}_2=(2\pi)^{\nicefrac{d}{2}}\norm{(c_0,(c_{\oomega})_{\oomega\in\Omega^\ast})}_2$ since each of those functions has $2$-norm equal to $(2\pi)^{\nicefrac{d}{2}}$.
As a consequence of the assumption that $\Fclass\subseteq\Fclasstrignew$, generalization bounds uniform over $\Fclasstrignew$ imply generalization bounds uniform over $\Fclass$. Therefore, we focus on proving generalization bounds for $\Fclasstrignew$.

We conjecture that the inclusion $\calF_{\Theta, \calD, M} \subseteq \Fclass$ observed in Section \ref{s:pqc_models} can be strengthened:
\begin{conjecture}\label{conjecture:strengthened-inclusion}
    For any PQC-based model, we have
    \begin{align}
    \begin{split}
        \calF_{\Theta, \calD, M} 
        \subseteq 
        \Fclasstil 
        \coloneqq \Bigg\{[0,2\pi)^d &\ni \xx\mapsto f(\xx) = \sum\limits_{\oomega\in\Omega} c_{\oomega}\exp(-i \oomega \xx) \, \\&\Big\vert \, (c_{\oomega})_{\oomega\in \Omega} \text{ s.t. } \norm{f}_{\infty}\leq B \text{ and } \norm{(c_{\oomega})_{\oomega\in \Omega}}_\infty\leq B\Bigg\} .
    \end{split}
\end{align}
\end{conjecture}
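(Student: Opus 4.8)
The plan is to prove the stronger inclusion by bounding each Fourier--Bohr coefficient $c_{\oomega}$ directly at the level of the \emph{function} $f$, rather than through the operator-valued expression $c_{\oomega}=\Tr\{\rho_{\oomega}M\}$ derived in Section~\ref{s:pqc_models}. The key observation is that, for a genuine PQC-based model, the encoding gates $\exp(-ix^{(i)}H^{(i)}_j)$ are unitary for \emph{every} real value of $x^{(i)}$, so $U_{\ttheta}(\xx)$ is unitary and $|\psi(\xx)\rangle\coloneqq U_{\ttheta}(\xx)|0\rangle$ is a unit vector for all $\xx\in\mathbb{R}^d$, not merely on $\calX=[0,2\pi)^d$. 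Consequently $f(\xx)=\langle\psi(\xx)|M|\psi(\xx)\rangle$ satisfies $|f(\xx)|\leq\norm{M}_\infty\leq B$ for all $\xx\in\mathbb{R}^d$. Since $\Omega=\Omega(\calD)$ is a finite set, $f$ is a finite exponential sum, i.e.\ a uniformly almost-periodic function on $\mathbb{R}^d$, and I can recover its coefficients by an averaging integral and then bound them by this global uniform bound.

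Concretely, I would proceed in three steps. First, establish the orthonormality of the characters under the mean value: for distinct frequency vectors $\oomega,\oomega'\in\Omega$, a direct computation gives $\frac{1}{(2T)^d}\int_{[-T,T]^d}e^{i(\oomega-\oomega')\xx}\,\diff\xx=\prod_{k=1}^d \operatorname{sinc}\big((\oomega-\oomega')_k T\big)\to \delta_{\oomega,\oomega'}$ as $T\to\infty$, since $\oomega\neq\oomega'$ forces at least one coordinate of $\oomega-\oomega'$ to be nonzero. Second, apply this to the finite sum to obtain the Bohr coefficient formula $c_{\oomega}=\lim_{T\to\infty}\frac{1}{(2T)^d}\int_{[-T,T]^d}f(\xx)\,e^{i\oomega\xx}\,\diff\xx$. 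Third, bound $|c_{\oomega}|\leq \lim_{T\to\infty}\frac{1}{(2T)^d}\int_{[-T,T]^d}|f(\xx)|\,\diff\xx\leq\sup_{\xx\in\mathbb{R}^d}|f(\xx)|\leq\norm{M}_\infty\leq B$, which yields $\norm{(c_{\oomega})_{\oomega\in\Omega}}_\infty\leq B$. Combined with the already-established bound $\norm{f}_\infty\leq B$ (from $\calF_{\Theta,\calD,M}\subseteq\Fclass$), this places $f$ in $\Fclasstil$. In the special case $\Omega\subseteq\mathbb{Z}^d$ the argument collapses to the familiar statement that a Fourier coefficient is bounded by the sup-norm, since then $f$ is $2\pi$-periodic and the averaging integral reduces to $\frac{1}{(2\pi)^d}\int_{[0,2\pi)^d}f(\xx)e^{i\oomega\xx}\,\diff\xx$.

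The step I expect to be the crux is recognizing that one must use boundedness of $f$ on all of $\mathbb{R}^d$, and not only on $\calX=[0,2\pi)^d$. For non-integer, and in particular incommensurate, frequency vectors the restriction of $f$ to $[0,2\pi)^d$ neither determines the coefficients nor controls their size, so the naive attempt to bound $c_{\oomega}$ using only $\norm{f}_{\infty}\leq B$ over $\calX$ fails; this is precisely why the strengthened inclusion is asserted for the PQC class $\calF_{\Theta,\calD,M}$, where the extra global boundedness is automatic, rather than for the larger GTP class $\Fclass$, where it need not hold. It is worth contrasting this function-level route with the operator-level one: bounding $|c_{\oomega}|=|\Tr\{\rho_{\oomega}M\}|\leq\norm{\rho_{\oomega}}_1\norm{M}_\infty$ would require controlling $\norm{\rho_{\oomega}}_1$ for the partial-sum operators $\rho_{\oomega}$, which is not obviously bounded and appears to be the obstruction that makes the statement look delicate; the Bohr-mean argument sidesteps this entirely.
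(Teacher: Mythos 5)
Your argument is sound, and the first thing to note is that the paper does not prove this statement at all: it is stated as Conjecture~\ref{conjecture:strengthened-inclusion} and left open, so there is no proof in the paper to compare against --- your Bohr-mean argument would settle the conjecture affirmatively. The two ingredients you isolate are exactly the right ones. First, the GTP identity $f(\xx)=\sum_{\oomega\in\Omega}c_{\oomega}e^{-i\oomega\xx}$ derived in Section~\ref{s:pqc_models} is an algebraic consequence of expanding each $\exp(-ix^{(i)}H_j^{(i)})$ in its eigenbasis and therefore holds for all $\xx\in\R^d$, while unitarity of these gates for every real argument gives $\lvert f(\xx)\rvert\leq\norm{M}_\infty\leq B$ on all of $\R^d$, not merely on $[0,2\pi)^d$. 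Second, since $\Omega$ is a finite set of distinct vectors, the averages $\frac{1}{(2T)^d}\int_{[-T,T]^d}e^{i(\oomega-\oomega')\xx}\,\diff\xx=\prod_{k}\operatorname{sinc}\bigl((\oomega-\oomega')_kT\bigr)$ vanish as $T\to\infty$ whenever $\oomega\neq\oomega'$ (at least one factor decays like $1/T$ and the rest are bounded by $1$), so the Bohr mean recovers each $c_{\oomega}$ exactly, the limit--sum interchange being trivial for a finite sum, and each coefficient is then bounded by $\sup_{\R^d}\lvert f\rvert\leq B$. Your diagnosis of why the statement is delicate is also correct on both counts: the paper's own route to coefficient bounds, Eq.~\eqref{e:gtp_coeff_bound}, rests on orthogonality over $[0,2\pi)^d$, which is available only for $\Omega\subseteq\mathbb{Z}^d$ (this is precisely the error acknowledged in the paper's acknowledgments), and the operator-level bound via $\norm{\rho_{\oomega}}_1$ is indeed not obviously controlled. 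The Bohr mean sidesteps both by trading periodicity for almost-periodicity plus the global uniform bound that unitarity provides for free. As a consequence, your argument immediately yields $\calF_{\Theta,\calD,M}\subseteq\Fclasstil\subseteq\Fclasstrignew$ with $\tilde{B}=2\sqrt{\lvert\Omega\rvert}B$ for arbitrary, including non-integer, frequency spectra, which is exactly the implication the paper states conditionally on the conjecture, and it would likewise extend Corollary~\ref{corollary:gen-bound-PQCs}(b)--(d) to non-integer frequencies as anticipated there.
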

If this conjecture is correct, then it implies that $\calF_{\Theta, \calD, M} \subseteq \Fclasstil\subseteq \Fclasstrignew$ holds with $\tilde{B}= 2\sqrt{\lvert\Omega\rvert}B$ and $\norm{M}_\infty\leq B$ for general frequency spectra $\Omega=\Omega(\mathcal{D})$, because we have the inequalities $\norm{(a_0,(a_{\oomega})_{\oomega},(b_{\oomega})_{\oomega})}_2 \leq \sqrt{\lvert\Omega\rvert} \norm{(a_0,(a_{\oomega})_{\oomega},(b_{\oomega})_{\oomega})}_\infty\leq 2\sqrt{\lvert\Omega\rvert} \norm{(c_{\oomega})_{\oomega\in \Omega}}_\infty$. Thus, conditional on Conjecture~\ref{conjecture:strengthened-inclusion}, generalization bounds for $\Fclasstrignew$ with suitable $\tilde{B}=2\sqrt{\lvert\Omega\rvert}\norm{M}_\infty$ imply generalization bounds for PQCs with general encoding strategies and potentially non-integer frequency spectra.

Our bounds focus on the dependence of generalization on the frequency spectrum $\Omega$.
We obtain these bounds from bounds on the complexity of $\Fclasstrignew$, measured in terms of two complexity measures from classical learning theory, namely the \emph{Rademacher complexity} and the \emph{metric entropy}. We first recall the definitions of these important quantities and then give an overview over our results and proof strategy.

\begin{definition}[(Empirical) Rademacher complexity]\label{def:rademacher-complexities}
Let $\calZ$ be some data space, $\mathcal{F}\subseteq\mathbb{R}^\mathcal{Z}$ a function class, and $S=(\zz_1,\ldots,\zz_m)\in\mathcal{Z}^m$. The \emph{empirical Rademacher complexity} of $\mathcal{F}$ with respect to~$S$ is defined as
\begin{align}
\hat{\mathcal{R}}_{S}(\mathcal{F})\coloneqq\underset{\ssigma\sim U(\lbrace -1,1\rbrace^m)}{\mathbb{E}}\Big[\sup\limits_{f\in\mathcal{F}}\frac{1}{m}\sum\limits_{i=1}^m \sigma_i f(\zz_i)\Big],
\end{align}
where $U(\lbrace -1,1\rbrace^m)$ denotes the uniform distribution on $\lbrace -1,1\rbrace^m$. The i.i.d.~random variables $\sigma_1,\ldots,\sigma_m$ are often called \emph{Rademacher random variables}.
\end{definition}

For later use, we note that, if $\mathcal{F}\subseteq\mathcal{G}\subseteq\mathbb{R}^\mathcal{Z}$, then, for any $S\in\mathcal{Z}^m$ we have $\hat{\mathcal{R}}_{S}(\mathcal{F})\leq \hat{\mathcal{R}}_{S}(\mathcal{G})$. Next, we introduce our second complexity measure:

\begin{definition}[Covering nets, covering number, and metric entropy]\label{def:covering-numbers}
Let $(X,d)$ be a (pseudo-)metric space. Let $K\subseteq X$ and let $\varepsilon >0$. We call $N\subseteq K$ an (interior) $\varepsilon$-covering net of $K$ if for all $x \in K$ there exists a $y \in N$ such that $d(x,y)\leq \varepsilon$. 
The \emph{covering number} $\mathcal{N}(K,d,\varepsilon)$ is defined as the smallest possible cardinality of an (interior) $\varepsilon$-covering net of $K$. Finally, we define the \emph{metric entropy} $\log_2\mathcal{N}(K,d,\varepsilon)$ via a logarithm of the covering number.
\end{definition}

For our purposes, the relevant covering numbers are those of $\Fclasstrignew$ with respect to the pseudo-metrics induced by the data-dependent semi-norms $\norm{\cdot}_{2,S\rvert_x}$, which, given training data $S=\{(\xx_i,y_i)\}_{i=1}^m$, are defined as 
\begin{equation}
\norm{f}_{2,S\rvert_x}\coloneqq \sqrt{
\frac{1}{m}\sum_{i=1}^m \, \lvert f(\xx_i)\rvert^2
}.
\end{equation}

In Section~\ref{sbs:gen-bounds-trig-poly-rademacher}, we prove Rademacher complexity bounds for $\Fclasstrignew$. We do so by understanding $\Fclasstrignew$ as (a subset of) a class of functions implemented by a simple classical \emph{neural network} (NN) with a single hidden layer and with sinusoidal activation functions in the hidden layer. For such NN architectures, we can then apply already known Rademacher complexity bounds. This strategy leads to 
\begin{equation}
    \hat{\mathcal{R}}_{S\rvert_x}(\Fclass)
    \leq \hat{\mathcal{R}}_{S\rvert_x}(\Fclasstrignew) 
    \leq \tilde{\mathcal{O}}\left(\sqrt{\frac{\lvert\Omega\rvert}{m}}\right)
\end{equation}
for a training data set $S$ of size $m$, with data instances $S\rvert_x=\{x_i\}_{i=1}^m$. Here, the $\tilde{\mathcal{O}}$ refers to the asymptotic behavior as $\lvert\Omega\rvert, m\to\infty$ and hides a logarithmic dependence on $\lvert\Omega\rvert$. (As we are most interested in the dependence on $\lvert\Omega\rvert$, we also hide the dependence on $B$ and $\tilde{B}$ here.)
With these Rademacher complexity bounds at hand, we can then derive generalization guarantees for $\Fclasstrignew$, and thus $\Fclass$, using a standard generalization bound in terms of the Rademacher complexity. We obtain that for a bounded Lipschitz loss function, with probability $\geq 1-\delta$, the generalization error satisfies
\begin{equation}
    R(f)-\hat{R}_S(f) 
    \leq \tilde{\mathcal{O}}\left(\sqrt{\frac{\lvert\Omega\rvert}{m}} + \sqrt{\frac{\log(\nicefrac{1}{\delta})}{m}}\right),
\end{equation}
uniformly over $f\in\Fclasstrignew$ for training data $S$ of size $m$. Again, we emphasize the leading-order dependence on $\lvert\Omega\rvert$ and hide other parameters. We note that, without further assumptions, as in classical agnostic learning scenarios, we do not expect a better scaling with respect to $m$ than the Hoeffding-like $\sim\nicefrac{1}{\sqrt{m}}$. 

In Section~\ref{sbs:gen-bounds-trig-poly-covering}, we bound the covering number and metric entropy of $\Fclasstrignew$,
and thus of $\Fclass$.
We achieve this by constructing a covering net for $\Fclasstrignew$ from a suitable (finer-grained) covering net of the allowed vectors of Fourier coefficients. Here, we crucially use that $\lvert\Omega\rvert$ determines the dimension of the space in which we have to take these covering nets. With this reasoning, we obtain a metric entropy bound of
\begin{equation}
    \log_2\mathcal{N}(\Fclass,\norm{\cdot}_\infty,\varepsilon)
    \leq \log_2\mathcal{N}(\Fclasstrignew,\norm{\cdot}_\infty,\frac{\varepsilon}{2})
    \leq \tilde{\mathcal{O}}\left(\lvert\Omega\rvert \log(\nicefrac{1}{\varepsilon}) \right),
\end{equation}
where the $\tilde{\mathcal{O}}$ hides logarithmic dependencies on $B$ and $\lvert\Omega\rvert$. 
Given these metric entropy bounds, we then use the chaining method to derive empirical Rademacher complexity bounds. Again assuming a bounded Lipschitz loss function, this method yields, with probability $\geq 1-\delta$, a generalization error bound of 
\begin{equation}
    R(f)-\hat{R}_S(f) \leq \tilde{\mathcal{O}}\left(\sqrt{\frac{\lvert\Omega\rvert}{m}} + \sqrt{\frac{\log(\nicefrac{1}{\delta})}{m}}\right),
\end{equation}
simultaneously for all $f\in\Fclass\subseteq\Fclasstrignew$, assuming training data of size $m$ and hiding  both logarithmic terms in $\lvert\Omega\rvert$ or $\tilde{B}$ as well as dependencies on $B$, the Lipschitz constant, and the bound on the loss.
While we see that, with the above definition of 
$\Fclass$ and
$\Fclasstrignew$, the strategies of Sections~\ref{sbs:gen-bounds-trig-poly-rademacher} and \ref{sbs:gen-bounds-trig-poly-covering} lead to the same generalization bound in leading order, we nevertheless present both approaches because they yield different results if the assumption on the Fourier coefficients appearing in 
$\Fclass$ or
$\Fclasstrignew$ is changed from a $2$-norm bound to a general $p$-norm bound.

In the light of the discussion in Section~\ref{s:pqc_models}, these generalization bounds for classes of generalized trigonometric polynomials imply generalization bounds for PQCs. As we have focused on the dependence on the frequency spectrum in the former, we obtain a focus on the encoding-dependence in the latter. We provide and discuss these results in Section~\ref{s:gen-bounds-pqc}.

\subsection{Generalization bounds for generalized trigonometric polynomials via Rademacher complexity}\label{sbs:gen-bounds-trig-poly-rademacher}

We begin our analysis by stating our Rademacher complexity bound for $\Fclasstrignew$. As we will see, this bound is obtained by combining two partial results, and will lead directly to a generalization bound. For ease of notation, we write $K_i\coloneqq\max_{\oomega\in\Omega_+}\{\lvert \omega_i\rvert\}$ for $i\in\{1,\ldots,d\}$ and $K\coloneqq\sum_i K_i$.

\begin{lemma}[Rademacher complexity bounds for GTPs]\label{lemma:rademacher-complexity-trig-polys}
Let $d,m\in\mathbb{N}$. Let $S\rvert_x\in([0,2\pi)^d)^m$. Let $\Fclasstrignew$ be as defined in Eq.~\eqref{eq:trig_polys_class}. The empirical Rademacher complexity of $\Fclasstrignew$ with respect to~$S\rvert_x\coloneqq(\xx_1,\ldots,\xx_m)$ can be upper-bounded as
\begin{align}
    \hat{\mathcal{R}}_{S\rvert_x} (\Fclasstrignew) 
    &\leq \mathcal{O}\left(\frac{\min\left\{\sqrt{\log(2d)}\max\{K,\tilde{B}\sqrt{\lvert\Omega\rvert} \}, \tilde{B}\sqrt{\lvert\Omega\rvert\log(\lvert\Omega\rvert)}\right\}}{\sqrt{m}} \right).
\end{align}
\end{lemma}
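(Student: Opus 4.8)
The plan is to exploit the observation, emphasized in the overview of this section, that $\Fclasstrignew$ is (contained in) the class of functions computed by a one-hidden-layer neural network with sinusoidal activations: the hidden units compute $\xx\mapsto\cos(\oomega\xx)$ and $\xx\mapsto\sin(\oomega\xx)$ for $\oomega\in\Omega_+$ (together with a constant unit $\tfrac12$), the first-layer weights are the fixed frequency vectors $\oomega$, and the only free (output-layer) weights are the coefficients $(a_0,(a_\oomega)_\oomega,(b_\oomega)_\oomega)$, constrained in $2$-norm by $\tilde{B}$. The two arguments of the minimum in the statement then arise as two separate upper bounds on $\hat{\mathcal{R}}_{S\rvert_x}(\Fclasstrignew)$, obtained by controlling this network in two different ways; the lemma follows by taking whichever is smaller.

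For the bound $\tilde{B}\sqrt{\lvert\Omega\rvert\log\lvert\Omega\rvert}/\sqrt m$, I would first convert the $2$-norm constraint on the coefficient vector into an $\ell_1$ constraint, paying a factor $\sqrt{\lvert\Omega\rvert}$ via Cauchy--Schwarz, so that every $f\in\Fclasstrignew$ lies in $\sqrt{\lvert\Omega\rvert}\,\tilde{B}$ times the convex hull of the signed base features $\{\pm\tfrac12,\pm\cos(\oomega\cdot),\pm\sin(\oomega\cdot)\}$. Since passing to the convex hull leaves the empirical Rademacher complexity unchanged, it then suffices to bound the complexity of the finite base class. Each base feature is bounded by $1$ in absolute value and there are $\mathcal{O}(\lvert\Omega\rvert)$ of them, so Massart's finite-class lemma yields a complexity of order $\sqrt{\log\lvert\Omega\rvert/m}$; multiplying by the $\sqrt{\lvert\Omega\rvert}\,\tilde{B}$ prefactor gives the claimed term. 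This route is also the one that extends cleanly to a general $\ell_p$-norm constraint on the coefficients (Hölder in place of Cauchy--Schwarz), which is the reason it is worth retaining alongside the other.

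For the bound $\sqrt{\log(2d)}\max\{K,\tilde{B}\sqrt{\lvert\Omega\rvert}\}/\sqrt m$, I would instead invoke an off-the-shelf Rademacher bound for shallow networks with Lipschitz activations. The sinusoidal activations are $1$-Lipschitz, so one peels them off with the Talagrand contraction principle and is left with the empirical Rademacher complexity of the affine pre-activations $\xx\mapsto\oomega\xx$ together with the contribution of the output weights. Bounding the inner linear part through $\ell_1$-weight/$\ell_\infty$-data duality produces a maximal inequality over the $2d$ signed data coordinates, which is the source of the $\sqrt{\log(2d)}$ factor, while the coordinate-wise frequency magnitudes enter exactly through $K=\sum_i K_i$; the output-layer contribution enters through $\tilde{B}\sqrt{\lvert\Omega\rvert}$, and the two combine into the stated maximum. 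Here one must be careful that the activation is bounded but not zero at the origin (so the constant unit $a_0/2$ and the value $\cos(0)=1$ are accounted for separately), and that the first-layer weights are the fixed frequencies rather than free parameters, so that the relevant first-layer functional is precisely $K$.

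I expect the main obstacle to be this second partial bound: matching the one-hidden-layer sinusoidal network to the hypotheses of a known neural-network Rademacher bound, and in particular extracting the frequency-dependent quantity $K$ (rather than a cruder $\sqrt{\lvert\Omega\rvert}$-type quantity) as the first-layer contribution while preserving the correct $\sqrt{\log(2d)}$ dependence. The contraction step likewise needs care, since the activation is applied inside a sum over many hidden units rather than to a single output, so the peeling must be organized (e.g.\ through the convex-hull representation or a vector-valued contraction inequality) so as not to reintroduce a spurious $\log\lvert\Omega\rvert$ factor. The first partial bound, by contrast, is essentially routine once the convex-hull and Massart steps are set up.
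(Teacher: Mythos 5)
Your proposal follows essentially the same route as the paper: the $\tilde{B}\sqrt{\lvert\Omega\rvert\log\lvert\Omega\rvert}$ term is obtained exactly as in the paper's Lemma~\ref{lemma:Old_V2} (Hölder to pass from the $2$-norm bound to an $\ell_1$/convex-hull constraint over the finite class of trigonometric monomials, then Massart), and the $\sqrt{\log(2d)}\max\{K,\tilde{B}\sqrt{\lvert\Omega\rvert}\}$ term is the paper's Lemma~\ref{lemma:Old_V1}, which likewise embeds $\Fclasstrignew$ into a one-hidden-layer sinusoidal network and invokes an off-the-shelf layered-network Rademacher bound with first-layer $\ell_1$-norms bounded by $K$ and output-layer $\ell_1$-norm bounded by $\tilde{B}\sqrt{\lvert\Omega\rvert}$. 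The only cosmetic difference is that the paper sidesteps your cosine-at-the-origin concern by rewriting $a_{\oomega}\cos(\oomega\xx)+b_{\oomega}\sin(\oomega\xx)=d_{\oomega}\sin(\aalpha_{\oomega}\xx+\gamma_{\oomega})$, so the hidden units use only the antisymmetric sine activation with the phase absorbed into the bias.
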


In order to prove Lemma~\ref{lemma:rademacher-complexity-trig-polys} we state and show two partial results, namely Lemmas~\ref{lemma:Old_V1} and~\ref{lemma:Old_V2}.
These two Lemmata have slightly different proof strategies, but both are motivated by thinking of generalized trigonometric polynomials as being realized by certain neural network architectures.

\begin{lemma}[Empirical Rademacher complexity of $\Fclasstrignew$---Version $1$]\label{lemma:Old_V1}
    Let $d, m$, $S\rvert_x$, and $\Fclasstrignew$ be as in Lemma~\ref{lemma:rademacher-complexity-trig-polys}.
    Then, the empirical Rademacher complexity of $\Fclasstrignew$ with respect to~$S\rvert_x$ can be upper-bounded as
    \begin{align}
        \empRad_{S\rvert_x}(\Fclasstrignew) 
        & \leq \calO\left(\frac{1}{\sqrt{m}}\max\{K,\tilde{B}\sqrt{\lvert\Omega\rvert}\}\sqrt{\log(2d)}\right).
    \end{align}
\end{lemma}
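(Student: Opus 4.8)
The plan is to realize every $f\in\Fclasstrignew$ as the output of a one--hidden--layer neural network with sinusoidal activations, and then invoke a known Rademacher complexity bound for such architectures, as announced in the overview preceding the lemma. Concretely, I read $f(\xx)=\tfrac{a_0}{2}+\sum_{\oomega\in\Omega_+}\bigl(a_{\oomega}\cos(\oomega\xx)+b_{\oomega}\sin(\oomega\xx)\bigr)$ as a network whose hidden units compute the linear pre-activations $\xx\mapsto\oomega\xx$ (one per $\oomega\in\Omega_+$, with the frequencies playing the role of \emph{fixed} first-layer weights), pass these through the $1$-Lipschitz, sup-norm-bounded activations $\cos$ and $\sin$, and recombine them in the output layer with the coefficient vector $(a_0,(a_{\oomega})_{\oomega},(b_{\oomega})_{\oomega})$, whose Euclidean norm is constrained by $\tilde{B}$. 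The number of hidden units is $2\lvert\Omega_+\rvert+1=\lvert\Omega\rvert$, which will be the source of the $\sqrt{\lvert\Omega\rvert}$ factor.

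First I would isolate the \emph{input-layer} contribution. Each pre-activation $\xx\mapsto\oomega\xx$ has weight vector $\oomega$ with $\norm{\oomega}_1=\sum_i\lvert\omega_i\rvert\leq\sum_i K_i=K$, and the data lie in the cube $[0,2\pi)^d$, so $\norm{\xx}_\infty<2\pi$. The standard Rademacher bound for $\ell_1$-constrained linear predictors over $\ell_\infty$-bounded data then controls the first layer by $\calO\bigl(K\sqrt{\log(2d)}/\sqrt{m}\bigr)$; the $\sqrt{\log(2d)}$ factor is precisely the signature of this $\ell_1/\ell_\infty$ duality and is why the bound depends on the input dimension $d$ only logarithmically. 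To make the contraction step applicable I would first center the cosine activation, writing $\cos t=(\cos t-1)+1$ so that the activation vanishes at the origin, absorbing the resulting constant into the bias term.

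Next I would control the \emph{output-layer} contribution. Because the sinusoids are bounded in absolute value by $1$ and the output weights satisfy $\norm{(a_0,(a_{\oomega})_{\oomega},(b_{\oomega})_{\oomega})}_2\leq\tilde{B}$, a Cauchy--Schwarz/Khintchine estimate over the $\lvert\Omega\rvert$ bounded hidden units bounds this part by $\calO\bigl(\tilde{B}\sqrt{\lvert\Omega\rvert}/\sqrt{m}\bigr)$. Combining the two contributions through subadditivity of the empirical Rademacher complexity, and using $\max\{A,B\}\leq A+B\leq 2\max\{A,B\}$, yields exactly $\empRad_{S\rvert_x}(\Fclasstrignew)\leq\calO\bigl(\tfrac{1}{\sqrt{m}}\max\{K,\tilde{B}\sqrt{\lvert\Omega\rvert}\}\sqrt{\log(2d)}\bigr)$.

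The step I expect to be the main obstacle is obtaining the \emph{additive} (equivalently, $\max$) combination of the two layers rather than a multiplicative one: a naive application of the contraction lemma peels off the sinusoidal activations at the cost of multiplying the output norm $\tilde{B}$ by the first-layer complexity $K\sqrt{\log(2d)}$, producing the strictly weaker product $\tilde{B}\sqrt{\lvert\Omega\rvert}\,K\sqrt{\log(2d)}$. Avoiding this requires genuinely using the \emph{boundedness} of $\cos$ and $\sin$ --- not only their Lipschitz continuity --- so that the output layer is controlled by boundedness alone while the frequency-dependent factor $K$ enters only through the first-layer/Lipschitz analysis. Matching the hypotheses of the cited neural-network Rademacher bound (per-layer norm constraints and centered activations) to the sinusoidal setting, and verifying that the two regimes decouple, is where the real care is needed; the remaining estimates are routine.
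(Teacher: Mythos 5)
Your high-level plan coincides with the paper's: realize elements of $\Fclasstrignew$ as one-hidden-layer networks with sinusoidal activations and invoke a known network Rademacher bound. But the step you yourself flag as the obstacle --- obtaining the additive/$\max$ combination of $K\sqrt{\log(2d)}$ and $\tilde{B}\sqrt{\lvert\Omega\rvert}$ rather than their product --- is the entire content of the lemma, and the mechanism you offer for it does not work. Subadditivity of the empirical Rademacher complexity applies to sums of function \emph{classes}, $\hat{\mathcal{R}}(\mathcal{F}+\mathcal{G})\leq\hat{\mathcal{R}}(\mathcal{F})+\hat{\mathcal{R}}(\mathcal{G})$, whereas your two ``contributions'' come from composed layers of the same network; $\Fclasstrignew$ admits no decomposition as a sum of an input-layer class and an output-layer class. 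Moreover, your Cauchy--Schwarz estimate $\tilde{B}\sqrt{\lvert\Omega\rvert}/\sqrt{m}$ for the output layer is valid only when the hidden units are \emph{fixed} functions; the moment the first-layer weights are allowed to vary (which is the only reason a term $K\sqrt{\log(2d)}$ could enter), the supremum over those weights sits inside the Rademacher expectation and the two regimes no longer decouple by any generic argument --- iterated contraction then produces exactly the product you are trying to avoid. The paper's proof resolves this with two concrete moves absent from your proposal: (i) each pair $a_{\oomega}\cos(\oomega\xx)+b_{\oomega}\sin(\oomega\xx)$ is rewritten in amplitude--phase form $d_{\oomega}\sin(\aalpha_{\oomega}\xx+\gamma_{\oomega})$ with $d_{\oomega}=\sqrt{a_{\oomega}^2+b_{\oomega}^2}$, so each frequency occupies a single hidden neuron with a bias and the output-weight $2$-norm budget $\tilde{B}$ is preserved exactly; (ii) $\Fclasstrignew$ is embedded in the parametrized class $\calG^{\tilde{B}}_\Omega$ with free frequencies $\aalpha_{\oomega}\in\prod_i[-K_i,K_i]$, to which Lemma~\ref{cor:2.11Wolf} applies with the single uniform weight-norm bound $b=\max\{K,\tilde{B}\sqrt{\lvert\Omega_+\rvert}\}$ and bias bound $\max\{\pi,\tilde{B}\}$. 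The $\max$ in the conclusion is inherited directly from that cited lemma; it is not produced by a decoupling argument one supplies oneself.

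Ironically, you were one observation away from a valid (and simpler) proof along a different route: in $\Fclasstrignew$ the frequencies really are fixed, so $f(\xx)=\langle c,\Phi(\xx)\rangle$ for the fixed feature map $\Phi(\xx)=(\tfrac12,(\cos(\oomega\xx))_{\oomega},(\sin(\oomega\xx))_{\oomega})$ with $\lVert\Phi(\xx)\rVert_2\leq\sqrt{\lvert\Omega\rvert}$, and the standard bound for $2$-norm-constrained linear predictors gives $\empRad_{S\rvert_x}(\Fclasstrignew)\leq\tilde{B}\sqrt{\lvert\Omega\rvert}/\sqrt{m}$ outright, which is $\calO\bigl(\max\{K,\tilde{B}\sqrt{\lvert\Omega\rvert}\}\sqrt{\log(2d)}/\sqrt{m}\bigr)$ since $\sqrt{\log(2d)}$ is bounded below by a constant. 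Had you committed to ``the first-layer weights are fixed,'' the first-layer term would have been unnecessary. As written, however, your argument hinges on an unjustified combination step and does not constitute a proof.
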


\begin{proof}
    We prove this statement by constructing a function class that contains $\Fclasstrignew$ and whose empirical Rademacher complexity we are able to upper bound by viewing it as arising from a simple layered neural network (NN) architecture. More specifically, we consider the following class of functions 
    \begin{align}\label{eq:lemma1_V1_aux}
    \begin{split}
        \calG^{\tilde{B}}_\Omega 
        \coloneqq \Bigg\{[0,2\pi)^d&\ni\xx\mapsto \frac{d_0}{2} + \sum_{\oomega\in\Omega_+} d_{\oomega}\sin(\aalpha_{\oomega}\xx + \gamma_{\oomega})\\
        &\Bigg|\,
        \norm{(d_0,(d_{\oomega})_{\oomega})}_2\leq \tilde{B},\, \aalpha_{\oomega}\in\prod\limits_{i=1}^d [-K_i,K_i], \, \gamma_{\oomega}\in[-\pi,\pi ) \Bigg\},
    \end{split}
    \end{align}
    which can be realized by a NN with a single hidden layer of neurons with sine activation functions, and a linear activation at the output neuron.
    Here, again $(d_{\oomega})_{\oomega}$ stands for the vector $(d_{\oomega})_{\oomega\in\Omega_+}$.
    Also, note that for every $\oomega\in\Omega_+$, $\aalpha_{\oomega}$ is a $d$-dimensional vector and $\gamma_{\oomega}$ a real number.
    
    We claim that $\Fclasstrignew\subseteq\calG^{\tilde{B}}_\Omega$.
    We can prove this inclusion directly by finding the corresponding parameters $(d_0, (d_{\oomega})_{\oomega})$, $(\gamma_{\oomega})_{\oomega}$ and $(\aalpha_{\oomega})_{\oomega}$ for each element $f\in\Fclasstrignew$, specified by the corresponding $(a_0, (a_{\oomega})_{\oomega},(b_{\oomega})_{\oomega})$.
    We can find a valid assignment term by term.
    We start by noting $d_0=a_0$.
    Next, we spell out the term corresponding to the frequency vector $\oomega$ with the well-known angle sum trigonometric 
    identity
    \begin{align}
        d_{\oomega}\sin(\aalpha_{\oomega}\xx + \gamma_{\oomega}) 
        & = d_{\oomega}\cos(\gamma_{\oomega})\sin(\aalpha_{\oomega}\xx) + d_{\oomega}\sin(\gamma_{\oomega})\cos(\aalpha_{\oomega}\xx).
    \end{align}
    Now, for any given $(a_{\oomega})_{\oomega}$ and $(b_{\oomega})_{\oomega}$, we can set
    \begin{align}
        d_{\oomega} & \coloneqq \sqrt{a_{\oomega}^2 + b_{\oomega}^2},~
        \aalpha_{\oomega} \coloneqq \oomega, \text{ and }
        \gamma_{\oomega} \coloneqq \arctan(b_{\oomega}/a_{\oomega}).
    \end{align}
    At this point, it is important to confirm that the assignment is valid within the restrictions imposed in Eq.~\eqref{eq:lemma1_V1_aux}.
    To begin with, we note that the $2$-norm bound from Eq.~\eqref{e:gtp_coeff_bound}, i.e. $\norm{(a_0, (a_{\oomega})_{\oomega}, (b_{\oomega})_{\oomega})}_2\leq \tilde{B}$, translates directly into $\norm{(d_0,(d_{\oomega})_{\oomega})}_2\leq \tilde{B}$, since $d_{\oomega}^2 = a_{\oomega}^2 + b_{\oomega}^2$ for all $\oomega$.
   Additionally, one can also see that the components of $\aalpha_{\oomega}$ are nothing but the frequencies  $\omega_i$ for each data coordinate, which fall in the interval $[-K_i,K_i]$ by construction.
    Finally, as a function $\arctan$ can output any angle, choosing the branch $[-\pi,\pi )$ is valid.
    With these, we reach
    \begin{align}
        d_{\oomega}\sin(\aalpha_{\oomega}\xx + \gamma_{\oomega}) 
        & = a_{\oomega}\sin(\oomega\xx) + b_{\oomega}\cos(\oomega\xx),
    \end{align}
    which has been our goal.
    
    As $\calG^{\tilde{B}}_\Omega$ arises from a NN whose activation functions are $1$-Lipschitz, continuous and anti-symmetric, we can use Lemma~\ref{cor:2.11Wolf} (stated in Appendix~\ref{appendix:auxiliary-results}).
    For that, we require upper bounds for the $1$-norm of the weight vector going into each neuron and for the moduli of the biases.
    For every neuron in the hidden layer, there are $d$ incoming weights, one for each data dimension, corresponding to the $d$ input neurons.
    Each component of those weight vectors ($\aalpha_{\oomega}$ in Eq.~\eqref{eq:lemma1_V1_aux}) takes values in $\in[-K_i,K_i]$ for some $i\in\{1,\ldots,d\}$, so the $1$-norm of such a weight vector is upper bounded by $K$.
    
    At the output neuron, there are $\lvert\Omega_+\rvert$ incoming weights ($d_{\oomega}$ in Eq.~\eqref{eq:lemma1_V1_aux}) and we have a bound on the $2$-norm of this weight vector.
    Therefore, Hölder's inequality applied to the $2$-norm gives the $1$-norm bound
    \begin{equation}
    \norm{(d_{\oomega})_{\oomega}}_1
    \leq \tilde{B}\sqrt{\lvert\Omega_+\rvert}.
    \end{equation}
    With that, we now know that the $1$-norm of any weight vector in the NN is upper bounded by $\max\{K,\tilde{B}\sqrt{\lvert\Omega_+\rvert}\}$.
    
    Next, we note that the modulus of the biases is at most $\pi$ in the hidden layer, and  $\tilde{B}$ in the output layer. As a result, we have that the moduli of the biases in the NN are upper bounded by $\max\{\pi,\tilde{B}\}$.
    Now that we have collected all the ingredients, we can plug them into Lemma~\ref{cor:2.11Wolf} and obtain the bound
    \begin{align}
        \empRad_{S\rvert_x}(\calG^{\tilde{B}}_\Omega) 
        & \leq \frac{1}{\sqrt{m}}\left(2\pi \max\{K,\tilde{B}\sqrt{\lvert\Omega_+\rvert}\}\sqrt{2\log(2d)} + \max\{\pi,\tilde{B}\}\right) \\
        & \leq \calO\left(\frac{1}{\sqrt{m}}\max\{K,\tilde{B}\sqrt{\lvert\Omega\rvert}\}\sqrt{\log(2d)}\right), 
    \end{align}
    where the $\calO$ notation refers to the scaling in $|\Omega|$.
    As $\calG^{\tilde{B}}_\Omega$ contains $\Fclasstrignew$ as a subset, this bound directly implies
    \begin{align}
        \empRad_{S\rvert_x}(\Fclasstrignew)
        \leq \calO\left(\frac{1}{\sqrt{m}}\max\{K,\tilde{B}\sqrt{\lvert\Omega\rvert}\}\sqrt{\log(2d)}\right),
    \end{align}
    which completes the proof.
\end{proof}

In the proof of Lemma \ref{lemma:Old_V1}, we do not bound the empirical Rademacher complexity of $\Fclasstrignew$ directly, rather we embed it into a larger class $\calG^{\tilde{B}}_\Omega$ whose complexity we then bound.
However, whereas only a discrete set of frequencies is used in $\Fclasstrignew$, the class $\calG^{\tilde{B}}_\Omega$ allows for a continuum of frequencies. In Lemma \ref{lemma:Old_V2}, we modify the idea of the previous proof to avoid this overcounting of frequencies.

\begin{lemma}[Empirical Rademacher complexity of $\Fclasstrignew$---Version $2$]\label{lemma:Old_V2}
    Let $d, m$, $S\rvert_x$, and $\Fclasstrignew$ be as in Lemma~\ref{lemma:rademacher-complexity-trig-polys}.
    Then, the empirical Rademacher complexity of $\Fclasstrignew$ with respect to~$S\rvert_x$ can be upper-bounded as
    \begin{align}
        \empRad_{S\rvert_x}(\Fclasstrignew) 
        & \leq \mathcal{O}\left(\frac{\tilde{B}}{\sqrt{m}}\sqrt{\lvert\Omega\rvert\log(\lvert\Omega\rvert)}\right).
    \end{align}
\end{lemma}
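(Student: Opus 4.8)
The plan is to avoid the continuum of frequencies that inflated the Version~1 bound by treating the trigonometric terms as a \emph{fixed, finite} family of bounded feature functions, rather than as neurons whose incoming weights are the frequency vectors (which is what forced the appearance of $K=\sum_i K_i$). Concretely, I would view every $f\in\Fclasstrignew$ as a linear combination $f=\sum_{k}w_k g_k$ of the $\lvert\Omega\rvert$ fixed atoms $g_k\in\{\tfrac12\}\cup\{\cos(\oomega\,\cdot),\sin(\oomega\,\cdot)\}_{\oomega\in\Omega_+}$, with coefficient vector $w=(a_0,(a_\oomega)_\oomega,(b_\oomega)_\oomega)$ obeying $\norm{w}_2\leq\tilde B$. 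Each atom satisfies $\lvert g_k\rvert\leq 1$ pointwise, and crucially the frequencies now enter only through the fixed functions $g_k$, so the resulting bound will depend on $\lvert\Omega\rvert$ but not on $K$.

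Next I would relax the $2$-norm constraint to a $1$-norm constraint. Since there are $\lvert\Omega\rvert$ coefficients, Cauchy--Schwarz gives $\norm{w}_1\leq\sqrt{\lvert\Omega\rvert}\,\norm{w}_2\leq\sqrt{\lvert\Omega\rvert}\,\tilde B=:\Lambda$. This single step is where the factor $\sqrt{\lvert\Omega\rvert}$ originates, and it is precisely the place where a change from a $2$-norm to a general $p$-norm assumption on the Fourier coefficients would alter the conclusion (explaining why this route is worth recording alongside the direct linear estimate). With this relaxation, $\Fclasstrignew$ is contained in the class $\{\sum_k w_k g_k:\norm{w}_1\leq\Lambda\}$ of $1$-norm-bounded combinations of the finite atom family.

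I would then invoke the standard duality/comparison step: for fixed Rademacher signs, $\sup_{\norm{w}_1\leq\Lambda}\sum_k w_k\bigl(\tfrac1m\sum_i\sigma_i g_k(\xx_i)\bigr)=\Lambda\max_k\lvert\tfrac1m\sum_i\sigma_i g_k(\xx_i)\rvert$, so the empirical Rademacher complexity is at most $\Lambda$ times the complexity of the symmetrized finite family $\{g_k\}\cup\{-g_k\}$, of cardinality at most $2\lvert\Omega\rvert$. Applying Massart's finite-class lemma to this family, and using $\lvert g_k\rvert\leq 1$ so that each evaluation vector has Euclidean norm at most $\sqrt{m}$, yields a bound of order $\sqrt{\log(2\lvert\Omega\rvert)/m}$. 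This is the step that produces the logarithmic factor $\sqrt{\log\lvert\Omega\rvert}$.

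Combining the pieces gives $\empRad_{S\rvert_x}(\Fclasstrignew)\leq\Lambda\cdot\calO\bigl(\sqrt{\log\lvert\Omega\rvert/m}\bigr)=\calO\bigl(\tfrac{\tilde B}{\sqrt m}\sqrt{\lvert\Omega\rvert\log\lvert\Omega\rvert}\bigr)$, as claimed. I expect the only genuinely delicate point to be bookkeeping rather than conceptual: one must symmetrize the atom family correctly so that the supremum over the $\ell_1$-ball, which involves $\max_k\lvert\cdot\rvert$, is matched by a Massart bound over a finite set containing both signs, and one must check that the constant atom $\tfrac12$ fits the same counting (it only improves the norm bounds). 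The key feature is that no step reintroduces the frequency magnitudes $K_i$, which is exactly the improvement over Version~1.
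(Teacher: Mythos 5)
Your proposal is correct and follows essentially the same route as the paper: the paper also passes from the $2$-norm to a $1$-norm bound $\norm{(b_0,\ww)}_1\leq\tilde{B}\sqrt{\lvert\Omega\rvert}$ via H\"older and then bounds the Rademacher complexity of the finite family of trigonometric monomials by Massart's lemma, yielding the $\sqrt{\log\lvert\Omega\rvert}$ factor. The only cosmetic difference is that the paper packages your explicit $\ell_1$-duality/symmetrization step as an application of a ``single neuron with identity activation'' lemma (Lemma~\ref{thm:2.15Wolf}) and handles the constant term as a bias rather than as an atom.
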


\begin{proof}

    Analogously to the proof of Lemma~\ref{lemma:Old_V1}, we provide an empirical Rademacher complexity upper bound for a larger function class $\Tilde{\calH}_\Omega^{\tilde{B}}$.
    Along the way, we see that the inclusion $\Fclasstrignew\subseteq\Tilde{\calH}_\Omega^{\tilde{B}}$ holds, so that the uniform bound we derive for the larger set is immediately inherited for the smaller one. We start by defining an auxiliary set of functions: let $\calM_{\Omega}$ be the set of generalized trigonometric monomials over $\R^d$ with frequency values in $\Omega_+$, defined as 
    \begin{align}
        \calM_{\Omega}
        & \coloneqq \{0\}\cup\left\{[0,2\pi)^d\ni\xx\mapsto\cos(\oomega\xx) \,|\, \oomega\in\Omega_+\right\} \cup \left\{[0,2\pi)^d\ni\xx\mapsto\sin(\oomega\xx) \,|\, \oomega\in\Omega_+\right\}.
    \end{align}
    Now, recalling that $|\Omega|=2|\Omega_+|+1$, we can define the function class of our current interest as
    \begin{align}
    \begin{split}
        \Tilde{\calH}_\Omega^{\tilde{B}} 
         \coloneqq \Bigg\{[0,2\pi)^d& \ni \xx \mapsto b_0 + \left\langle\ww,\Vec{h}(\xx)\right\rangle \,\\
        &\Bigg|\,
        \Vec{h}\in(\calM_{\Omega})^{\lvert\Omega\rvert}, \text{ and }
        b_0\in\R,\, \ww\in\R^{\lvert\Omega\rvert} \text{ such that } \lVert(b_0, \ww)\rVert_2 \leq \tilde{B}\Bigg\},
    \end{split}
    \end{align}
    where we use the notation $\langle\cdot,\cdot\rangle$ for the standard inner product. Notice how $\Tilde{\calH}_\Omega^{\tilde{B}}$ can be seen as a class of functions implemented by a single neuron with identity activation and $2$-norm bounded weights, where the input signals have been pre-processed by functions from the specified class $\calM_\Omega$.
    With this, we note the inclusion $\Fclasstrignew\subseteq\Tilde{\calH}_\Omega^{\tilde{B}}$.
    
    Next, we use Lemma~\ref{thm:2.15Wolf} (stated in Appendix~\ref{appendix:auxiliary-results}).
    To use the result, we note that the activation function of the neuron is the identity $x\mapsto x$ (which is a $1$-Lipschitz, anti-symmetric function); that $\calM_\Omega$ contains the $0$-function; that the modulus of the bias is upper bounded by $\tilde{B}$; and that we can again use Hölder's inequality applied to the $2$-norm to upper bound the $1$-norm of the weight vector as $\lVert(b_0,\ww)\rVert_1\leq \sqrt{\lvert\Omega\rvert} \lVert(b_0,\ww)\rVert_2 \leq \tilde{B}\sqrt{\lvert\Omega\rvert}.$
    With these, Lemma~\ref{thm:2.15Wolf} gives us the upper bound
    \begin{align}
        \empRad_{S\rvert_x}(\Tilde{\calH}_\Omega^{\tilde{B}}) 
        & \leq \frac{\tilde{B}}{\sqrt{m}} + 2\cdot \tilde{B}\sqrt{\lvert\Omega\rvert}\, \,\empRad_{S\rvert_x}(\calM_\Omega).\label{eq:rademacher-bound-before-massart}
    \end{align}
    Hence, in order to proceed we need to find an upper bound for the empirical Rademacher complexity of $\calM_\Omega$.
    
    We apply Massart's Lemma (which we recall as Lemma \ref{lemma:Massart} in Appendix~\ref{appendix:auxiliary-results} for completeness) for this last step.
    Let $A$ be the set of generalized trigonometric monomials with frequencies in $\Omega_+$, evaluated on every element of $S\rvert_x=(\xx_1,\ldots,\xx_m)$, i.e., 
    \begin{align}
        A 
        & \coloneqq \left\{\left(0_{\vphantom{1}},\ldots,0\right)\right\}\cup\left\{\left(\cos({\oomega}\xx_1), \ldots, \cos({\oomega}\xx_m)\right) \,|\, {\oomega}\in\Omega_+\right\} \cup \left\{
        \left(\sin({\oomega}\xx_1), \ldots, \sin({\oomega}\xx_m)\right) \,|\, {\oomega}\in\Omega_+\right\} \subseteq\R^m.
    \end{align}
    Note that, by Hölder's inequality, again applied to the $2$-norm, and since sine and cosine take values in $[-1,1]$, we have that $A\subseteq \mathcal{B}_{\sqrt{m}}(0)$, where $\mathcal{B}_r(\cc)$ is the ball of radius $r$ in $2$-norm centered at $\cc$.
    Now, we can rewrite the empirical Rademacher complexity and apply Massart's lemma (Lemma~\ref{lemma:Massart}) to get
    \begin{align}
        \empRad_{S\rvert_x}(\calM_\Omega) & \coloneqq \bbE_\sigma\left[\sup_{h\in\calM_\Omega} \frac{1}{m} \sum_{i=1}^m \sigma_i\,h(\xx_i)\right] \\
         & = \bbE_\sigma\left[\sup_{\aa\in A} \frac{1}{m}\ssigma\aa\right]\\
         & \leq \frac{\sqrt{m}}{m} \sqrt{2\log(|A|)} \\
         & \leq \frac{1}{\sqrt{m}}\sqrt{2\log(\lvert\Omega\rvert)}.
    \end{align}
    Plugging this into Eq.~\eqref{eq:rademacher-bound-before-massart}, we obtain
    \begin{align}
        \empRad_{S\rvert_x}(\Tilde{\calH}_\Omega^{\tilde{B}}) 
        & \leq \frac{\tilde{B}}{\sqrt{m}} + 2\cdot \tilde{B}\sqrt{\lvert\Omega\rvert} \cdot\frac{1}{\sqrt{m}} \sqrt{2\log(\lvert\Omega\rvert)} \\
        & \leq \mathcal{O}\left(\frac{\tilde{B}}{\sqrt{m}}\sqrt{\lvert\Omega\rvert\log(\lvert\Omega\rvert)}\right).
    \end{align}
    Recalling again that $\Fclasstrignew\subseteq \Tilde{\calH}_\Omega^{\tilde{B}}$ then yields the claimed bound.
\end{proof}

\begin{proof}[Proof of Lemma~\ref{lemma:rademacher-complexity-trig-polys}]
    This follows directly from combining Lemmas~\ref{lemma:Old_V1} and~\ref{lemma:Old_V2}.
\end{proof}

With this Rademacher complexity bound at hand, we can make use of standard tools from classical statistical learning theory to derive a generalization bound.

\begin{theorem}[Generalization bound for GTPs---Version $1$]\label{theorem:rademacher-gen-bound-trig-polys}
Let $d,m\in\mathbb{N}$. Let $\Fclasstrignew$ be as defined in Eq.~\eqref{eq:trig_polys_class}. Let $\ell:\R\times\R\to [0,c]$ be a bounded loss function such that~$\mathbb{R}\ni z\mapsto\ell(y,z)$ is $L$-Lipschitz for all $y\in\mathbb{R}$. 
For any $\delta\in (0,1)$ and for any probability measure $P$ on $[0,2\pi)^d\times \R$, with probability $\geq 1-\delta$ over the choice of i.i.d.~training data $S=\{(\xx_i,y_i)\}_{i=1}^m\in ([0,2\pi)^d\times\R)^m$ of size $m$, for every $f\in\Fclasstrignew$, the generalization error can be upper-bounded as
\begin{align}
    R(f) - \hat{R}_S(f)
    &\leq \mathcal{O}\left( \frac{L\min\left\{\max\{K,\tilde{B}\sqrt{\lvert\Omega\rvert}\}\sqrt{\log(2d)}, \tilde{B}\sqrt{\lvert\Omega\rvert\log(\lvert\Omega\rvert)} \right\}}{\sqrt{m}} + \frac{\sqrt{\log(\nicefrac{1}{\delta})}}{\sqrt{m}}\right).
\end{align}
\end{theorem}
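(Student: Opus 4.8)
The plan is to derive the stated bound by a routine reduction: combine the empirical Rademacher complexity bound of Lemma~\ref{lemma:rademacher-complexity-trig-polys} with a standard, off-the-shelf generalization bound phrased in terms of the Rademacher complexity of the loss-composed hypothesis class. Concretely, I would first recall the textbook result (see, e.g., Ref.~\cite{MohriRostamizadehTalwalkar18}) that for any class $\mathcal{G}$ of functions on the data space taking values in $[0,c]$, with probability at least $1-\delta$ over $S\sim P^m$,
\begin{equation}
\sup_{g\in\mathcal{G}}\left(\mathbb{E}_P[g] - \frac{1}{m}\sum_{i=1}^m g(z_i)\right) \leq 2\,\empRad_S(\mathcal{G}) + 3c\sqrt{\frac{\log(\nicefrac{2}{\delta})}{2m}}.
\end{equation}
I would then apply this with $\mathcal{G}$ the loss-composed class $\mathcal{G}\coloneqq\{(\xx,y)\mapsto \ell(y,f(\xx)) \mid f\in\Fclasstrignew\}$, whose members take values in $[0,c]$ by hypothesis. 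Since $\mathbb{E}_P[g]=R(f)$ and the empirical average of $g$ equals $\hat{R}_S(f)$ by the definitions in Eqs.~\eqref{e:expected_risk} and~\eqref{e:empirical_risk}, the left-hand side is exactly the supremum of the generalization gap over $\Fclasstrignew$, which furnishes the uniform ``for all $f$'' statement required by the theorem.

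The step that does the real work is relating $\empRad_S(\mathcal{G})$ back to $\empRad_{S\rvert_x}(\Fclasstrignew)$. For this I would invoke Talagrand's contraction (Lipschitz-composition) lemma: since, by assumption, $z\mapsto\ell(y,z)$ is $L$-Lipschitz uniformly in $y$, composition with $\ell$ contracts the empirical Rademacher complexity by at most a factor $L$, giving $\empRad_S(\mathcal{G})\leq L\,\empRad_{S\rvert_x}(\Fclasstrignew)$. Substituting the bound from Lemma~\ref{lemma:rademacher-complexity-trig-polys} for $\empRad_{S\rvert_x}(\Fclasstrignew)$ and collecting the $c$-dependent and numerical constants into the $\mathcal{O}$-notation then produces the advertised expression, with the term $L\min\{\ldots\}/\sqrt{m}$ coming from the contraction step and the term $\sqrt{\log(\nicefrac{1}{\delta})}/\sqrt{m}$ from the concentration tail above.

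I do not expect a genuine obstacle here, as every ingredient is already in place: the heavy lifting sits in Lemma~\ref{lemma:rademacher-complexity-trig-polys}, which holds deterministically for every fixed $S\rvert_x$ and therefore composes cleanly with the probabilistic bound above. The only points demanding care are bookkeeping ones. First, one must ensure the contraction lemma is applied with the Lipschitz constant taken uniformly over the label argument $y$, which is precisely the hypothesis placed on $\ell$; some formulations of the lemma also introduce a harmless multiplicative constant, which is absorbed into the $\mathcal{O}$. Second, one must confirm that the $\min$ over the two Rademacher estimates of Lemmas~\ref{lemma:Old_V1} and~\ref{lemma:Old_V2} is preserved through the contraction step, which it is, since the factor $L$ multiplies the entire bound and hence commutes with the minimum. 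Finally, the loss bound $c$ and the Lipschitz constant $L$ should be tracked into the claimed positions in the final expression, with $L$ surviving as an explicit prefactor and $c$ suppressed inside the $\mathcal{O}$.
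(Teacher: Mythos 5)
Your proposal matches the paper's proof essentially step for step: the paper likewise defines the loss-composed class $\mathcal{G}$, invokes the standard uniform Rademacher generalization bound (Theorem $3.3$ of Ref.~\cite{MohriRostamizadehTalwalkar18}) to get $R(f)-\hat{R}_S(f)\leq 2\hat{\mathcal{R}}_S(\mathcal{G})+3c\sqrt{\log(\nicefrac{2}{\delta})/2m}$, applies Talagrand's contraction lemma to obtain $\hat{\mathcal{R}}_S(\mathcal{G})\leq L\hat{\mathcal{R}}_{S\vert_x}(\Fclasstrignew)$, and substitutes Lemma~\ref{lemma:rademacher-complexity-trig-polys}. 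Your bookkeeping remarks (uniformity of $L$ in $y$, the $\min$ commuting with the factor $L$, absorbing $c$ into the $\mathcal{O}$) are all consistent with what the paper does, so the proposal is correct and takes the same route.
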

\begin{proof}
The proof of this theorem consists in combining the standard generalization bound in terms of Rademacher complexity with the Rademacher complexity bounds from Lemma \ref{lemma:rademacher-complexity-trig-polys}. More precisely, we define $\smash{\mathcal{G}\subseteq [0,c]^{[0,2\pi)^d\times\mathbb{R}}}$ to be the class of functions that can be obtained by post-composing elements of $\Fclasstrignew$ with the loss function $\ell$ -- i.e. we define
\begin{align}
    \mathcal{G}
    &\coloneqq \left\{ [0,2\pi)^d\times\mathbb{R}\ni (\xx,y)\mapsto \ell(y,f(\xx))~|~f\in\Fclasstrignew \right\}.
\end{align}
We then have the following generalization bound (see, e.g., Theorem $3.3$ in
Ref.~\cite{MohriRostamizadehTalwalkar18} or Theorem $1.15$ in
Ref.~\cite{Wolf.2020}): For any probability measure $P$ on $[0,2\pi)^d\times \mathbb{R}$ and for any $\delta>0$, with probability $\geq 1-\delta$ over the choice of an i.i.d.~training data set $S=\{(\xx_i,y_i)\}_{i=1}^m\in([0,2\pi)^d\times\R)^m$ of size $m$ drawn according to $P$, we have, for every $g\in\mathcal{G}$,
\begin{align}
    \mathbb{E}_{(\xx,y)\sim P} [g(\xx,y)] - \frac{1}{m}\sum\limits_{i=1}^m g(\xx_i,y_i)
    &\leq 2\hat{\mathcal{R}}_S (\mathcal{G}) + 3c\sqrt{\frac{\log(\nicefrac{2}{\delta})}{2m}}.\label{eq:standard-rademacher-gen-bound}
\end{align}
Note that, when writing $g\in\mathcal{G}$ as $g(\xx,y)=\ell(y,f(\xx))$ for some $f\in\Fclasstrignew$, we directly have
\begin{align}
    \mathbb{E}_{(\xx,y)\sim P} [g(\xx,y)] - \frac{1}{m}\sum\limits_{i=1}^m g(\xx_i,y_i)
    &= R(f)-\hat{R}_S(f).
\end{align}
That is, Eq.~\eqref{eq:standard-rademacher-gen-bound} indeed provides a high-probability bound on the generalization error. Therefore, we now upper-bound the empirical Rademacher complexity $\hat{\mathcal{R}}_S (\mathcal{G})$. To this end, we use Talagrand's Lemma (going back to Ref.~\cite{ledoux1991probability}) and our bounds for the empirical Rademacher complexity of $\Fclasstrignew$.
As we assume that $\mathbb{R}\ni z\mapsto\ell(y,z)$ is $L$-Lipschitz for all $y\in\mathbb{R}$, we can apply Talagrand's Lemma (Lemma \ref{lemma:talagrand}) and Lemma \ref{lemma:rademacher-complexity-trig-polys} to obtain 
\begin{align}
    \hat{\mathcal{R}}_S (\mathcal{G})
    &= \frac{1}{m}\mathbb{E}_\sigma\left[\sup\limits_{g\in\mathcal{G}} \sum\limits_{i=1}^m \sigma_i g(x_i,y_i)\right]\\
    &= \frac{1}{m}\mathbb{E}_\sigma\left[\sup\limits_{f\in\Fclasstrignew} \sum\limits_{i=1}^m \sigma_i \ell(y_i,f(x_i))\right]\\
    &\leq \frac{L}{m}\mathbb{E}_\sigma\left[\sup\limits_{f\in\Fclasstrignew} \sum\limits_{i=1}^m \sigma_i f(x_i)\right]\\
    &= L\hat{\mathcal{R}}_{S\vert_x} (\Fclasstrignew)\\
    &\leq \mathcal{O}\left(L\frac{\min\left\{\sqrt{\log(2d)}\max\{K,\tilde{B}\sqrt{\lvert\Omega\rvert}\}, \tilde{B}\sqrt{\lvert\Omega\rvert\log(\lvert\Omega\rvert)}\right\}}{\sqrt{m}} \right),
\end{align}
where we have denoted by $S\vert_x\coloneqq\{\xx_i\}_{i=1}^m$ the set of unlabeled training data points.
Inserting this bound into Eq.~\eqref{eq:standard-rademacher-gen-bound} now gives the stated generalization error bound.
\end{proof}

The generalization bound of Theorem \ref{theorem:rademacher-gen-bound-trig-polys} can be rewritten as an upper bound on the number of labeled training examples that suffice to guarantee small generalization error.

\begin{corollary}[Number of labeled training 
examples sufficient for a small generalization error---Version $1$]
 \label{corollary:rademacher-sample-complexity-bound-trig-polys}
For any $\varepsilon,\delta\in (0,1)$ and for any probability measure $P$ on $[0,2\pi)^d\times \mathbb{R}$, a training data size
\begin{align}
    m = m(\varepsilon,\delta)
    &\leq \mathcal{O}\left( \frac{L^2\min\left\{\max\{K^2,\tilde{B}^2\lvert\Omega\rvert\}\log(2d), \tilde{B}^2\lvert\Omega\rvert\log(\lvert\Omega\rvert) \right\}}{\varepsilon^2} + \frac{c^2\log(\nicefrac{1}{\delta})}{\varepsilon^2}\right)
\end{align}
suffices to guarantee that, with probability $\geq 1-\delta$ over the choice of i.i.d.~training data $S\in([0,2\pi)^d\times \mathbb{R})^m$ of size $m$, $R(f)- \hat{R}_S(f)\leq\varepsilon$ holds for every $f\in\Fclasstrignew$. 
\end{corollary}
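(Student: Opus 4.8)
The plan is to invert the generalization bound of Theorem~\ref{theorem:rademacher-gen-bound-trig-polys}, following exactly the recipe sketched around Eqs.~\eqref{e:g_upper} and~\eqref{e:sc}: treat the right-hand side of that bound as the function $g(\tilde{B},\Omega,m,\delta)$, impose $g\leq\varepsilon$, and solve for $m$. Concretely, Theorem~\ref{theorem:rademacher-gen-bound-trig-polys} guarantees, with probability $\geq 1-\delta$, a bound of the shape
\[
R(f)-\hat{R}_S(f) \;\leq\; C_1\,\frac{L\,\Phi}{\sqrt{m}} \;+\; C_2\,\frac{c\sqrt{\log(\nicefrac{1}{\delta})}}{\sqrt{m}},
\]
where $C_1,C_2$ are absolute constants, $c$ bounds the loss, and
\[
\Phi \;=\; \min\Big\{\max\{K,\tilde{B}\sqrt{\lvert\Omega\rvert}\}\sqrt{\log(2d)},\; \tilde{B}\sqrt{\lvert\Omega\rvert\log(\lvert\Omega\rvert)}\Big\}.
\]
Here the factor $c$ in the second summand, which is suppressed inside the $\mathcal{O}$ of the theorem statement, originates from the $3c\sqrt{\log(\nicefrac{2}{\delta})/2m}$ term of the standard Rademacher generalization bound, Eq.~\eqref{eq:standard-rademacher-gen-bound}.

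First I would invoke the elementary observation that, to force $a+b\leq\varepsilon$, it suffices to force $a\leq\varepsilon/2$ and $b\leq\varepsilon/2$ separately. Applying this to the two summands above and solving each inequality for $m$ gives the two sufficient conditions
\[
m \;\geq\; \frac{4C_1^2\,L^2\,\Phi^2}{\varepsilon^2}
\qquad\text{and}\qquad
m \;\geq\; \frac{4C_2^2\,c^2\log(\nicefrac{1}{\delta})}{\varepsilon^2}.
\]
Any $m$ at least as large as the sum of the two right-hand sides (equivalently, their maximum up to a factor of two) then guarantees $R(f)-\hat{R}_S(f)\leq\varepsilon$ uniformly over $f\in\Fclasstrignew$.

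The only bookkeeping step is to simplify $\Phi^2$. Since $t\mapsto t^2$ is nondecreasing on $[0,\infty)$, squaring commutes with both $\min$ and $\max$ of nonnegative arguments, so
\[
\Phi^2 \;=\; \min\Big\{\max\{K^2,\tilde{B}^2\lvert\Omega\rvert\}\log(2d),\; \tilde{B}^2\lvert\Omega\rvert\log(\lvert\Omega\rvert)\Big\},
\]
which is precisely the combinatorial factor appearing in the statement. Adding the two lower bounds on $m$ and absorbing the absolute constants $C_1,C_2$ (and the factors of $4$) into the $\mathcal{O}$ yields exactly the claimed bound.

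I do not anticipate a genuine obstacle: the corollary is a direct algebraic inversion of an already-established bound, and the single point requiring a moment's care is the commutation of the square with the nested $\min$/$\max$, which is immediate from monotonicity. Because the statement asserts only an \emph{upper} bound on a \emph{sufficient} sample size (in the sense of Eq.~\eqref{e:sc}), no matching lower bound or tightness argument is required.
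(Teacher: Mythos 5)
Your proposal is correct and is exactly the paper's argument: the paper's proof is the one-line statement that one sets the bound of Theorem~\ref{theorem:rademacher-gen-bound-trig-polys} equal to $\varepsilon$ and solves for $m$, which is precisely the inversion you carry out (your extra care with splitting the two summands and commuting the square with the nested $\min$/$\max$ is sound and just makes explicit what the paper leaves implicit).
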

\begin{proof}
We set the upper bound on the generalization error proven in Theorem \ref{theorem:rademacher-gen-bound-trig-polys} equal to $\varepsilon$ and solving for $m$.
\end{proof}

\begin{remark}\label{rmk:rademacher-p-norms}
The proof strategy for obtaining Rademacher complexity bounds of generalized trigonometric polynomials presented here easily extends beyond the case in which the $2$-norm of the vector of Fourier coefficients is assumed to be bounded. Namely, if we consider, for $1\leq p\leq\infty$, the class
\begin{align}
    \mathcal{H}_\Omega^{\tilde{B},p}
    \coloneqq \left\{[0,2\pi)^d\ni \xx\mapsto \frac{a_0}{2} + \sum\limits_{\oomega\in\Omega_+} \left(a_{\oomega}\cos(\oomega\xx) + b_{\oomega}\sin(\oomega\xx)\right)~\Bigg\vert~ \norm{(a_0,(a_{\oomega})_{\oomega},(b_{\oomega})_{\oomega})}_p \leq \tilde{B}\right\},
\end{align}
with Fourier coefficients of a bounded $p$-norm, we obtain, with essentially the same proof, an empirical Rademacher complexity bound of
\begin{align}
    \hat{\mathcal{R}}_{S\rvert_x} (\mathcal{H}_\Omega^{\tilde{B},p})
    &\leq \tilde{\mathcal{O}}\left(\frac{\tilde{B}\lvert\Omega\rvert^{\frac{1}{q}}}{\sqrt{m}} \right),
\end{align}
where $q\in [0,1]$ is the Hölder conjugate of $p$, i.e., $\nicefrac{1}{p} + \nicefrac{1}{q} = 1$, and the $\tilde{\mathcal{O}}$ hides a logarithmic dependence on $\lvert\Omega\rvert$. This, in turn, leads (for $c$-bounded $L$-Lipschitz loss) to a generalization error bound of
\begin{align}
    R(f) - \hat{R}_S(f)
    &\leq \tilde{\mathcal{O}}\left( \frac{L\tilde{B}\lvert\Omega\rvert^{\frac{1}{q}} + c\sqrt{\log(\nicefrac{1}{\delta})}}{\sqrt{m}}\right),
\end{align}
which holds with probability $\geq 1-\delta$ uniformly over $\mathcal{H}_\Omega^{\tilde{B},p}$, for training data of size $m$.
These bounds based on $p$-norms might be of independent interest. For example, depending on the structure of the trainable part of the PQC, a detailed analysis might lead to additional structural properties (such as sparsity) of the set of admissible Fourier coefficients, which could then lend themselves to an analysis in terms of $p$-norms for $p\neq 2$.
\end{remark}

\subsection{Generalization bounds for generalized trigonometric polynomials via covering numbers}\label{sbs:gen-bounds-trig-poly-covering}

Similarly to
Section~\ref{sbs:gen-bounds-trig-poly-rademacher}, we first prove a bound on a complexity measure for the 
hypothesis 
class 
$\Fclass$ and then derive a generalization bound from it.
This subsection differs from the previous one in that we discuss a different complexity measure, covering numbers.

\begin{lemma}[Covering number bound for GTPs]\label{lemma:covering-trig-polys}
Let $d\in\mathbb{N}$ and $\varepsilon>0$. Let 
$\Fclass$
be as defined in Eq.~
\eqref{eq:function_class_bounded_inf_norm}.
Also, let $\tilde{B}>0$ be such that $\Fclass\subseteq\Fclasstrignew$.
The $\varepsilon$-covering number of 
$\Fclass$
with respect to~$\norm{\cdot}_\infty$ can be upper-bounded as
\begin{align}
    \mathcal{N}(\Fclass,\norm{\cdot}_\infty,\varepsilon)
    &\leq \mathcal{N}(\Fclasstrignew,\norm{\cdot}_\infty,
    \nicefrac{\varepsilon}{2}
    )
    \leq \left(\frac{
    2
    \cdot 3\cdot \tilde{B}\sqrt{\lvert\Omega\rvert}}{\varepsilon}\right)^{\lvert\Omega\rvert}.
\end{align}
Therefore, the corresponding metric entropy can be upper-bounded as
\begin{align}
    \log_2 \mathcal{N}(
    \Fclass
    ,\norm{\cdot}_\infty,\varepsilon)
    &\leq \mathcal{O}\left(\lvert\Omega\rvert[\log(\tilde{B}) + \log(\lvert\Omega\rvert) + \log(\nicefrac{1}{\varepsilon})] \right).
\end{align}
\end{lemma}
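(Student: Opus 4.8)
The plan is to parametrize the generalized trigonometric polynomials in $\Fclasstrignew$ by their coefficient vectors and to push a covering net of the Euclidean coefficient ball forward to a covering net of the function class. Concretely, I would identify each $f\in\Fclasstrignew$ with its vector $\vv=(a_0,(a_\oomega)_\oomega,(b_\oomega)_\oomega)\in\R^{\lvert\Omega\rvert}$, which by the defining constraint of Eq.~\eqref{eq:trig_polys_class} lies in the $2$-norm ball $B_2(\tilde B)\subseteq\R^{\lvert\Omega\rvert}$ of radius $\tilde B$. Writing $\pphi(\xx)\coloneqq(\tfrac12,(\cos(\oomega\xx))_\oomega,(\sin(\oomega\xx))_\oomega)$ for the associated feature vector, we have $f(\xx)=\langle\vv,\pphi(\xx)\rangle$, so two functions $f_\vv,f_\ww$ satisfy $f_\vv(\xx)-f_\ww(\xx)=\langle\vv-\ww,\pphi(\xx)\rangle$.

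The key quantitative step is a Lipschitz estimate relating the $\infty$-distance of functions to the $2$-distance of coefficient vectors. By Cauchy--Schwarz,
\begin{align}
    \lvert f_\vv(\xx)-f_\ww(\xx)\rvert \leq \norm{\vv-\ww}_2\,\norm{\pphi(\xx)}_2,
\end{align}
and since $\norm{\pphi(\xx)}_2^2=\tfrac14+\lvert\Omega_+\rvert=\tfrac14+\tfrac{\lvert\Omega\rvert-1}{2}\leq\lvert\Omega\rvert$ uniformly in $\xx$, taking the supremum over $\xx$ yields $\norm{f_\vv-f_\ww}_\infty\leq\sqrt{\lvert\Omega\rvert}\,\norm{\vv-\ww}_2$. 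Thus any $2$-norm $\eta$-net of $B_2(\tilde B)$ maps to an $\infty$-norm $\sqrt{\lvert\Omega\rvert}\,\eta$-net of $\Fclasstrignew$.

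To cover $\Fclasstrignew$ to scale $\nicefrac{\varepsilon}{2}$, I would therefore choose $\eta=\tfrac{\varepsilon}{2\sqrt{\lvert\Omega\rvert}}$ and invoke the standard volumetric covering bound $\mathcal{N}(B_2(\tilde B),\norm{\cdot}_2,\eta)\leq(\nicefrac{3\tilde B}{\eta})^{\lvert\Omega\rvert}$ for the $\lvert\Omega\rvert$-dimensional Euclidean ball. Substituting $\eta$ gives exactly $(\nicefrac{6\tilde B\sqrt{\lvert\Omega\rvert}}{\varepsilon})^{\lvert\Omega\rvert}$, where the advertised constant $2\cdot 3$ decomposes transparently as the factor $2$ from passing to scale $\nicefrac{\varepsilon}{2}$ and the factor $3$ from the volumetric bound. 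The first inequality of the claim, $\mathcal{N}(\Fclass,\norm{\cdot}_\infty,\varepsilon)\leq\mathcal{N}(\Fclasstrignew,\norm{\cdot}_\infty,\nicefrac{\varepsilon}{2})$, then follows from the general fact that, whenever $K'\subseteq K$, an interior $\nicefrac{\varepsilon}{2}$-net of $K$ can be converted into an interior $\varepsilon$-net of $K'$ of no larger cardinality, by replacing each net point lying within $\nicefrac{\varepsilon}{2}$ of $K'$ by such a nearby point of $K'$ and applying the triangle inequality; here we use $\Fclass\subseteq\Fclasstrignew$. Finally, the metric-entropy bound is obtained by taking $\log_2$, giving $\lvert\Omega\rvert[\log_2(6\tilde B)+\tfrac12\log_2\lvert\Omega\rvert+\log_2(\nicefrac1\varepsilon)]=\calO(\lvert\Omega\rvert[\log(\tilde B)+\log(\lvert\Omega\rvert)+\log(\nicefrac1\varepsilon)])$.

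The computations are all routine; the main things to get right are the uniform estimate $\norm{\pphi(\xx)}_2\leq\sqrt{\lvert\Omega\rvert}$ (so that the $\sqrt{\lvert\Omega\rvert}$ factor, and hence the $\eta$ one must select, is correct) and the bookkeeping of the two constants so that they combine to the stated $6$. One minor point to flag is the regime of validity: the volumetric bound in the clean form $(\nicefrac{3\tilde B}{\eta})^{\lvert\Omega\rvert}$ presumes $\eta\leq\tilde B$; for larger $\varepsilon$ a single function covers the whole class and the covering number is trivially $1$, so the stated bound is to be read in the relevant small-$\varepsilon$ regime.
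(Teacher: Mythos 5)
Your proposal is correct and follows essentially the same route as the paper: push a volumetric $\eta$-net of the coefficient ball $B_2(\tilde B)\subseteq\R^{\lvert\Omega\rvert}$ forward to an $\norm{\cdot}_\infty$-net of $\Fclasstrignew$ via a Lipschitz estimate with factor $\sqrt{\lvert\Omega\rvert}$, then pass from $\Fclasstrignew$ to $\Fclass$ by approximate monotonicity of covering numbers at scale $\nicefrac{\varepsilon}{2}$. The only (immaterial) differences are that you obtain the factor $\sqrt{\lvert\Omega\rvert}$ by Cauchy--Schwarz on the feature vector where the paper goes through the $1$-norm of the coefficient difference and H\"older, and that you spell out the monotonicity argument the paper cites from the literature.
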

\begin{proof}
    By our assumption on the choice of $\tilde{B}$, we have $\Fclass\subseteq\Fclasstrignew$. Therefore, according to the approximate monotonicity of covering numbers (see, e.g., Exercise $4.2.10$ in~\cite{Vershynin.2018}), we have, for every $\varepsilon >0$,
    \begin{align}
        \mathcal{N}(\Fclass,\norm{\cdot}_\infty,\varepsilon)
        &\leq \mathcal{N}(\Fclasstrignew,\norm{\cdot}_\infty,\nicefrac{\varepsilon}{2}).
    \end{align}
    Thus, it remains to prove a covering number bound for $\Fclasstrignew$.
    
    Let $\calN_{\teps}$ be an~$\tilde{\varepsilon}$-covering net of the ball
    \begin{align}
        \calB & \coloneqq \left\{\xxi=(a_0,(a_{\oomega})_{\oomega\in\Omega_+},(b_{\oomega})_{\oomega\in\Omega_+})\in\R^{\lvert\Omega\rvert} \,\Big|\, \norm{\xxi}_2\leq \tilde{B}\right\}
    \end{align}
    with respect to the metric induced by $\norm{\cdot}_2$ on $\R^{\lvert\Omega\rvert}$. By definition of $\Fclasstrignew$, to every $f\in\Fclasstrignew$ we can associate a point $(a_0,(a_{\oomega})_{\oomega\in\Omega_+},(b_{\oomega})_{\oomega\in\Omega_+})\in\calB$ such that
    \begin{align}
        f(x) = \frac{a_0}{2} + \sum_{\oomega\in\Omega_+} 
        \left(
        a_{\oomega}
        \cos(\oomega\xx) + b_{\oomega} \sin(\oomega\xx)\right).
    \end{align}
    Given such a vector of coefficients $(a_0,(a_{\oomega})_{\oomega\in\Omega_+},(b_{\oomega})_{\oomega\in\Omega_+})\in\calB$ -- which, again, for the sake of notational ease, we write as $(a_0,(a_{\oomega})_{\oomega},(b_{\oomega})_{\oomega}))$, omitting the $\Omega_+$ everywhere -- we can find an element $(\Tilde{a}_0, (\Tilde{a}_{\oomega})_{\oomega\in\Omega_+}, (\Tilde{b}_{\oomega})_{\oomega\in\Omega_+})\in\calN_{\teps}$ of the cover that is $\teps$ close in $2$-norm to the coefficients of $f$, i.e., such that
    \begin{align}
        \norm{(a_0,(a_{\oomega})_{\oomega},(b_{\oomega})_{\oomega}) - (\Tilde{a}_0, (\Tilde{a}_{\oomega})_{\oomega}, (\Tilde{b}_{\oomega})_{\oomega})}_2 \leq\teps.
    \end{align}
    Define $\Tilde{f}$ as the function specified by these new coefficients,
    \begin{align}
        \Tilde{f}(x) 
        & = \frac{\tilde{a}_0}{2} + \sum_{\oomega\in\Omega_+}
        \left(
        \Tilde{a}_{\oomega} \cos(\oomega\xx) + \Tilde{b}_{\oomega} \sin(\oomega\xx)
        \right).
    \end{align}
    We now bound the infinity norm distance between $f$ and $\Tilde{f}$ in terms of the $2$-norm distance between the corresponding coefficients as
    \begin{align}
        \norm{f-\Tilde{f}}_\infty 
        & \coloneqq \sup_{\xx\in [0,2\pi)} \left\lvert f(\xx) - \Tilde{f}(\xx) \right\rvert \\
        & \leq \left\lvert \frac{a_0}{2}- \frac{\tilde{a}_0}{2} \right\rvert + \sup_{\xx} \sum_{\oomega\in\Omega_+} \left\lvert (a_{\oomega} - \Tilde{a}_{\oomega})\cos(\oomega\xx) + (b_{\oomega} - \Tilde{b}_{\oomega})\sin(\oomega\xx) \right\rvert \\
        & \leq \lvert a_0 - \Tilde{a}_0 \rvert + \sum_{\oomega\in\Omega_+}
        \left(\lvert a_{\oomega} - \Tilde{a}_{\oomega}\rvert + \lvert b_{\oomega} - \Tilde{b}_{\oomega}
        \rvert\right) \\
        & = \norm{(a_0,(a_{\oomega})_{\oomega},(b_{\oomega})_{\oomega}) - (\Tilde{a}_0,(\Tilde{a}_{\oomega})_{\oomega},(\Tilde{b}_{\oomega})_{\oomega})}_1 \\
        & \leq \sqrt{\lvert\Omega\rvert} \, \teps.
    \end{align}
    Here, we have used the triangle inequality and the fact that sine and cosine can only take values in $[-1,1]$, as well as (in the last step) Hölder's inequality with respect to the $2$-norm.
    That means, if we denote by $\calN_{\calF}$ the set of GTPs whose coefficients come from the cover $\calN_{\teps}$, i.e.,
    \begin{align}
        \calN_{\calF} 
         \coloneqq \Bigg\{[0,2\pi)^d \ni\xx\mapsto \frac{\tilde{a}_0}{2} + \sum_{\oomega\in\Omega_+} \left(a_{\oomega}\cos(\oomega\xx) + b_{\oomega}\sin(\oomega\xx)\right) 
         \Bigg|~
        (\Tilde{a}_0,(\Tilde{a}_{\oomega})_{\oomega},(\Tilde{b}_{\oomega})_{\oomega}) \in \calN_{\teps}\Bigg\},
    \end{align}
    and if we fix $\teps$ to be $\teps=\varepsilon/\sqrt{\lvert\Omega\rvert}$, then $\calN_{\calF}$ is an $\varepsilon$-covering net of $\Fclasstrignew$ with respect to~$\norm{\cdot}_\infty$.\\
    Thus, to finish the proof, it remains to upper bound the cardinality $\lvert \calN_{\calF}\rvert \leq \lvert \calN_{\teps}\rvert$.
    To obtain such a bound, we recall that we only require $\calN_{\teps}$ to be an $\teps$-cover of a $2$-norm ball of radius $\tilde{B}$ in $\mathbb{R}^{\lvert\Omega\rvert}$ with respect to~the $2$-norm. A simple volumetric argument (presented, e.g., in section $4$ of Ref.~\cite{Vershynin.2018}) shows that there exists such a $\teps$-cover $\calN_{\teps}$ of $\calB$ with cardinality 
    \begin{align}
        \lvert \calN_{\teps}\rvert  
        & \leq \left(\frac{3\cdot \tilde{B}}{\teps}\right)^{\lvert\Omega\rvert}
        = \left(\frac{3 \cdot \tilde{B}\sqrt{\lvert\Omega\rvert}}{\varepsilon}\right)^{\lvert\Omega\rvert}.
    \end{align}
    All in all, we have proven that there exists an $\varepsilon$-covering net of $\Fclasstrignew$ with respect to~$\norm{\cdot}_\infty$ whose cardinality is bounded by 
    \begin{align}
        \left(\frac{3\cdot \tilde{B}\sqrt{\lvert\Omega\rvert}}{\varepsilon}\right)^{\lvert\Omega\rvert}.
    \end{align}
    This is exactly the claimed upper bound on the $\varepsilon$-covering number of $\Fclasstrignew$, thus completing the proof.
\end{proof}
The covering number bound just established implies a generalization bound for GTPs.

\begin{theorem}[Generalization bound for generalized trigonometric polynomials---Version $2$]\label{theorem:covering-gen-bound-trig-polys}
Let $d,m\in\mathbb{N}$. Let 
$\Fclass$
be as defined in Eq.~
\eqref{eq:function_class_bounded_inf_norm}.
Let $\tilde{B}>0$ be such that $\Fclass\subseteq\Fclasstrignew$.
Let $\ell:\R\times\R\to [0,c]$ be a bounded loss function such that~$\mathbb{R}\ni z\mapsto\ell(y,z)$ is $L$-Lipschitz for all $y\in\mathbb{R}$.
For any $\delta\in (0,1)$ and for any probability measure $P$ on $[0,2\pi)^d\times \R$, with probability $\geq 1-\delta$ over the choice of i.i.d.~training data $S\in ([0,2\pi)^d\times\R)^m$ of size $m$, for every $f\in
\Fclass
$, the generalization error can be upper-bounded as
\begin{align}
    R(f) - \hat{R}_S(f)
    &\leq \mathcal{O}\left( 
    BL\sqrt{\frac{\lvert\Omega\rvert (\log(\lvert\Omega\rvert) + \log( \tilde{B}))}{m}} + c\, \sqrt{\frac{\log(\nicefrac{1}{\delta})}{m}}\right)
\end{align}
\end{theorem}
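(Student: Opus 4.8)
The plan is to convert the metric-entropy bound of Lemma~\ref{lemma:covering-trig-polys} into a bound on the empirical Rademacher complexity of $\Fclass$ via the chaining method, and then to feed this into the same standard Rademacher-based generalization bound already used in the proof of Theorem~\ref{theorem:rademacher-gen-bound-trig-polys}. Concretely, I would first reduce from the loss to the function class itself: defining $\calG\coloneqq\{(\xx,y)\mapsto \ell(y,f(\xx))\mid f\in\Fclass\}$, the standard bound gives $R(f)-\hat{R}_S(f)\leq 2\empRad_S(\calG) + 3c\sqrt{\log(\nicefrac{2}{\delta})/(2m)}$ uniformly over $\Fclass$, and Talagrand's contraction lemma (Lemma~\ref{lemma:talagrand}) together with the $L$-Lipschitz assumption yields $\empRad_S(\calG)\leq L\,\empRad_{S\rvert_x}(\Fclass)$. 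The entire task thus reduces to an empirical Rademacher complexity bound for $\Fclass$ carrying the advertised dependence on $\lvert\Omega\rvert$ and $B$.

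For that bound I would invoke Dudley's entropy integral, $\empRad_{S\rvert_x}(\Fclass)\leq \frac{12}{\sqrt{m}}\int_0^\infty\sqrt{\log_2\calN(\Fclass,\norm{\cdot}_{2,S\rvert_x},\varepsilon)}\,\diff\varepsilon$ (optionally with a small offset $\alpha$ to handle the origin, optimized at the end). The observation linking this to Lemma~\ref{lemma:covering-trig-polys} is that the data-dependent semi-norm is dominated by the sup-norm, $\norm{f}_{2,S\rvert_x}\leq\norm{f}_\infty$, so that every $\varepsilon$-covering net in $\norm{\cdot}_\infty$ is also an $\varepsilon$-covering net in $\norm{\cdot}_{2,S\rvert_x}$; hence $\calN(\Fclass,\norm{\cdot}_{2,S\rvert_x},\varepsilon)\leq\calN(\Fclass,\norm{\cdot}_\infty,\varepsilon)$ and I may substitute the sup-norm metric-entropy bound directly into the integral.

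Plugging in $\log_2\calN(\Fclass,\norm{\cdot}_\infty,\varepsilon)\lesssim\lvert\Omega\rvert\log\!\big(\tilde{B}\sqrt{\lvert\Omega\rvert}/\varepsilon\big)$ factors the integrand as $\sqrt{\lvert\Omega\rvert}\,\sqrt{\log(\tilde{B}\sqrt{\lvert\Omega\rvert}/\varepsilon)}$. Since every $f\in\Fclass$ satisfies $\norm{f}_\infty\leq B$, the class has sup-norm diameter at most $2B$ and the integrand vanishes once $\varepsilon$ exceeds this diameter, so the effective upper limit of integration is $\mathcal{O}(B)$. After the substitution $\varepsilon=\tilde{B}\sqrt{\lvert\Omega\rvert}\,u$ the remaining integral $\int_0^{\mathcal{O}(B/(\tilde{B}\sqrt{\lvert\Omega\rvert}))}\sqrt{\log(1/u)}\,\diff u$ converges at the origin and to leading order contributes $\mathcal{O}\big(B\sqrt{\log(\tilde{B}\sqrt{\lvert\Omega\rvert}/B)}\big)$. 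Collecting terms yields $\empRad_{S\rvert_x}(\Fclass)\leq\mathcal{O}\big(B\sqrt{\lvert\Omega\rvert(\log\lvert\Omega\rvert+\log\tilde{B})/m}\big)$, and combining this with the Talagrand reduction and the standard bound produces the stated result.

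The main obstacle I anticipate is the careful evaluation of the entropy integral: one must verify convergence as $\varepsilon\to 0$ (guaranteed because $\sqrt{\log(1/\varepsilon)}$ is integrable), track the $B$ coming from the upper limit of integration correctly, and confirm that the logarithmic corrections collapse into the clean $\sqrt{\log\lvert\Omega\rvert+\log\tilde{B}}$ factor rather than spawning stray powers of $\log(1/\varepsilon)$. A secondary point requiring care is the bookkeeping of the inclusion $\Fclass\subseteq\Fclasstrignew$ and the factor-of-two rescaling in Lemma~\ref{lemma:covering-trig-polys}, so that the final constants -- and in particular the $B$ versus $\tilde{B}$ dependence -- are attributed to the correct class; this becomes routine once the integral itself is in hand.
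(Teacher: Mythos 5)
Your proposal is correct and follows essentially the same route as the paper's proof: reduce to $\hat{\mathcal{R}}_{S\rvert_x}(\Fclass)$ via the standard Rademacher generalization bound plus Talagrand's contraction lemma, apply Dudley's entropy integral, dominate the empirical $\norm{\cdot}_{2,S\rvert_x}$-covering numbers by the $\norm{\cdot}_\infty$-covering numbers from Lemma~\ref{lemma:covering-trig-polys}, and evaluate the resulting $\sqrt{\log(1/\varepsilon)}$ integral over a range of length $\mathcal{O}(B)$. The only differences from the paper's write-up are cosmetic (you truncate the integral at the sup-norm diameter rather than at $\gamma_0=\sup_f\norm{f}_{2,S\rvert_x}\leq B$, and you evaluate the integral by substitution rather than via the explicit antiderivative in terms of the error function).
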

\begin{proof}
The proof consists of three steps. First, we use the chaining technique from random process theory to upper bound the (empirical) Rademacher complexity in terms of an integral over the square root of the uniform empirical metric entropy. Second, we show that the metric entropy with respect to~$\norm{\cdot}_\infty$ upper-bounds the uniform empirical metric entropy, so we can use the bound in Lemma \ref{lemma:covering-trig-polys} to upper-bound the (empirical) Rademacher complexity of generalized trigonometric polynomials. Third, we again use the standard generalization bound based on empirical Rademacher complexities.\\

Similarly to
the proof of Theorem \ref{theorem:rademacher-gen-bound-trig-polys}, we define
\begin{align}
    \mathcal{G}
    &\coloneqq \left\{[0,2\pi)^d\times\mathbb{R}\ni (\xx,y)\mapsto \ell(y,f(\xx))~\big|~f\in
    \Fclass
    \right\}.
\end{align}
Again, since we assume that $\mathbb{R}\ni z\mapsto\ell(y,z)$ is $L$-Lipschitz for all $y\in\mathbb{R}$, Talagrand's Lemma (Lemma~\ref{lemma:talagrand} in Appendix~\ref{appendix:auxiliary-results}) tells us that
\begin{align}
    \hat{\mathcal{R}}_S (\mathcal{G})
    \leq L\hat{\mathcal{R}}_{S\vert_x} (\Fclass)
    ,
\end{align}
where we have denoted by $S\vert_x\coloneqq\{\xx_i\}_{i=1}^m$ the 
unlabeled training data points.
Next, Dudley's Theorem (which we recall as Theorem \ref{theorem:dudley} 
in Appendix~\ref{appendix:auxiliary-results}), yields
\begin{align}
    \hat{\mathcal{R}}_{S\vert_x} (
    \Fclass
    )
    &\leq \frac{12}{\sqrt{m}}\int\limits_{0}^{\gamma_0}\sqrt{\log\mathcal{N}(
    \Fclass,\norm{\cdot}_{2,S\vert_x},\beta)}
    ~\mathrm{d}\beta,\label{eq:dudley-step-1}
\end{align}
where $\norm{\cdot}_{2,S\vert_x}$ is the (data-dependent) semi-norm on $\mathbb{R}^{\mathbb{R}^d}$ defined as $\norm{f}_{2,S\vert_x} := \left(\frac{1}{m}\sum_{i=1}^m \lvert f(\xx_i)\rvert^2\right)^{\nicefrac{1}{2}}$, and we have defined $\gamma_0\coloneqq\sup_{f\in
\Fclass}
\norm{f}_{2,S}$.

Now, we note that, for every $f\in
\Fclass
$, $\norm{f}_{2,S\rvert_x}\leq \norm{f}_\infty$. Therefore, we have both that $\gamma_0\leq \sup_{f\in
\Fclass}
\norm{f}_\infty\leq 
B$ and, that for every $\beta>0$, $\mathcal{N}(
\Fclass
,\norm{\cdot}_{2,S\rvert_x},\beta)\leq \mathcal{N}(
\Fclass
,\norm{\cdot}_{\infty},\beta)$. Hence, we can combine Eq.~\eqref{eq:dudley-step-1} with our covering number bound from Lemma \ref{lemma:covering-trig-polys} and further upper bound
\begin{align}
    \hat{\mathcal{R}}_{S\vert_x} (
    \Fclass
    )
    &\leq \frac{12}{\sqrt{m}}\int\limits_{0}^{\gamma_0} \sqrt{\lvert\Omega\rvert\left(\log(3\cdot \tilde{B}) + \log(\sqrt{\lvert\Omega\rvert}) + \log\left(\frac{
    2
    }{\beta}\right) \right)}~\mathrm{d}\beta\\
    &\leq \frac{12}{\sqrt{m}} \sqrt{\lvert\Omega\rvert}\left( \gamma_0\sqrt{\log(3\cdot \tilde{B}) + \frac{1}{2} \log(\lvert\Omega\rvert)} +  \int\limits_{0}^{\gamma_0} \sqrt{\log\left(\frac{
    2
    }{\beta}\right)}~\mathrm{d}\beta\right)\\
    &\leq \frac{12}{\sqrt{m}} \sqrt{\lvert\Omega\rvert}\Bigg( 
    B\sqrt{\log(3\cdot \tilde{B}) + \frac{1}{2}\log(\lvert\Omega\rvert)} \\
    &\hphantom{\leq\frac{12}{\sqrt{m}} \sqrt{\lvert\Omega\rvert}\Bigg( ~}+ 
    B\sqrt{\log\left(\frac{1}{\tilde{B}}\right)} - \frac{\sqrt{\pi}}{2}\operatorname{erf}\left(\sqrt{\log\left(\frac{1}{\tilde{B}}\right)}\right) 
    \Bigg)\\
    &\leq \mathcal{O}\left( 
    B\sqrt{\frac{\lvert\Omega\rvert (\log(\tilde{B}) + \log(\lvert\Omega\rvert))}{m}}\right),
\end{align}
where we have used the integral \begin{equation}
\int\sqrt{\log\nicefrac{1}{x}}~\mathrm{d}x = x\sqrt{\log\nicefrac{1}{x}} - (\nicefrac{\sqrt{\pi}}{2}) \cdot \operatorname{erf}(\sqrt{\log\nicefrac{1}{x}}), 
\end{equation}
with the error function defined as 
\begin{equation}
\operatorname{erf}(x):=\frac{2}{\sqrt{\pi}}\int_{0}^{x}\exp(-t^2)~\mathrm{d}t.
\end{equation}
At this point, we again have a bound on the empirical Rademacher complexity at our disposal. So, just like in the proof of Theorem \ref{theorem:rademacher-gen-bound-trig-polys}, we can now apply the standard Rademacher complexity generalization bound. This then tells us that, for any probability measure $P$ on $[0,2\pi)^d\times\mathbb{R}$ and for any $\delta>0$, with probability $\geq 1-\delta$ over the choice of an i.i.d.~training data set $S$ of size $m$, we have, for every $f\in
\Fclass
$,
\begin{align}
    R(f)- \hat{R}_S(f)
    &\leq 2\hat{\mathcal{R}}_S(\mathcal{G}) + 3c\sqrt{\frac{\log(\nicefrac{2}{\delta})}{2m}}\\
    &\leq \mathcal{O}\left( 
    BL\sqrt{\frac{\lvert\Omega\rvert (\log(\lvert\Omega\rvert) + \log( \tilde{B}))}{m}} + c\sqrt{\frac{\log(\nicefrac{1}{\delta})}{m}}\right),
\end{align}
as claimed.
\end{proof}

Also for this generalization bound, we provide the reformulation in terms of a bound on the sample size sufficient to guarantee small generalization error.

\begin{corollary}[Number of labeled training examples sufficient for a small generalization error---Version $2$]\label{corollary:covering-sample-complexity-bound-trig-polys}
Let $d\in\mathbb{N}$. Let $
\Fclass$ be as defined in Eq.~\eqref{eq:function_class_bounded_inf_norm}. Let $\tilde{B}>0$ be such that $\Fclass\subseteq\Fclasstrignew$.
Let $\ell:\R\times\R\to [0,c]$ be an $L$-Lipschitz loss function. 
For any $\varepsilon,\delta\in (0,1)$ and for any probability measure $P$ on $[0,2\pi)^d\times \mathbb{R}$, a training data size
\begin{align}
    m = m(\varepsilon,\delta)
    &\leq \mathcal{O}\left( 
    B^2 L^2 \frac{\lvert\Omega\rvert (\log(\lvert\Omega\rvert) + \log(\tilde{B}))}{\varepsilon^2} + c^2\frac{\log(\nicefrac{1}{\delta})}{\varepsilon^2}\right),
\end{align}
suffices to guarantee that, with probability $\geq 1-\delta$ over the choice of i.i.d.~training data $S\in(\mathbb{R}^d\times \mathbb{R})^m$ of size $m$, $R(f)- \hat{R}_S(f)\leq\varepsilon$ holds for every $f\in
\Fclass$.
\end{corollary}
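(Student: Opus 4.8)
The plan is to invert the high-probability generalization bound of Theorem~\ref{theorem:covering-gen-bound-trig-polys}, exactly in the spirit of the proof of Corollary~\ref{corollary:rademacher-sample-complexity-bound-trig-polys}. That theorem guarantees that, with probability $\geq 1-\delta$ over the draw of $S$, the generalization gap $R(f) - \hat{R}_S(f)$ is bounded uniformly over $f \in \Fclass$ by a quantity of the form $\mathcal{O}\bigl( BL\sqrt{\lvert\Omega\rvert(\log(\lvert\Omega\rvert) + \log(\tilde{B}))/m} + c\sqrt{\log(\nicefrac{1}{\delta})/m}\bigr)$. Since both summands are monotonically decreasing in $m$, it suffices to exhibit a threshold value $m_0$ at which this upper bound equals $\varepsilon$; for every $m \geq m_0$ the bound then stays at or below $\varepsilon$, which is precisely the sample-complexity statement to be proven.

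Concretely, I would split the target $\varepsilon$ between the two terms, requiring each to be at most $\nicefrac{\varepsilon}{2}$. Imposing $BL\sqrt{\lvert\Omega\rvert(\log(\lvert\Omega\rvert) + \log(\tilde{B}))/m} \leq \nicefrac{\varepsilon}{2}$ and squaring yields $m \geq 4B^2 L^2 \lvert\Omega\rvert(\log(\lvert\Omega\rvert) + \log(\tilde{B}))/\varepsilon^2$, while imposing $c\sqrt{\log(\nicefrac{1}{\delta})/m} \leq \nicefrac{\varepsilon}{2}$ yields $m \geq 4c^2\log(\nicefrac{1}{\delta})/\varepsilon^2$. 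Taking the sum of these two thresholds (which dominates the maximum up to a factor of two) and absorbing all numerical constants into the $\mathcal{O}$-notation produces exactly the stated bound $m = \mathcal{O}\bigl(B^2 L^2 \lvert\Omega\rvert(\log(\lvert\Omega\rvert) + \log(\tilde{B}))/\varepsilon^2 + c^2\log(\nicefrac{1}{\delta})/\varepsilon^2\bigr)$.

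There is essentially no genuine obstacle here: the argument is a routine algebraic inversion of an inequality that has already been established, with the roles of $\varepsilon$ and $m$ interchanged. The only point deserving (minor) care is the monotonicity observation — that the right-hand side of the Theorem~\ref{theorem:covering-gen-bound-trig-polys} bound is decreasing in $m$ — which is what allows a single threshold to certify the desired gap for all larger sample sizes. The probability guarantee and the uniformity over $f \in \Fclass$ are inherited verbatim from Theorem~\ref{theorem:covering-gen-bound-trig-polys}, since the present statement merely re-expresses the same inequality.
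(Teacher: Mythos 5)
Your proposal is correct and follows exactly the paper's own (one-line) proof: set the generalization bound of Theorem~\ref{theorem:covering-gen-bound-trig-polys} equal to $\varepsilon$ and solve for $m$; your added detail on splitting $\varepsilon$ between the two terms and on monotonicity in $m$ only makes the routine inversion explicit.
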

\begin{proof}
We set the upper bound on the generalization error proven in Theorem \ref{theorem:covering-gen-bound-trig-polys} equal to $\varepsilon$ and solving for $m$.
\end{proof}

\begin{remark}\label{rmk:covering-p-norms}
Our metric entropy bounds of trigonometric polynomials presented here again extend beyond the case of bounded $2$-norm of the vector of Fourier coefficients to a general bounded $p$-norm. However, if we again consider, for $1\leq p\leq\infty$, the class $\mathcal{H}_\Omega^{\tilde{B},p}$ defined in Remark \ref{rmk:rademacher-p-norms}, our proof strategy here yields essentially -- i.e., to leading order in $\lvert\Omega\rvert$ -- the same metric entropy and generalization bounds as for $p=2$. The reason is that the dimension of the space in which we take covering nets in the proof of Lemma \ref{lemma:covering-trig-polys} remains proportional to $\lvert\Omega\rvert$, independently of $p$. We only see improvements for $1\leq p<2$ in the terms depending logarithmically on $\lvert\Omega\rvert$.
Therefore, in the case of a constant $\tilde{B}$, while the proof strategies of Sections \ref{sbs:gen-bounds-trig-poly-rademacher} and \ref{sbs:gen-bounds-trig-poly-covering} give essentially the same generalization guarantees for $p=2$, the approach of Section~\ref{sbs:gen-bounds-trig-poly-rademacher} adapts nicely to the case $p<2$, whereas the reasoning of Section~\ref{sbs:gen-bounds-trig-poly-covering} is typically preferable for $p>2$.
\end{remark}

\begin{remark}\label{rmk:hoelder-continuous-loss}
The proof of Theorem \ref{theorem:covering-gen-bound-trig-polys} extends beyond Lipschitz loss functions. For example, suppose that $\mathbb{R}\ni z\mapsto\ell(y,z)$ is $\alpha$-Hölder continuous with Hölder coefficient $A>0$ for all $y\in\mathbb{R}$, where $\alpha\in (0,1)$. Then, with the notation of the above proof,
\begin{align}
    \mathcal{N}(\mathcal{G}, \norm{\cdot}_{2,S},\beta)
    \leq \mathcal{N}\left(
    \Fclass
    , \norm{\cdot}_{2\alpha,S\rvert_x},\left(\nicefrac{\beta}{A}\right)^{\nicefrac{1}{\alpha}}\right).
\end{align}
We can thus apply Dudley's Theorem to upper bound
\begin{align}
    \hat{\mathcal{R}}_S(\mathcal{G})
    &\leq \frac{12}{\sqrt{m}}\int\limits_{0}^{\tilde{\gamma}_0} \sqrt{\mathcal{N}\left(
    \Fclass
    , \norm{\cdot}_{2\alpha,S\rvert_x},\left(\nicefrac{\beta}{A}\right)^{\nicefrac{1}{\alpha}}\right)}~\mathrm{d}\beta.
\end{align}
Now, we again observe that $\norm{\cdot}_{2\alpha,S\rvert_x}\leq \norm{\cdot}_\infty$ and upper bound the covering number integral, using our result from Lemma \ref{lemma:covering-trig-polys}. The parameters of the Hölder continuity enter the final Rademacher complexity bound via a term scaling with $\sqrt{\nicefrac{\log(A)}{\alpha}}$.
\end{remark}

\section{Encoding-dependent generalization bounds for parametrized quantum circuits}\label{s:gen-bounds-pqc}

We are finally in a position to answer the questions posed in Section~\ref{s:motivation}. Recall that our first goal was to derive generalization bounds for PQC-based models which depend explicitly on architectural hyper-parameters related to the data-encoding strategy. We showed in Section~\ref{s:pqc_models} how PQC-based model classes can be viewed as a subset of generalized trigonometric polynomials (GTPs), whose set of frequencies $\Omega$ is determined solely by the data-encoding strategy $\calD$. We then derived complexity and generalization bounds for GTPs in terms of the number of different frequencies $|\Omega(\calD)|$. In order to provide explicitly encoding-dependent generalization bounds for PQC-based models, it remains to express $|\Omega(\calD)|$ in terms of the relevant architectural hyper-parameters associated with different data-encoding strategies.

To do so, we recall that the data-encoding strategy of a PQC-based model class is defined as a collection of lists of data-encoding Hamiltonians \smash{$\calD^{(i)} = \{H_j^{(i)}\}$} associated with each coordinate $x^{(i)}$. We distinguish different data-encoding strategies according to the different assumptions made on the structure of the data-encoding Hamiltonians $H \in \calD^{(i)}$. Given a particular assumption, for example that all $H$ are tensor products of Pauli operators or at most $\kappa$-local, the natural hyper-parameter associated with the data encoding strategy is the number \smash{$N=\sum_{i=1}^d |\calD^{(i)}|$} of data-encoding Hamiltonians of the assumed type. Hence, our goal in this section is to derive, for different data-encoding strategies, upper bounds on $|\Omega(\calD)|$ that depend on $N$ as well as as on other relevant properties of the data-encoding Hamiltonians (such as, e.g., the locality $\kappa$). By substituting these upper bounds on $|\Omega|$ into the GTP generalization bounds of the previous section, we then obtain generalization bounds for PQC-based model classes which depend explicitly on properties of the data-encoding strategy.

We first recall the definition of $\Omega$ from Eq.~\eqref{eqn:omega_full_definition}. If we denote the Hamiltonians of the data-encoding strategy associated with $x^{(i)}$ as \smash{$\{ H_j^{(i)} \}$}, we can group the frequencies associated with each data coordinate into a separate sumset $\Omega^{(i)}$:
\begin{align}
    \Omega(\calD) = \sum_{i=1}^{d} \sum_{j=1}^{N^{(i)}} \Omega\left( H_{j}^{(i)} \right) = \sum_{i=1}^{d} \Omega^{(i)}.
\end{align}
The frequencies belonging to the different coordinates $\{x^{(i)}\}$ are linearly independent because they were defined to be multiples of different standard basis vectors $\ee^{(i)}$. This implies that the cardinality of the full set is equal to the product of the individual cardinalities,
\begin{align}
    |\Omega| = \prod_{i=1}^d |\Omega^{(i)}|,
\end{align}
thus allowing us to multiply bounds on the cardinalities obtained for the separate data-encoding strategies, $|\Omega^{(i)}|$, to obtain a bound on $|\Omega|$.

As the underlying frequencies in $\Omega^{(i)}$ are all scalar multiples of the same basis vector $\ee^{(i)}$, the analysis of $\Omega^{(i)}$ comes down to the different frequencies generated by the Hamiltonians that are used to encode $x^{(i)}$. For a given single Hamiltonian $H$, we denote this set by
\begin{align}
    \Delta(H) \coloneqq \{ \lambda_i - \lambda_j \, | \, \lambda_i, \lambda_j \in \operatorname{spec}(H) \}
\end{align}
so that
\begin{align}
    \Omega(H) = \left\{\left.\delta \ee \, \right| \, \delta \in \Delta\left(H\right) \right\},
\end{align}
where $\ee$ is the basis vector associated to the respective coordinate. Next, we derive some bounds on $|\Omega^{(i)}|$ for different assumptions on the underlying Hamiltonians.

\textbf{Worst case upper bounds.}
We first derive the worst-case limits of $|\Omega^{(i)}|$ for $\kappa$-local encoding Hamiltonians. A $\kappa$-local Hamiltonian $H$ has local dimension $2^{\kappa}$ 
and the number of possible differences of eigenvalues in the spectrum is thus upper bounded as
\begin{align}\label{eqn:upper_bound_l_local_spectrum}
    |\Delta(H)| \leq \frac{2^{\kappa}(2^{\kappa} - 1)}{2}+1 = \calO(2^{2{\kappa}}).
\end{align}
One can in principle construct a Hamiltonian that saturates this bound by choosing $\operatorname{spec}(H_{\max}) = \{0, 3, 9, \dots, 3^{2^{\kappa}}\}$, but this is a rather synthetic example that we do not expect to encounter on real hardware.
Eq.~\eqref{eqn:upper_bound_l_local_spectrum} implies that repeating $N^{(i)}$ $\kappa$-local Hamiltonians will, in the case where there are no duplicates in the frequency set, imply a cardinality of at most
\begin{align}
    |\Omega^{(i)}| 
    \leq \left(\frac{2^{\kappa}(2^{\kappa} - 1)}{2}+1\right)^{N^{(i)}} = \calO(2^{2{\kappa} N^{(i)}}).
\end{align}
Again, this bound can be saturated by choosing Hamiltonians with ever-larger spectra, namely by choosing $H_{1}^{(i)} = H_{\max}$ and $\operatorname{spec}(H_{j+1}^{(i)}) = \max(\operatorname{spec}(H_{j}^{(i)})) \cdot \operatorname{spec}(H_{\max})$.

\textbf{Repeated Hamiltonians.}
We now consider the case where the same Hamiltonian $H^{(i)}$ is used $N^{(i)}$ times to encode the coordinate $x^{(i)}$.
Due to the underlying symmetry of the definition of $\Delta(H^{(i)})$, we have that
\begin{align}
    \Delta(H^{(i)}) = \{ 0, \pm \delta_1, \dots, \pm \delta_T \}
\end{align}
for some $T$, and therefore $|\Delta(H^{(i)})| = 2T+1 = |\Omega_j^{(i)}| = |\Omega_0^{(i)}|$, where we have denoted the repeated set of frequencies common to all encoding gates as $\Omega_0^{(i)}$. Using the results on the maximum size of the spectrum of a $\kappa$-local Hamiltonian in Eq.~\eqref{eqn:upper_bound_l_local_spectrum}, we can deduce that $T \leq 2^{\kappa-2}(2^{\kappa}-1)$.
We now quantify the number of different frequencies in $\Omega^{(i)}$ in terms of $T$. 
$N^{(i)}$ repetitions of the fixed Hamiltonian with frequencies $\Omega_0^{(i)}$ result in a set of frequencies that contains all possible combinations of $N^{(i)}$ vectors $\oomega_j$ from $\Omega_0^{(i)}$:
\begin{align}
    \Omega^{(i)} &= \left\{ \left.\sum_{j=1}^{N^{(i)}} \oomega_j \, \right| \, \oomega_j \in \Omega_0^{(i)} \text{ for all } j \right\} 
\end{align}
We can reformulate this by counting how often the $2T+1$ different elements of $\Omega_0^{(i)}$ are present in a particular instance of the above sum, and get
\begin{align}
    \Omega^{(i)} &= \left\{ \left.\sum_{\oomega \in \Omega_0^{(i)}} N_{\oomega} \oomega \, \right| \, N_{\oomega} \geq 0, \sum_{\oomega \in \Omega_0^{(i)}} N_{\oomega} = N^{(i)} \right\}.
\end{align}
To bound the size of this set, we exploit the symmetry of the underlying frequencies $\delta_j \in \Delta(H^{(i)})$. Let us outline the idea: We will first count how we can distribute the number $N^{(i)}$ of repetitions over the different non-negative frequencies $\delta_j$ and then multiply this with the number of different frequencies that can be created by repeating $\delta_j$ and $-\delta_j$. 
To improve the scaling we get at the end, we will resort to a small trick and actually group the frequency $0$, which we know to always be present in the spectrum, with the first other frequency, therefore considering the combinatoric problem of distributing $N^{(i)}$ \enquote{balls} over $T$ distinguishable \enquote{bins} where some bins can be empty. The different possible ways to achieve this task are given by counting the \emph{weak compositions} of $N^{(i)}$ into $T$ parts, $\calC(N^{(i)},T)$. The number of such weak compositions is
\begin{align}
    |\calC(N^{(i)},T)| &= \begin{pmatrix}
    N^{(i)} + T -1\\ N^{(i)}
    \end{pmatrix} \\
    &= \frac{(N^{(i)}+T-1)!}{N^{(i)}! (T-1)!}\\
    & = \frac{(N^{(i)}+T-1)(N^{(i)}+T-2) \dots (N^{(i)}+1)}{(T-1)!}\\
    & =
    \calO((N^{(i)})^{T-1}).
\end{align}
We will denote such a composition as $(N_j^{(i)})_{j=1}^T \in \calC(N^{(i)}, T)$.
A simple counting argument reveals that there are $2N_1^{(i)} +1$ possible sums with $N_1^{(i)}$ elements from the set $\{0, \delta_1, -\delta_1\}$ and $N_j^{(i)} + 1 \leq 2N_j^{(i)} + 1$ possible sums with $N_j^{(i)}$ elements from the set $\{ \delta_j, -\delta_j\}$.
We can therefore bound
\begin{align}
    |\Omega^{(i)}|&\leq \sum_{(N_k^{(i)}) \in \calC(N^{(i)},T)} \prod_{k = 1}^{T} \left(2N_k^{(i)} + 1\right) \\
    &\leq \sum_{(N_k^{(i)}) \in \calC(N^{(i)},T)} \left(\frac{2\sum_{k=1}^T N_k^{(i)}}{T}+1\right)^T \\
    &\leq \sum_{(N_k^{(i)}) \in \calC(N^{(i)},T)} \left(\frac{2N^{(i)}}{T}+1\right)^T \\
    &=|\calC(N^{(i)},T)| \left(\frac{2N^{(i)}}{T}+1\right)^T \\
    &= \calO((N^{(i)})^{2T-1}),
\end{align}
where we have used the arithmetic-geometric mean inequality to obtain the second inequality. From this inequality, we see that by repeating the same Hamiltonian for an encoding, we obtain a polynomial scaling in the number of repetitions whose exponent depends on the number of different frequencies generated by the repeated Hamiltonian.

\textbf{Pauli encodings.}
Encodings performed with Hamiltonians that are a tensor product of Pauli operators, $H = \bigotimes_{k=1}^n P^{(k)}$ where $P^{(k)} \in \{ \bbI, X, Y, Z\}$, have been analyzed in Ref.~\cite{SchuldSwekeMeyer2021}. Therein, it was shown that $N^{(i)}$ repetitions of such encodings of arbitrary dimension will result in $|\Omega^{(i)}| = 2 N^{(i)}  +1$ and $\Omega^{(i)}\subseteq \ee^{(i)}\mathbb{Z}\subseteq\mathbb{Z}^d$.

\textbf{Summary.}
We can easily connect  the different upper bounds on $|\Omega^{(i)}|$ to upper bounds on $|\Omega|$ via the arithmetic-geometric mean inequality, i.e.,
\begin{align}
    \prod_{i=1}^d |\Omega^{(i)}| \leq \left(\sum_{i=1}^d \frac{ |\Omega^{(i)}|}{d}\right)^d,
\end{align}
and by noting that, for $q\geq 1$,
\begin{align}
    \sum_{i=1}^d \frac{(N^{(i)})^q}{d} \leq \frac{\left(\sum_{i=1}^d N^{(i)}\right)^q}{d} = \frac{N^q}{d}.
\end{align}
Table~\ref{tab:all_upper_bounds} summarizes the different upper bounds on $|\Omega^{(i)}|$ for individual parameters $x^{(i)}$ derived in this section as well as the associated bounds on $|\Omega|$.

\begin{table}[]
    \centering
    \caption{\label{tab:all_upper_bounds}Scaling of the different upper bounds for the number of different frequencies for the encoding of a single parameter $|\Omega^{(i)}|$, as well as the associated bounds for the scaling of the number of different frequencies for the total data-encoding strategy, $|\Omega|$. $N^{(i)}$ denotes the number of gates used for encoding the input $x^{(i)}$, $N$ denotes the total number of gates for all inputs.}
    \begin{tabularx}{\textwidth}{Xll}
        \toprule
        Encoding strategy & Upper bound on $|\Omega^{(i)}|$ & Upper bound on $|\Omega|$\\
        \midrule 
        Repetition of arbitrary Pauli encodings & $\displaystyle\calO\left(N^{(i)}\right)$& $\displaystyle\calO\left(\left(\frac{N}{d}\right)^d\right)$ \\ \addlinespace[0.25em]
        Repetition of the same encoding \newline with $2T+1$ frequencies & $\displaystyle\calO\left((N^{(i)})^{2T-1}\right)$ & $\displaystyle\calO\left(\left(\frac{N^{2T-1}}{d}\right)^d\right)$\\\addlinespace[0.25em]
        Repetition of the same $\kappa$-local encoding & $\displaystyle\calO\left((N^{(i)})^{2^{\kappa+1}-1}\right)$ & $\displaystyle\calO\left(\left(\frac{N^{2^{\kappa+1}-1}}{d}\right)^d\right)$\\\addlinespace[0.25em]
        Different $\kappa$-local encodings & $\displaystyle\calO\left(2^{2 \kappa N^{(i)}}\right)$ & $\displaystyle\calO\left(2^{2 \kappa N}\right)$\\
        \bottomrule
    \end{tabularx}
\end{table}

Given these results, we are finally in a position to provide a concrete answer to the first question posed in Section~\ref{s:motivation}. More specifically, by substituting the upper bounds on $|\Omega|$ given in Table \ref{tab:all_upper_bounds} into the generalization bounds for GTPs given in Section~\ref{s:results}, we can obtain generalization bounds for PQC-based model classes which depend explicitly on architectural hyper-parameters associated with the data-encoding strategy. Recall that we denoted the function class associated with a particular set of parameters $\Theta$, an encoding strategy $\calD$ and an observable $M$, as $\calF_{\Theta, \calD, M}$. 
We then obtain from Theorems \ref{theorem:rademacher-gen-bound-trig-polys} and \ref{theorem:covering-gen-bound-trig-polys} the following Corollary:

\begin{corollary}[Generalization bound for PQCs---From Theorems \ref{theorem:rademacher-gen-bound-trig-polys} and \ref{theorem:covering-gen-bound-trig-polys}]\label{corollary:gen-bound-PQCs}
Let $d,m\in\mathbb{N}$. Let $\ell:\R\times\R\to [0,c]$ be a bounded loss function such that~$\mathbb{R}\ni z\mapsto\ell(y,z)$ is $L$-Lipschitz for all $y\in\mathbb{R}$. For any $\delta\in (0,1)$ and for any probability measure $P$ on $[0,2\pi)^d\times \R$, with probability $\geq 1-\delta$ over the choice of i.i.d.~training data $S=\{(\xx_i,y_i)\}_{i=1}^m\in ([0,2\pi)^d\times\R)^m$ of size $m$ and every $f\in\calF_{\Theta,\calD,M}$, where $\calD$ is an encoding strategy with $N$ gates in total, we have that,
\begin{enumerate}[label=(\alph*)]
    \item if $\calD$ denotes any data-encoding strategy consisting of Hamiltonians that are tensor products of Pauli operators,
    \begin{align}
    R(f) - \hat{R}_S(f)
    \leq \tilde{\mathcal{O}}\left( \frac{L \lVert M \rVert_{\infty} }{\sqrt{m}}\left(\frac{N}{d}\right)^{\frac{d}{2}} + c\sqrt{\frac{\log \nicefrac{1}{\delta}}{m}}\right),
    \end{align}
    
    \item if $\calD$ denotes any data-encoding strategy consisting of the same single Hamiltonian with $T$ integer frequencies per data coordinate,
    \begin{align}
    R(f) - \hat{R}_S(f)
    \leq \tilde{\mathcal{O}}\left( \frac{L \lVert M \rVert_{\infty} }{\sqrt{m}}\left(\frac{N^{2T-1}}{d}\right)^{\frac{d}{2}} + c\sqrt{\frac{\log \nicefrac{1}{\delta}}{m}}\right),
    \end{align}
    
    \item if $\calD$ denotes any data-encoding strategy consisting of the same single $\kappa$-local Hamiltonian with integer frequencies per data coordinate,
    \begin{align}
        R(f) - \hat{R}_S(f)
    \leq \tilde{\mathcal{O}}\left( \frac{L \lVert M \rVert_{\infty} }{\sqrt{m}}\left(\frac{N^{2^{\kappa+1}-1}}{d}\right)^{\frac{d}{2}} + c\sqrt{\frac{\log \nicefrac{1}{\delta}}{m}}\right),
    \end{align}
    
    \item if $\calD$ denotes any data-encoding strategy consisting of possibly different $\kappa$-local Hamiltonians with integer frequencies per data coordinate,
    \begin{align}
        R(f) - \hat{R}_S(f)
    \leq \tilde{\mathcal{O}}\left(\frac{L \lVert M \rVert_{\infty} }{\sqrt{m}}\, 2^{\kappa N} + c\sqrt{\frac{\log \nicefrac{1}{\delta}}{m}}\right).
    \end{align}
\end{enumerate}
\end{corollary}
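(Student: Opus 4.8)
The plan is to combine the inclusion $\calF_{\Theta, \calD, M} \subseteq \Fclass$ established in Section~\ref{s:pqc_models} with the generalization bounds of Theorems~\ref{theorem:rademacher-gen-bound-trig-polys} and~\ref{theorem:covering-gen-bound-trig-polys}, and then substitute the cardinality bounds collected in Table~\ref{tab:all_upper_bounds}. First I would set $B=\lVert M\rVert_\infty$, so that by Eq.~\eqref{eq:function_class_bounded_inf_norm} we have $\calF_{\Theta, \calD, M} \subseteq \Fclass$. The crucial observation is that all four encoding strategies under consideration produce integer frequency spectra $\Omega\subseteq\mathbb{Z}^d$: for the Pauli case this is recorded in the \textbf{Pauli encodings} paragraph, while for cases (b)--(d) integrality is built into the hypotheses. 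By Eq.~\eqref{e:gtp_coeff_bound}, integrality of $\Omega$ permits taking $\tilde{B}=2B=2\lVert M\rVert_\infty$ while guaranteeing $\Fclass\subseteq\Fclasstrignew$ -- this is precisely the step that avoids invoking Conjecture~\ref{conjecture:strengthened-inclusion} together with its costlier choice $\tilde{B}=2\sqrt{\lvert\Omega\rvert}\lVert M\rVert_\infty$.

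Next, since a generalization bound holding uniformly over $\Fclass$ in particular holds uniformly over the subclass $\calF_{\Theta, \calD, M}$, I would apply Theorem~\ref{theorem:covering-gen-bound-trig-polys} (equivalently Theorem~\ref{theorem:rademacher-gen-bound-trig-polys}, which yields the same leading order) with $B=\lVert M\rVert_\infty$ and $\tilde{B}=2\lVert M\rVert_\infty$. Inserting these values, the $\log(\tilde{B})$ contribution becomes a logarithmic factor in $\lVert M\rVert_\infty$ and the $\log(\lvert\Omega\rvert)$ contribution a polylogarithmic factor in $N$; both are absorbed into $\tilde{\mathcal{O}}$, leaving the leading-order bound
\begin{align}
    R(f) - \hat{R}_S(f)
    \leq \tilde{\mathcal{O}}\left( \frac{L\lVert M\rVert_\infty}{\sqrt{m}}\sqrt{\lvert\Omega\rvert} + c\sqrt{\frac{\log(\nicefrac{1}{\delta})}{m}}\right),
\end{align}
valid uniformly over $f\in\calF_{\Theta,\calD,M}$.

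Finally I would substitute the four cardinality bounds from Table~\ref{tab:all_upper_bounds}, using $\sqrt{\lvert\Omega\rvert}$ throughout: the bound $\lvert\Omega\rvert=\mathcal{O}((N/d)^d)$ gives $\sqrt{\lvert\Omega\rvert}=\mathcal{O}((N/d)^{d/2})$ for (a); $\lvert\Omega\rvert=\mathcal{O}((N^{2T-1}/d)^d)$ gives $(N^{2T-1}/d)^{d/2}$ for (b); $\lvert\Omega\rvert=\mathcal{O}((N^{2^{\kappa+1}-1}/d)^d)$ gives $(N^{2^{\kappa+1}-1}/d)^{d/2}$ for (c); and $\lvert\Omega\rvert=\mathcal{O}(2^{2\kappa N})$ gives $2^{\kappa N}$ for (d). Each substitution reproduces the claimed bound verbatim.

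The argument is essentially a substitution, so there is no serious analytical obstacle; the one point that genuinely requires care is the treatment of $\tilde{B}$. The clean bounds above rely on $\tilde{B}=2\lVert M\rVert_\infty$ being admissible, which holds precisely because every one of the four strategies yields an integer-valued $\Omega$. Had the spectra been non-integer, one would be forced (conditionally on Conjecture~\ref{conjecture:strengthened-inclusion}) to take $\tilde{B}=2\sqrt{\lvert\Omega\rvert}\lVert M\rVert_\infty$, inflating each bound by a further factor of $\sqrt{\lvert\Omega\rvert}$ and thereby degrading the stated scalings. Making explicit that the integer-frequency hypothesis is what licenses the favorable choice of $\tilde{B}$ is the only nontrivial bookkeeping in the proof.
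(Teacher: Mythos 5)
Your proposal is correct and follows essentially the same route as the paper: the corollary is obtained by combining the inclusion $\calF_{\Theta,\calD,M}\subseteq\Fclass\subseteq\Fclasstrignew$ (with $B=\lVert M\rVert_\infty$ and, thanks to the integer frequency spectra, $\tilde{B}=2B$ via Eq.~\eqref{e:gtp_coeff_bound}) with Theorems~\ref{theorem:rademacher-gen-bound-trig-polys} and~\ref{theorem:covering-gen-bound-trig-polys}, and then substituting the cardinality bounds of Table~\ref{tab:all_upper_bounds}. Your explicit remark that integrality of $\Omega$ is what licenses the favorable choice of $\tilde{B}$ (and hence avoids Conjecture~\ref{conjecture:strengthened-inclusion}) matches the paper's own caveat following the corollary.
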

While we consider only four specific data-encoding strategies in this corollary, the generalization bounds from Theorems \ref{theorem:rademacher-gen-bound-trig-polys} and \ref{theorem:covering-gen-bound-trig-polys} can in principle be applied to PQC-based models with \emph{any} data-encoding strategy. To use the bounds, the corresponding $\lvert\Omega(\mathcal{D})\rvert$ has to be identified, which can then be readily combined with our generalization bounds for GTPs. 
Additionally, if Conjecture~\ref{conjecture:strengthened-inclusion} is correct, then the statements of Corollary~\ref{corollary:gen-bound-PQCs}(b)-(d) are true also for data-encoding Hamiltonians leading to non-integer frequencies.

\subsection{Comparison of data-encoding strategies from a generalization perspective}\label{sbs:comparison}

The results of the previous subsection give a concrete answer to the first question posed in Section~\ref{s:motivation}, namely explicitly encoding-dependent generalization bounds for PQC-based models.
However, recall from Section~\ref{s:motivation} that we also aimed to use such bounds to identify data-encoding strategies which give rise to a slow (polynomial) growth of model complexity with respect to increasingly complex data-encoding strategies, and therefore facilitate meaningful model selection via structural risk minimization. The results of the previous section now allow us to address this additional goal.

Given an assumption or constraint on the structure of the data-encoding Hamiltonians in a possible data-encoding strategy, 
the most natural data-encoding hyper-parameter for structural risk minimization is the number $N$ of encoding Hamiltonians. We see that using either repeated Pauli Hamiltonians, a repeated (but fixed) $\kappa$-local Hamiltonian, or the repetition of a fixed Hamiltonian with $2T+1$ frequencies, leads to a complexity bound and generalization bound that scale polynomially with $N$. 
However, using $N$ \emph{different} $\kappa$-local data-encoding Hamiltonians can lead, in the worst case, to complexity upper bounds which scale exponentially with respect to $N$. In the latter case we stress, however, that these worst-case bounds are constructed using Hamiltonians designed to saturate the maximum possible number of frequency differences, and in many cases the complexity scaling with respect to $N$ may be much slower. Additionally, while the polynomial generalization bounds we obtain for the first three data-encoding strategies give us hope in the possibility of meaningful structural risk minimization with respect to the number of data-encoding gates, our upper bounds on the generalization gap are not necessarily tight. Hence, we cannot rule out the possibility of better bounds for strategies consisting of many different Hamiltonians, which would facilitate the use of strucural risk minimization.

Additionally, while increasing the complexity of a data-encoding strategy by increasing $N$ is a natural (and experimentally feasible) strategy, in principle one might also consider increasing either the locality $\kappa$ or the number of frequencies $T$ of the repeated data-encoding Hamiltonian. This would be particularly relevant in the realistic scenario where experimental constraints severely limit the number of data-encoding gates which can be used. However, apart from the potential experimental obstacles one would face in doing so, we note that while our complexity bounds are polynomial with respect to $N$ (when keeping $\kappa$ and $T$ fixed), they are exponential (or doubly-exponential) with respect to $\kappa$ and $T$ respectively (when keeping $N$ fixed). As such, given the generalization bounds we have obtained in this work, from the generalization and structural risk minimization perspective it makes the most sense to systematically increase the complexity of the data-encoding strategy by keeping $\kappa$ and/or $T$ constant, and increasing the number of data-encoding gates. 

\section{Discussion}\label{s:discussion}

As discussed in Section~\ref{s:motivation}, the results from the previous section can be applied in a variety of ways. In particular, apart from the straightforward application of (probabilistically) bounding the generalization gap of an output hypothesis, or bounding the number of data samples required to guarantee an output hypothesis with a sufficiently small generalization gap, our results also facilitate the use of \textit{structural risk minimization} with respect to architectural hyper-parameters related to the data-encoding strategy. We reiterate that the results obtained here should be viewed as \textit{complementary} to many of the prior results discussed in Section~\ref{s:prior_work}. In particular, our results complement those which derive generalization bounds applicable to the same PQC-based hypothesis classes, but with explicit dependencies on architectural hyper-parameters which do not appear in our generalization bounds, such as depth, width, and total number of trainable gates.

More specifically, the generalization bounds of Section~\ref{s:gen-bounds-pqc} allow one to use structural risk minimization to find the optimal setting for data-encoding hyper-parameters (in the sense of yielding an output hypothesis with the smallest upper bound on true risk). However, they \textit{do not} give any guidance as to how one should choose the remaining architectural hyper-parameters, and in particular those related to the trainable parts of the PQC. 
As such, a natural (and recommended) strategy is to use different available and applicable generalization bounds to perform ``multi-dimensional structural risk minimization:" One can vary \textit{all} architectural hyper-parameters for which one has a generalization bound, and evaluate each hyper-parameter setting with respect to an upper bound on the true risk obtained from a union bound over all existing applicable bounds.
To make this more concrete, assume that we have a family of hypothesis classes $\{\mathcal{F}_{(k_1,k_2)}\}$, parametrized by two architectural hyper-parameters $k_1$ and $k_2$ (for example $k_1$ could be the number of encoding gates, and $k_2$ could be the number of trainable gates in a PQC
based model). Additionally, let us assume that we have derived two different generalization bounds, one depending on $k_1$, the other depending on $k_2$. More concretely, assume that we have a function $g_1(k_1,m,\delta)$ and a function $g_2(k_2,m,\delta)$ such that, for all $i\in\{1,2\}$, for all $\delta\in(0,1)$, with probability $1-\delta$ over $S\sim P^m$, for all $h\in\mathcal{F}_{(k_1,k_2)}$ we have that
\begin{align}
    R(h) &\leq \hat{R}_S(h) + g_i(k_i,m,\delta).
\end{align}
Using a union bound, we can then straightforwardly combine these two results to obtain the following generalization bound: For all $\delta\in(0,1)$, with probability $1-\delta$ over $S\sim P^m$, for all $h\in\mathcal{F}_{(k_1,k_2)}$ we have that
\begin{align}
    R(h) &\leq \hat{R}_S(h) + \min_{i}\left[g_i(k_i,m,\delta/2) \right].
\end{align}
We see that we can perform structural risk minimization by varying both $k_1$ and $k_2$ and using $\min_{i}\left[g_i(k_i,m,\delta/2)\right]$ to calculate an upper bound on the true risk of the candidate hypothesis. The above argument can clearly be generalized to an arbitrary number of architectural hyper-parameters, and thereby yields a methodology for exploiting multiple existing generalization bounds for ``multi-dimensional structural risk minimization.''

While the approach we have just discussed certainly allows us to exploit existing complementary generalization bounds depending on different architectural hyperparameters, it is an interesting open question whether one can derive generalization bounds which depend \textit{simultaneously} on multiple architectural hyper-parameters. In particular, it is of interest to understand whether one can in this way obtain generalization bounds, depending on multiple architectural hyper-parameters, which are \textit{tighter} than the bounds obtained by taking a union bound over existing bounds, each of which depends only on a single hyper-parameter. A potential strategy for obtaining such bounds would be to better understand the effect of structural assumptions on the trainable part of a PQC architecture on the structure of the \textit{coefficients} of the associated GTP representation. More concretely, while in this work we have focused on the frequency spectra of the GTPs, which are fully determined by the data-encoding strategy, the coefficients of the GTPs are determined by both the data-encoding strategy and the trainable part of the circuit. If one can characterize the implications of different circuit architectures on the structure of GTP coefficients, one could plausibly use refinements of the techniques presented in Section~\ref{s:results} to derive generalization bounds for the relevant GTPs that depend simultaneously on both the data-encoding strategy and complementary parameters of the circuit architecture. For example, certain PQC architectures may lead to GTP coefficients with a specific sparsity structure, or a constrained upper bound on a specific norm. 
Such a norm-specific bound
may allow us to exploit the general $p$-norm extensions of our GTP bounds, mentioned in Remarks~\ref{rmk:rademacher-p-norms} and \ref{rmk:covering-p-norms}, to derive generalization bounds which also depend on the trainable circuit architecture.

Finally, we recall the potential shortcomings of 
\textit{uniform} generalization bounds. In particular, in Ref.~\cite{zhang}, the authors have shown both experimentally and analytically that sufficiently complex neural networks can achieve zero empirical risk for classification tasks with randomly assigned labels. As the true risk for such a learning problem can be no better than what would be achieved by random guessing, any \textit{uniform} generalization bound for such a hypothesis class cannot offer any meaningful information in this complexity regime. More specifically, as uniform generalization bounds hold, by definition, for all hypotheses in the hypothesis class, and as there exist hypotheses which can achieve zero empirical risk even when generalization is not possible (i.e., when labels are selected randomly), such uniform bounds must be trivial.

It is, however, critical to emphasize that this finding applies only to \textit{sufficiently complex} hypothesis classes. More specifically, they apply to models capable of achieving zero empirical risk even for completely unstructured data, which typically requires that the number of model parameters is at least as large as the number of elements in the training data set. As the number of parameters in a NISQ-regime PQC-based model is typically orders of magnitude less than the size of training data sets associated with ``real-world'' learning problems, it is unlikely that these known issues with uniform generalization bounds hinder the application of our uniform bounds to 
the analysis of currently available and near-term PQC-based hypothesis classes.

Despite this, it is important to keep these concerns in mind as the complexity of available PQC-based models increases. 
Consequently, there are a variety of natural open questions for future research: Firstly, can one replicate both the experimental and analytical aspects of Ref.~\cite{zhang} for PQC-based model classes? This would help to determine whether (or when) it is necessary to move beyond uniform generalization bounds for PQC-based models. In particular, from an experimental perspective, can one demonstrate the ability of a (sufficiently complex) PQC-based model class to achieve zero risk for a randomly-relabeled real-world classification task? Secondly, can one put an analytical bound on what is 
\enquote{sufficiently complex}, i.e., how many model parameters are sufficient to ensure that for \textit{any} training data set of size $m$, there \textit{always} exists a hypothesis in the hypothesis class which can achieve zero empirical risk? Additionally, the shortcomings of uniform generalization bounds exposed in Ref.~\cite{zhang} have stimulated an explosion of research on non-uniform generalization bounds for highly complex neural network models~\cite{jiang}. It would be of interest to understand whether or how one can obtain non-uniform generalization bounds for PQC-based models, which would tighten the bounds obtained in this work in the future regime of high complexity.

\section{Conclusion}\label{s:conclusion}

In this work, we have derived Rademacher complexity and metric entropy bounds for PQC-based model classes. These depend explicitly on architectural hyper-parameters associated with the data-encoding strategy and are applicable to PQC-based models incorporating data re-uploading. By exploiting tools and techniques from statistical learning theory, we have then used these complexity bounds to obtain uniform generalization bounds, which allow to place a probabilistic upper-bound on the out-of-sample performance of any hypothesis, given its performance on the data. Additionally, we have used the obtained generalization bounds to compare data-encoding strategies from a generalization perspective and have discussed how, for certain data-encoding strategies, our generalization bounds may be used for model selection via structural risk minimization. We have stressed how the encoding-dependent generalization bounds obtained in this work should be viewed as \textit{complementary} to existing complexity and generalization bounds for PQC-based models, which depend explicitly on architectural hyper-parameters to which our bounds are insensitive. More specifically, we have sketched in Section~\ref{s:discussion} how the combination of our bounds with existing works facilitates model selection via multi-dimensional structural risk minimization. Finally, as discussed in Section~\ref{s:discussion}, it is important to acknowledge that the bounds we have obtained here are expected to be useful for PQC-based models in the ``moderate-complexity" regime, i.e., for models parametrized by fewer parameters than the number of available data samples. However, in analogy with known results for classical model classes, these bounds may cease to be meaningful as the complexity of PQC-based models increases into an over-parametrized regime. Given this, we have also sketched in Section~\ref{s:discussion} a variety of open questions and directions for future research.

\begin{acknowledgments}
The authors would like to thank Alexander Nietner for insightful discussions and Maria Schuld and David Sutter for helpful feedback on an earlier draft.
Moreover, the authors thank Stefano Mangini for pointing out a mistake in an earlier version of this paper, in which we erroneously claimed that Eq.~\eqref{e:gtp_coeff_bound} holds true for arbitrary frequency spectra.
We would like to thank the Cluster of Excellence MATH+ (EF1-11), the BMWi (PlanQK), for which this work provides an understanding of models of quantum-enhanced machine learning,
and the BMBF (Hybrid), for which this work 
helps with the design of
quantum-classical hybrid models of quantum 
computing, for support. This work has 
also been supported by the DFG (CRC 183, 
project B01), the Einstein Foundation 
(Einstein Research Unit on quantum devices) 
and by the EU's Horizon 2020 research and innovation programme under grant agreement No.~817482 (PASQuanS). M.C.C.~gratefully acknowledges support from the TopMath Graduate Center of the TUM Graduate School at the Technical University of Munich, Germany, from the TopMath Program at the Elite Network of Bavaria, and from the German Academic Scholarship Foundation (Studienstiftung des deutschen Volkes).
\end{acknowledgments}


\appendix

\section{Auxiliary results from statistical learning theory}\label{appendix:auxiliary-results}

In this appendix, we collect some well known results from classical statistical learning theory that we make use of in our proofs.

\begin{lemma}[Rademacher complexity progression (Theorem 2.15 in Ref.~\cite{Wolf.2020})]\label{thm:2.15Wolf}
    Let $a, b\in\R$ and $\Tilde{\sigma}:\R\to\R$ an $L$-Lipschitz function and assume $\calF_0\subseteq\R^{\calX}$ is a set of functions that includes the $0$ function.
    Also, let $\calF$ be the following function class
    \begin{align}
        \calF & \coloneqq \Bigg\{\xx\mapsto\Tilde{\sigma}\left(v + \sum_{j=1}^{m}\omega_jf_j(\xx)\right) \,\Bigg|\, \lvert v\rvert \leq a, \norm{\oomega}_1\leq b, \text{ and } f_j\in\calF_0\Bigg\}.
    \end{align}
    Then, the empirical Rademacher complexity of $\calF$ with respect to any point $\Vec{\xx}\in\calX^m$ can be bounded in terms of the one of $\calF_0$
    \begin{align}
        \empRad_{\Vec{\xx}}(\calF) \leq L\left(\frac{a}{\sqrt{m}} + 2b\empRad(\calF_0)\right).
    \end{align}
    The $2$ factor can be dropped if $\calF_0 = -\calF_0$.
\end{lemma}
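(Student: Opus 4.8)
The plan is to peel the single network layer into three contributions---the $L$-Lipschitz activation, the bias, and the $\ell_1$-bounded weighted sum of functions from $\calF_0$---and to dispose of each in turn. First I would isolate the \emph{pre-activation} class
\begin{align}
    \calG \coloneqq \Bigg\{\xx\mapsto v + \sum_{j=1}^{m}\omega_j f_j(\xx) \,\Bigg|\, \lvert v\rvert\leq a,\ \norm{\oomega}_1\leq b,\ f_j\in\calF_0\Bigg\},
\end{align}
so that $\calF=\tilde\sigma\circ\calG$. Since $\tilde\sigma$ is $L$-Lipschitz and is applied pointwise, Talagrand's contraction lemma (Lemma~\ref{lemma:talagrand}) gives $\empRad_{\Vec\xx}(\calF)\leq L\,\empRad_{\Vec\xx}(\calG)$, removing the activation at the cost of the factor $L$ and reducing the task to bounding the Rademacher complexity of the affine class $\calG$.

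Next I would split the supremum defining $\empRad_{\Vec\xx}(\calG)$. Because the bias $v$ and the pair $(\oomega,(f_j))$ range over independent sets, the supremum of the sum factorizes \emph{exactly} into a supremum over $v$ plus a supremum over $(\oomega,(f_j))$. For the bias part, $\sup_{\lvert v\rvert\leq a} v\sum_{i}\sigma_i = a\,\lvert\sum_i\sigma_i\rvert$, and Jensen's inequality together with $\bbE_\ssigma[(\sum_i\sigma_i)^2]=m$ yields the contribution $\nicefrac{a}{\sqrt m}$. For the linear part I would apply Hölder's inequality in its $\ell_1$--$\ell_\infty$ form to evaluate the supremum over $\norm{\oomega}_1\leq b$ as $b\,\max_j\lvert\sum_i\sigma_i f_j(\xx_i)\rvert$; since the $f_j$ may be chosen independently of one another, the supremum over $(f_j)\in\calF_0^m$ of this maximum collapses to $b\,\sup_{g\in\calF_0}\lvert\sum_i\sigma_i g(\xx_i)\rvert$.

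The remaining step---and the only genuinely delicate one---is to pass from the absolute-value supremum $\sup_g\lvert\sum_i\sigma_i g(\xx_i)\rvert$ back to the signed quantity defining $\empRad(\calF_0)$. Here I would use that $0\in\calF_0$, so both $\sup_g\sum_i\sigma_i g(\xx_i)$ and $\sup_g(-\sum_i\sigma_i g(\xx_i))$ are non-negative, whence the absolute-value supremum is at most their sum; the symmetry $\ssigma\stackrel{d}{=}-\ssigma$ of the Rademacher vector then makes the two expectations equal, producing the factor $2$ and the bound $2b\,\empRad(\calF_0)$. When $\calF_0=-\calF_0$, replacing $g$ by $-g$ shows the two suprema coincide pathwise, so no symmetrization is needed and the factor $2$ drops. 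Reassembling the bias and linear contributions inside the factor $L$ then gives precisely $\empRad_{\Vec\xx}(\calF)\leq L(\nicefrac{a}{\sqrt m}+2b\,\empRad(\calF_0))$. I expect the main care to lie in justifying the collapse of the independent suprema over the $f_j$ into a single supremum over $\calF_0$, and in the symmetrization step handling the absolute value, rather than in any hard estimate.
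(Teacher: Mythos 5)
The paper does not prove this lemma at all --- it is imported verbatim as Theorem 2.15 of Ref.~\cite{Wolf.2020} in Appendix~\ref{appendix:auxiliary-results} --- so there is no in-paper argument to compare against. Your proof is correct and is essentially the standard one underlying the cited result: Talagrand contraction to strip the $L$-Lipschitz activation, exact additive splitting of the bias and linear parts, $\ell_1$--$\ell_\infty$ duality to reduce the weighted sum to $b\sup_{g\in\calF_0}\lvert\sum_i\sigma_i g(\xx_i)\rvert$, and the symmetrization step (valid because $0\in\calF_0$ makes both signed suprema non-negative) that produces the factor $2$, which indeed disappears when $\calF_0=-\calF_0$.
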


\begin{lemma}[Rademacher complexity of layered network (Corollary 2.11 in Ref.~\cite{Wolf.2020})]\label{cor:2.11Wolf}
    Let $a,b>0$ and $\calX\coloneqq\left\{\xx\in\R^d\,|\,\norm{\xx}_\infty\leq C\right\}$.
    Consider a neural network architecture with $\delta$ hidden layers that implements $\calF\subseteq\R^{\calX}$, and such that
    \begin{enumerate}
        \item The activation function $\sigma:\R\to\R$ is $L$-Lipschitz and anti-symmetric.
        \item For every neuron, the vector of weights $\oomega$ satisfies $\norm{\oomega}_1\leq b$.
        \item For every neuron, the modulus of the bias is upper-bounded by $a$.
    \end{enumerate}
    Then, the empirical Rademacher complexity of $\calF$ with respect to any point $\Vec{\xx}\in\calX^m$ can be upper-bounded as
    \begin{align}
        \empRad_{\Vec{\xx}}(\calF) \leq\frac{1}{\sqrt{m}}\left(Cb^\delta\sqrt{2\log(2d)} + a\sum_{i=0}^{\delta-1}b^i\right).
    \end{align}
\end{lemma}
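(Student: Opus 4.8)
The plan is to establish the bound by induction on the network depth $\delta$, stripping off one layer per induction step with the Rademacher complexity progression result (Lemma~\ref{thm:2.15Wolf}) and anchoring the induction at the input layer with Massart's lemma (Lemma~\ref{lemma:Massart}). Concretely, for $0\le k\le\delta$ I would let $\calF_k\subseteq\R^{\calX}$ be the class of functions computed by a single neuron at depth $k$, with $\calF_0$ the class of signals feeding the first layer, so that the network output class is $\calF=\calF_\delta$ (obtained from $\calF_0$ by applying the progression lemma once per layer, $\delta$ times in total); write $R_k$ for the empirical Rademacher complexity of $\calF_k$ with respect to the fixed points $(\xx_1,\ldots,\xx_m)$.

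For the base case I would take $\calF_0=\{0\}\cup\{\xx\mapsto\pm\ee^{(i)}\xx\mid i\in\{1,\ldots,d\}\}$, the signed coordinate projections together with the zero function. Evaluating a coordinate projection on the sample gives a vector in $\R^m$ whose Euclidean norm is at most $C\sqrt{m}$, since $\norm{\xx_t}_\infty\le C$ for every $t$, and there are at most $2d$ such vectors. Massart's lemma (Lemma~\ref{lemma:Massart}) then yields $R_0\le\frac{C}{\sqrt{m}}\sqrt{2\log(2d)}$, which supplies exactly the $C\sqrt{2\log(2d)}$ factor appearing in the target bound.

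For the inductive step I would note that each $\calF_k$ has precisely the structure required by Lemma~\ref{thm:2.15Wolf}: every element is of the form $\xx\mapsto\sigma(v+\sum_j\omega_j g_j(\xx))$ with $\lvert v\rvert\le a$, $\norm{\oomega}_1\le b$, and $g_j\in\calF_{k-1}$, where $\calF_{k-1}$ contains the zero function. The delicate point, which I expect to be the crux of the argument, is that Lemma~\ref{thm:2.15Wolf} only drops its factor of $2$ when the underlying class is symmetric, $\calF_{k-1}=-\calF_{k-1}$; retaining that factor at every layer would inflate the final bound by a spurious $2^\delta$ and destroy the match with the claimed estimate. I would therefore prove by induction that every $\calF_k$ is symmetric: $\calF_0$ is symmetric by construction, and the anti-symmetry $\sigma(-z)=-\sigma(z)$ combined with the freedom to negate both $\oomega$ and $v$ (which preserves $\norm{\oomega}_1\le b$ and $\lvert v\rvert\le a$) shows that $f\in\calF_k$ implies $-f\in\calF_k$. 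With symmetry in hand, Lemma~\ref{thm:2.15Wolf} gives the recursion $R_k\le L(\tfrac{a}{\sqrt{m}}+b R_{k-1})$.

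Finally I would unroll this linear recursion. For the $1$-Lipschitz activations relevant here ($L=1$; in general one simply tracks the resulting powers of $L$), iteration gives $R_\delta\le b^\delta R_0+\frac{a}{\sqrt{m}}\sum_{i=0}^{\delta-1}b^i$, and inserting the base-case bound on $R_0$ produces $\empRad(\calF)\le\frac{1}{\sqrt{m}}\big(C b^\delta\sqrt{2\log(2d)}+a\sum_{i=0}^{\delta-1}b^i\big)$, which is the claimed estimate. The remaining effort is only routine geometric-series bookkeeping; all the substance lies in the correct choice of $\calF_0$ for the base case and in the layer-wise symmetry that licenses the sharper, factor-of-$2$-free form of the progression lemma.
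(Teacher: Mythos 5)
The paper does not actually prove this lemma---it is imported verbatim as Corollary 2.11 of Ref.~\cite{Wolf.2020}, with the two ingredients you invoke (Lemma~\ref{thm:2.15Wolf} and Lemma~\ref{lemma:Massart}) quoted from that same source---and your argument is exactly the standard derivation behind the cited result: Massart's lemma on the $2d$ signed coordinate projections for the base case (adjoining the zero function costs nothing, since the symmetric set of signed projections already has nonnegative supremum, so $\sqrt{2\log(2d)}$ rather than $\sqrt{2\log(2d+1)}$ survives), one application of the progression lemma per layer, and the anti-symmetry of $\sigma$ (negate weights and bias; note also $\sigma(0)=0$ guarantees $0\in\calF_k$, as the progression lemma requires) keeping every intermediate class symmetric so the factor of $2$ is dropped at each level---which, as you correctly identify, is the crux without which a spurious $2^\delta$ appears. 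Your proof is correct as written, under the two readings you rightly make explicit: the bound as stated is the $L=1$ case (the recursion $R_k\leq L(a/\sqrt{m}+bR_{k-1})$ would otherwise produce powers of $L$ absent from the claimed estimate), and $\delta$ must count all layers of weights including the output neuron (whose identity activation is $1$-Lipschitz and anti-symmetric), since that convention, embodied in your choice $\calF=\calF_\delta$, is the only one under which the claimed formula with $b^\delta$ and $\sum_{i=0}^{\delta-1}b^i$ emerges.
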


\begin{lemma}[Massart's Lemma~\cite{Massart.2000}]\label{lemma:Massart}
Let $N\in\mathbb{N}$. Let $A\subset\mathbb{R}^N$ be a finite set contained in a Euclidean ball of radius $r>0$. Then
\begin{align}
    \mathbb{E}_\sigma \left[\sup\limits_{a\in A}\frac{1}{n}\sum\limits_{i=1}^N \sigma_i a_i\right]
    &\leq \frac{r\sqrt{2\log\lvert A\rvert}}{N},
\end{align}
where the expectation is with respect to i.i.d.~Rademacher random variables $\sigma_1,\ldots,\sigma_N$.
\end{lemma}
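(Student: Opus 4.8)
The plan is to prove this via the exponential (Chernoff--Cramér) method combined with Jensen's inequality and the sub-Gaussian behavior of Rademacher sums. Writing $X\coloneqq\sup_{a\in A}\sum_{i=1}^N\sigma_i a_i$, the goal is to bound $\mathbb{E}_\sigma[X]$, and then divide by $N$ at the end to recover the stated normalization. First I would introduce a free parameter $t>0$ and apply Jensen's inequality to the convex map $z\mapsto e^{tz}$, which gives $\exp(t\,\mathbb{E}_\sigma[X])\leq\mathbb{E}_\sigma[e^{tX}]$. The purpose of passing to the exponential is that the supremum can then be replaced by a sum at essentially no cost: since the exponential is monotone, $e^{tX}=\sup_{a\in A}e^{t\sum_i\sigma_i a_i}\leq\sum_{a\in A}e^{t\sum_i\sigma_i a_i}$, so that $\exp(t\,\mathbb{E}_\sigma[X])\leq\sum_{a\in A}\mathbb{E}_\sigma[e^{t\sum_i\sigma_i a_i}]$.

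Next I would use the independence of the Rademacher variables to factorize each moment generating function as $\mathbb{E}_\sigma[e^{t\sum_i\sigma_i a_i}]=\prod_{i=1}^N\mathbb{E}_{\sigma_i}[e^{t\sigma_i a_i}]$, where the single-coordinate factor is $\mathbb{E}_{\sigma_i}[e^{t\sigma_i a_i}]=\cosh(t a_i)$. The key analytic ingredient is the elementary sub-Gaussian estimate $\cosh(x)\leq e^{x^2/2}$, which I would verify by comparing Taylor coefficients (it reduces to $(2k)!\geq 2^k k!$). Applying this coordinatewise yields $\prod_{i}\mathbb{E}_{\sigma_i}[e^{t\sigma_i a_i}]\leq\exp\bigl(\tfrac{t^2}{2}\sum_i a_i^2\bigr)=\exp\bigl(\tfrac{t^2}{2}\norm{a}_2^2\bigr)$. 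Invoking the hypothesis that $A$ lies in a ball of radius $r$, so that $\norm{a}_2\leq r$ \emph{uniformly} over $a\in A$, together with the finite cardinality $\lvert A\rvert$, the whole estimate collapses to $\exp(t\,\mathbb{E}_\sigma[X])\leq\lvert A\rvert\,e^{t^2r^2/2}$.

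Finally, taking logarithms and dividing by $t$ gives $\mathbb{E}_\sigma[X]\leq\frac{\log\lvert A\rvert}{t}+\frac{t r^2}{2}$ for every $t>0$, and I would then optimize the right-hand side over $t$. The minimizer is $t=\sqrt{2\log\lvert A\rvert}/r$, at which the two terms coincide and sum to $r\sqrt{2\log\lvert A\rvert}$; dividing through by $N$ produces exactly the claimed bound. The argument is essentially self-contained and short; there is no genuine obstacle, but the one step deserving care is ensuring the radius hypothesis is used as a uniform bound on $\norm{a}_2$ over the entire set $A$ (rather than a coordinatewise bound), since it is precisely this uniformity that lets the cardinality $\lvert A\rvert$ factor out cleanly and yields the logarithmic dependence on $\lvert A\rvert$ in the final bound.
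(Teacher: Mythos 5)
Your proof is correct and is exactly the canonical argument: the paper itself states Massart's Lemma without proof, merely citing Ref.~\cite{Massart.2000}, and your route via Jensen's inequality, bounding the supremum of exponentials by their sum, factorizing the moment generating function by independence, the estimate $\cosh(x)\leq e^{x^2/2}$, and optimizing over $t$ is precisely the standard derivation found in the cited literature and in textbook treatments. The only point worth a remark is that the optimizer $t=\sqrt{2\log\lvert A\rvert}/r$ presupposes $\lvert A\rvert\geq 2$; for $\lvert A\rvert=1$ the claimed bound is $0$ and follows from your intermediate inequality $\mathbb{E}_\sigma[X]\leq \frac{\log\lvert A\rvert}{t}+\frac{tr^2}{2}$ by letting $t\to\infty$, so the argument is complete.
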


\begin{lemma}[Talagrand's Lemma (going back to~\cite{ledoux1991probability}; see also Lemma $5.7$ in Ref.~\cite{MohriRostamizadehTalwalkar18})]\label{lemma:talagrand}
Let $\ell_1,\ldots,\ell_m:\mathbb{R}\to\mathbb{R}$ be $L$-Lipschitz functions. Let $\mathcal{F}\subset\mathbb{R}^\mathcal{Z}$ be a class of real-valued functions on some data space $\mathcal{Z}$. Then, for any $\zz_1,\ldots,\zz_m\in\mathcal{Z}$, 
\begin{align}
    \frac{1}{m}\mathbb{E}_\sigma\left[\sup\limits_{f\in\mathcal{F}} \sum\limits_{i=1}^m \sigma_i \ell\circ f(\zz_i)\right]
    &\leq \frac{L}{m}\mathbb{E}_\sigma\left[\sup\limits_{f\in\Fclass} \sum\limits_{i=1}^m \sigma_i f(\zz_i)\right],
\end{align}
where the expectations are over i.i.d.~Rademacher random variables $\sigma_1,\ldots,\sigma_m$.
\end{lemma}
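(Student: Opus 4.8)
The plan is to establish this contraction inequality (the Ledoux--Talagrand contraction lemma, read with the $i$-dependent functions $\ell_i\circ f$ on both sides) by a coordinate-wise ``peeling'' argument that reduces everything to a single one-dimensional contraction step. First I would rescale: dividing each $\ell_i$ by $L$ makes the functions $1$-Lipschitz, and the factor $L$ re-enters the final bound by positive homogeneity of the right-hand side, so it suffices to treat $L=1$. I would then argue by induction on the number of indices whose function has not yet been replaced by the identity map. Concretely, writing $\psi_1,\ldots,\psi_m$ for functions on the reals of which at least $\psi_k$ is $1$-Lipschitz, I would show that $\psi_k$ may be swapped for the identity without decreasing the expected supremum, namely
\[
\mathbb{E}_\sigma \sup_{f \in \mathcal{F}} \sum_{i=1}^m \sigma_i \psi_i(f(\zz_i)) \leq \mathbb{E}_\sigma \sup_{f \in \mathcal{F}} \Big[ \sigma_k f(\zz_k) + \sum_{i \neq k} \sigma_i \psi_i(f(\zz_i)) \Big].
\]
Iterating this identity-substitution over all $m$ coordinates turns every $\ell_i$ into the identity and yields exactly the claimed bound.

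The heart of the argument is the single-coordinate step, which I would carry out by conditioning on all Rademacher variables except $\sigma_k$. Setting $u(f) := \sum_{i\neq k}\sigma_i\psi_i(f(\zz_i))$ for the remaining (now frozen) sum and averaging over $\sigma_k\in\{-1,+1\}$, the inner expectation becomes $\tfrac12\big[\sup_f(u(f)+\psi_k(f(\zz_k))) + \sup_g(u(g)-\psi_k(g(\zz_k)))\big]$. For $\varepsilon>0$ I would pick near-maximizers $f,g\in\mathcal{F}$ of the two suprema and use $1$-Lipschitzness to bound $\psi_k(f(\zz_k)) - \psi_k(g(\zz_k)) \leq \lvert f(\zz_k)-g(\zz_k)\rvert$. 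The key observation is that this absolute value equals $s\,(f(\zz_k)-g(\zz_k))$ for the appropriate sign $s\in\{-1,+1\}$, so the expression is dominated by $\tfrac12[u(f)+sf(\zz_k)] + \tfrac12[u(g)-sg(\zz_k)]$, which is at most $\tfrac12\big[\sup_f(u(f)+sf(\zz_k)) + \sup_g(u(g)-sg(\zz_k))\big]$; by the symmetry of $\sigma_k$ this last quantity is precisely the conditional expectation with $\psi_k$ replaced by the identity. Letting $\varepsilon\to 0$ and then taking the outer expectation over the remaining signs completes the step.

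The main obstacle I anticipate is the sign bookkeeping in this one-coordinate step: one must check that the Lipschitz bound combines correctly with the Rademacher symmetry so that the same substitution bound holds irrespective of which near-maximizer attains the larger value at $\zz_k$. This is exactly where the symmetry of $\sigma_k$ (equal probability on $\pm 1$) is essential, since it supplies both the $+\psi_k$ and $-\psi_k$ terms needed to absorb the absolute value under either sign ordering. Once this step is secured, the induction is routine and requires no structural assumptions on $\mathcal{F}$ beyond its being a family of real-valued functions; in particular the lemma applies verbatim to $\mathcal{F}=\Fclasstrignew$ as invoked in the proofs of Theorems~\ref{theorem:rademacher-gen-bound-trig-polys} and~\ref{theorem:covering-gen-bound-trig-polys}.
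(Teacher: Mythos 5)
The paper does not prove this lemma itself---it is quoted as a known auxiliary result with a pointer to Ledoux--Talagrand and to Lemma 5.7 of Ref.~\cite{MohriRostamizadehTalwalkar18}---and your argument is exactly the standard proof from that cited reference: rescale to $L=1$, replace the $\ell_i$ by the identity one coordinate at a time by conditioning on $(\sigma_i)_{i\neq k}$, pick $\varepsilon$-near-maximizers, and absorb $\lvert f(\zz_k)-g(\zz_k)\rvert$ using the $\pm$ symmetry of $\sigma_k$. Your proof is correct, including the one step that genuinely requires care (the auxiliary sign $s$ drops out because $\sup_h[u(h)+s\,h(\zz_k)]+\sup_h[u(h)-s\,h(\zz_k)]$ is invariant under $s\mapsto -s$), and your formulation with $i$-dependent functions $\ell_i$ is in fact the statement the paper intends, despite its display writing a single $\ell$.
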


\begin{theorem}[Dudley's Theorem (\cite{dudley_1999}; see also Theorem $8.1.2$ in Ref.~\cite{Vershynin.2018} or Theorem $1.19$ in Ref.~\cite{Wolf.2020})]\label{theorem:dudley}
For a fixed vector $z\in\mathcal{Z}^m$ let $\mathcal{G}$ be a subset of the pseudo-metric space $(\mathbb{R}^\mathcal{Z}, \norm{\cdot}_{2,z})$ and let $\gamma_0\coloneqq\sup_{g\in\mathcal{G}}\norm{g}_{2,z}$. Then the empirical Rademacher complexity $\hat{\mathcal{R}}_z(\mathcal{G})$ of $\mathcal{G}$ with respect to~$z$ can be upper-bounded as
\begin{align}
    \hat{\mathcal{R}}_z(\mathcal{G})
    &\leq \inf\limits_{\varepsilon\in[0,\frac{\gamma_0}{2})}\left\{4\varepsilon + \frac{12}{\sqrt{m}}\int\limits_{\varepsilon}^{\gamma_0}\sqrt{\log \mathcal{N}(\mathcal{G}, \norm{\cdot}_{2,z}, \beta)}~\mathrm{d}\beta\right\}.
\end{align}
\end{theorem}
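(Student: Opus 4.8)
The plan is to prove Theorem~\ref{theorem:dudley} by the \emph{chaining} method, reducing the continuous supremum defining the empirical Rademacher complexity (Definition~\ref{def:rademacher-complexities}) to a telescoping sum of finite maximal inequalities, each controlled by Massart's Lemma (Lemma~\ref{lemma:Massart}). First I would identify each $g\in\mathcal{G}$ with its evaluation vector $v_g\coloneqq(g(z_1),\ldots,g(z_m))\in\mathbb{R}^m$, under which the pseudo-norm reads $\norm{g}_{2,z}=m^{-1/2}\lVert v_g\rVert_2$, and rewrite
\begin{equation}
\hat{\mathcal{R}}_z(\mathcal{G})=\frac{1}{m}\,\mathbb{E}_\sigma\Big[\sup_{g\in\mathcal{G}}\langle\sigma,v_g\rangle\Big].
\end{equation}
Since $\mathbb{E}_\sigma[\sigma]=0$, this quantity is invariant under subtracting a fixed reference element, so I may center the problem at any fixed $g_\ast\in\mathcal{G}$. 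I would then set the dyadic scales $\varepsilon_k\coloneqq 2^{-k}\gamma_0$ and, at each level $k\geq 0$, choose a minimal $\varepsilon_k$-covering net $N_k$ of $\mathcal{G}$ with respect to $\norm{\cdot}_{2,z}$, so that $\lvert N_k\rvert=\mathcal{N}(\mathcal{G},\norm{\cdot}_{2,z},\varepsilon_k)$ (Definition~\ref{def:covering-numbers}); for each $g$ write $\pi_k(g)\in N_k$ for a nearest net point.

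The core step is the telescoping identity $v_g=v_{\pi_{k_0}(g)}+\sum_{k>k_0}\big(v_{\pi_k(g)}-v_{\pi_{k-1}(g)}\big)+\big(v_g-v_{\pi_K(g)}\big)$, with the chain truncated at a level $K$ chosen so that $\varepsilon_K\approx\varepsilon$. I would bound the three groups separately: the coarse term $v_{\pi_{k_0}(g)}$ has expectation zero after centering; the residual $v_g-v_{\pi_K(g)}$ is controlled deterministically by Cauchy--Schwarz, since $\tfrac{1}{m}\langle\sigma,v_g-v_{\pi_K(g)}\rangle\leq\norm{g-\pi_K(g)}_{2,z}\leq\varepsilon_K$, contributing the additive offset; and each \emph{link} $v_{\pi_k(g)}-v_{\pi_{k-1}(g)}$ ranges, as $g$ varies over $\mathcal{G}$, over a finite set of at most $\lvert N_k\rvert\lvert N_{k-1}\rvert\leq\lvert N_k\rvert^2$ vectors, each of Euclidean norm at most $\sqrt{m}(\varepsilon_k+\varepsilon_{k-1})\leq 3\sqrt{m}\,\varepsilon_k$ by the triangle inequality. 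Applying Massart's Lemma to the $k$-th link set then yields a contribution of order $\varepsilon_k\sqrt{\log\lvert N_k\rvert}/\sqrt{m}$.

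Finally I would convert the resulting dyadic sum into the entropy integral. Using $\varepsilon_k=2(\varepsilon_k-\varepsilon_{k+1})$ together with the monotonicity of $\beta\mapsto\sqrt{\log\mathcal{N}(\mathcal{G},\norm{\cdot}_{2,z},\beta)}$, each summand is bounded by $2\int_{\varepsilon_{k+1}}^{\varepsilon_k}\sqrt{\log\mathcal{N}(\mathcal{G},\norm{\cdot}_{2,z},\beta)}\,\mathrm{d}\beta$, so summing over the retained levels telescopes into $\tfrac{12}{\sqrt{m}}\int_{\varepsilon}^{\gamma_0}\sqrt{\log\mathcal{N}(\mathcal{G},\norm{\cdot}_{2,z},\beta)}\,\mathrm{d}\beta$, with the residual and scale-matching losses collected into the $4\varepsilon$ term. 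Taking the infimum over the truncation scale $\varepsilon\in[0,\gamma_0/2)$ gives exactly the claimed bound.

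I expect the main obstacle to be precise bookkeeping rather than conceptual difficulty: recovering the exact constants $4$ and $12$ requires careful tracking of the triangle-inequality factors at each link, the radius-versus-cardinality trade-off in Massart's Lemma (the $\lvert N_k\rvert^2$ against the factor $3$ in the radius), and a clean treatment of the residual term together with the passage $K\to\infty$, including justifying the interchange of supremum, limit, and expectation via monotone convergence. One must also dispatch the degenerate case $\gamma_0=0$ separately and note that the integral is finite for $\varepsilon>0$ since the integrand is bounded on $[\varepsilon,\gamma_0]$.
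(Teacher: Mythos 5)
The paper itself does not prove Theorem~\ref{theorem:dudley}: it is imported verbatim as a known auxiliary result, with the proof deferred to the cited references (Dudley; Theorem $8.1.2$ in Vershynin; Theorem $1.19$ in Wolf). Your proposal reconstructs exactly the standard chaining proof found in those sources -- dyadic nets $\varepsilon_k = 2^{-k}\gamma_0$, telescoping links controlled by Massart's Lemma (Lemma~\ref{lemma:Massart}), Cauchy--Schwarz for the truncation residual, and the dyadic-to-integral conversion via $\varepsilon_k = 2(\varepsilon_k - \varepsilon_{k+1})$ and monotonicity of the metric entropy -- and the constants do emerge as you predict: the residual gives $\varepsilon_K < 4\varepsilon$ once $K$ is the deepest level with $\varepsilon_{K+1}\geq\varepsilon$, while the $12$ assembles from the link radius $\varepsilon_k + \varepsilon_{k-1} = 3\varepsilon_k$, the squared net cardinality entering Massart as $\sqrt{2\log(\lvert N_k\rvert^2)} = 2\sqrt{\log\lvert N_k\rvert}$, and the factor $2$ from the dyadic-to-integral step. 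The one imprecision is your treatment of the coarse term: after centering at $g_*$, the quantity $\mathbb{E}_\sigma\bigl[\sup_g \tfrac{1}{m}\langle\sigma, v_{\pi_{k_0}(g)} - v_{g_*}\rangle\bigr]$ is \emph{not} zero whenever the net $N_{k_0}$ contains more than one point (only the fixed term $\langle\sigma, v_{g_*}\rangle$ has zero expectation). The standard fix is to start the chain at the single point $g_*$ itself, which by the triangle inequality is an interior $2\gamma_0$-net of $\mathcal{G}$ in the sense of Definition~\ref{def:covering-numbers}, so the coarse term vanishes identically rather than in expectation; the first Massart level then sits at scale $\gamma_0$, keeping the upper limit of the entropy integral intact. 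With that adjustment, together with the degenerate case $\gamma_0 = 0$ and the $K\to\infty$ limit you already flag, your outline fills in to a complete and correct proof consistent with the references the paper cites.
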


\newpage
\printbibliography
\end{document}